\documentclass[a4paper,USenglish,cleveref,thm-restate]{lipics-v2021}

\title{Subgame Perfection in Graph Games with \texorpdfstring{$\omega$}{ω}-Recognizable Preference Relations}
\titlerunning{SPEs in Graph Games with $\omega$-Recognizable Preference Relations}

\author{Véronique Bruyère}{Université de Mons (UMONS), Belgium \and \url{https://informatique-umons.be/bruyere-veronique/}}{veronique.bruyere@umons.ac.be}{https://orcid.org/0000-0002-9680-9140}{}

\author{Christophe Grandmont}{Université de Mons (UMONS), Belgium \and Université libre de Bruxelles (ULB), Belgium \and \url{https://chrisgdt.github.io/}}{christophe.grandmont@umons.ac.be}{https://orcid.org/0009-0009-4573-0123}{}

\author{Noémie Meunier}{Université de Mons (UMONS)}{noemie.meunier@gmail.com}{}{}

\author{Jean-François Raskin}{Université libre de Bruxelles (ULB), Belgium \and \url{https://verif.ulb.ac.be/jfr/}}{jean-francois.raskin@ulb.be}{https://orcid.org/0000-0002-3673-1097}{Supported by Fondation ULB (\url{https://www.fondationulb.be/en/}) and the Thelam Fondation.}

\authorrunning{V.\ Bruyère, C.\ Grandmont, N.\ Meunier, and J.-F.\ Raskin}

\ccsdesc[500]{Theory of computation~Automata over infinite objects}
\ccsdesc[300]{Software and its engineering~Formal methods}
\ccsdesc[500]{Theory of computation~Solution concepts in game theory}
\ccsdesc[300]{Theory of computation~Exact and approximate computation of equilibria}

\keywords{Games played on graphs, subgame perfect equilibria, \texorpdfstring{$\omega$}{ω}-recognizable relations}

\nolinenumbers
\pdfoutput=1
\hideLIPIcs

\funding{This work has been supported by the Fonds de la Recherche Scientifique – FNRS under Grant n° T.0023.22 (PDR Rational).}

\Copyright{Véronique Bruyère, Christophe Grandmont, Noémie Meunier, and Jean-François Raskin}

\usepackage{multicol}
\usepackage{tikz}
\usepackage{graphics}
\usepackage{centernot}
\usepackage{pifont}
\crefname{enumi}{Condition}{Conditions}

\newtheorem*{problems*}{Problems}

\mathchardef\mhyphen="2D

\newcommand{\<}{\langle}
\renewcommand{\>}{\rangle}
\renewcommand{\|}{\upharpoonright}

\newcommand{\Reals}{\mathbb{R}}

\newcommand{\Q}{\mathbb{Q}}

\newcommand{\ssetminus}{\! \setminus \!}
\def\rest#1#2{#1_{\restriction#2}}

\newcommand{\val}{\mathsf{Val}}

\newcommand{\aut}[1]{\ensuremath{\mathcal{#1}}}

\newcommand{\lang}{\mathcal{L}}

\newcommand{\DBW}{\text{DBW}}
\newcommand{\DBWs}{\text{DBWs}}

\newcommand{\DPW}{\text{DPW}}

\newcommand{\DPWs}{\text{DPWs}}

\newcommand{\APW}{\text{APW}}

\newcommand{\ARW}{\text{ARW}}
\newcommand{\ARWs}{\text{ARWs}}

\newcommand{\DFA}{\text{DFA}}
\newcommand{\DFAs}{\text{DFAs}}

\newcommand{\R}{\ensuremath{\mathrel{\ltimes}}}
\newcommand{\notR}{\ensuremath{\mathrel{\centernot\ltimes}}}

\newcommand{\arena}{A}

\newcommand{\game}{\mathcal{G}}

\newcommand{\players}{\mathcal{P}}

\newcommand{\playercirc}{\text{\large$\circ$\normalsize}}
\newcommand{\playersquare}{\text{\scriptsize$\square$\normalsize}}

\newcommand{\plays}{\mathsf{Plays}}
\newcommand{\hist}{\mathsf{Hist}}

\newcommand{\infocc}{\mathsf{Inf}}
\newcommand{\finocc}{\mathsf{Fin}}

\newcommand{\bsigma}{\bar{\sigma}}

\newcommand{\outcome}[1]{\<#1 \>}
\newcommand{\outcomefrom}[2]{\outcome{#1}_{#2}}

\newcommand{\Rabin}[1]{\mathsf{Rabin}(#1)}

\newcommand{\prover}{\ensuremath{\mathbb{P}}}

\newcommand{\challenger}{\ensuremath{\mathbb{C}}}

\newcommand{\bigO}{\mathcal{O}}

\newcommand{\nl}{$\mathsf{NL}$}

\newcommand{\pspace}{$\mathsf{PSPACE}$}
\newcommand{\pspaceHard}{\pspace{}-hard}
\newcommand{\pspaceComplete}{\pspace{}-complete}

\newcommand{\exptime}{$\mathsf{EXPTIME}$}
\newcommand{\exptimeHard}{\exptime{}-hard}
\newcommand{\exptimeComplete}{\exptime{}-complete}

\usetikzlibrary{
  automata,
  arrows,
  positioning,
  shapes.geometric,
  calc
}
\tikzset{
edge with arrows/.style = {
    ->,
    >=stealth,
    shorten >=1pt,
},
directed/.style = {
    edge with arrows,
    node distance=2.3cm,
    on grid,
    semithick,
    double distance=1.5pt,
},
automaton/.style = {
    directed,
    auto,
    initial text={},
    pin distance = 1ex,
    every pin edge/.style = {
        draw=none
    },
    every state/.style={
      minimum size=1.5mm,
      inner sep=1pt,
    }
},
system/.style = {
    state,
    circle,
    minimum size=0mm,
    inner sep=5pt,
},
system2/.style={
    state,
    ellipse,
    minimum size=0mm,
    inner sep=2pt,
},
environment/.style = {
    state,
    rectangle,
    minimum size=0mm,
    inner sep=8pt,
},
environment2/.style = {
    state,
    diamond,
    minimum size=0mm,
    inner sep=4pt,
},
}

\begin{document}

\maketitle

\begin{abstract}
    This paper investigates the constrained existence problem for subgame perfect equilibria (SPEs) in multiplayer graph games. In the proposed framework, each player has a preference relation over the set of plays, assumed to be $\omega$-recognizable. Equivalently, he has a preference relation over a finite set of payoffs, and the set of plays with the same payoff is $\omega$-regular, for each payoff. This generic framework avoids the need to focus on specific payoff functions. We show that the constrained SPE existence problem is \exptimeComplete{}, as well as for Nash equilibria (NEs).
\end{abstract}

\newcommand{\pay}{\mathsf{pay}}
\newcommand{\outset}{P}
\newcommand{\outelement}{p}
\newcommand{\boutelement}{\bar{p}}
\newcommand{\doutelement}{d}
\newcommand{\G}{G}
\newcommand{\X}{RP}
\newcommand{\OK}{\ding{51}}
\newcommand{\KO}{\ding{55}}

\section{Introduction}
\label{section:intro}

\emph{Games on graphs} form an important mathematical framework for the verification and synthesis of reactive systems~\cite{BloemCJ18,Games-on-Graphs,lncs2500}. Classical zero-sum games model antagonism between two players, but analyzing multi-agent systems requires non-zero-sum games, where each agent acts rationally to pursue its own objective~\cite{BrenguierCHPRRS16,Bruyere17}. This situation is classically captured by solution concepts such as \emph{Nash equilibria} (NEs)~\cite{Nash50}. However, in sequential games, like in games on graphs, NEs are known to suffer from the issue of \emph{non-credible threats}. This limitation motivates the study of \emph{subgame perfect equilibrium} (SPE)~\cite{OsborneRubinstein1994}, a refinement of NE that requires the strategy profile to be an equilibrium in every subgame, thereby ensuring that players' threats and promises remain rational throughout the entire game, even after the deviation of some of the players.

In games played on graphs, objectives are usually classified into two main families~\cite{BloemCJ18}. First, with \emph{Boolean qualitative objectives} (e.g., parity objectives or, more broadly, $\omega$-regular objectives), infinite plays $x \in V^{\omega}$, where $V$ is the vertex set of the graph, are partitioned as winning or losing. Second, \emph{quantitative objectives} are defined through a payoff function (e.g., mean-payoff or discounted-sum) that assigns to each infinite play $v \in V^{\omega}$ a real-valued payoff.

To reduce the proliferation of algorithmic techniques specialized to particular objective classes, recent research has proposed \emph{unifying frameworks}, either by introducing relations that combine existing objectives~\cite{BouyerBMU15,Gutierrez-ijcai2017-nash-lexico} or by adopting automata-based preference relations~\cite{BruyereFiliotGrandmontRaskin26-arxiv,BruyereGrandmontRaskin25-mfcs}. The latter approach induces, for each player, a comparison between any two plays $x, y \in V^{\omega}$: an automaton reads the two sequences $x$ and $y$ synchronously and accepts the pair $(x, y)$ precisely when $y$ is preferred to $x$. Such $\omega$-automatic relations, as well as the strict subclass of $\omega$-recognizable relations are studied in detail in~\cite{Bergstrasser-Ganardi-2023,LodingSpinrath2019}.
In this paper, we continue the study of non-zero-sum multiplayer games, aiming for generic results in the setting of automata-based preference relations over the set of plays $V^{\omega}$. Unlike the previous works~\cite{BruyereFiliotGrandmontRaskin26-arxiv,BruyereGrandmontRaskin25-mfcs}, we consider the \emph{richer} notion of SPEs rather than NEs. As in those studies, however, we adopt a \emph{generic} setting in which players' objectives are given by preference relations that are $\omega$-recognizable. More precisely, for each player~$i$, we consider payoff functions $\pay_i: V^\omega \rightarrow P_i$ where $(i)$ $\outset_i$ is a finite set of payoffs, $(ii)$ for all $\outelement \in \outset_i$, the set of plays $X_\outelement$ with payoff $\outelement$ is $\omega$-regular, and $(iii)$ $\outset_i$ is equipped with a preference relation $\mathord{\R_i} \subseteq \outset_i \times \outset_i$. No condition is imposed on $\R_i$, such as being a total order. This setting is provably equivalent to one in which the induced preference relation over $V^{\omega}$ is $\omega$-recognizable. In practice, each set $X_\outelement$ is recognized independently by a deterministic parity automaton, and player preferences are modeled as an explicit relation over these finitely many sets. This framework naturally captures scenarios where players rank a finite set of distinct outcomes, accommodating multi-criteria trade-offs or generic dynamics, without requiring the preference structure to form an order.

\subparagraph*{Contributions}
We conduct a computational analysis of the \emph{constrained SPE existence problem} in this generic class of games. Our main result, \cref{theorem:main-result}, establishes the \exptimeComplete{}ness of this problem. To prove the \exptime{}-membership, we adapt a reduction to zero-sum two-player games that was previously investigated, in a restricted form, in Meunier's PhD thesis~\cite{meunier16}. The original construction assumed a strict total order on the values attained by the payoff functions and the use of deterministic Büchi automata, and no hardness argument was given. The framework addressed here necessitates a generalization of this construction and requires considering richer Emerson-Lei conditions in the associated zero-sum game. In this game, Prover plays out a candidate equilibrium move by move, while Challenger, acting on behalf of all the players of the original game, accepts Prover's move or forces a player deviation. Prover wins this game if and only if an SPE exists. For the \exptimeHard{}ness, we propose a reduction from the membership problem for linear-space alternating Turing machines, a canonical \exptimeComplete{} problem~\cite{ChandraKozenStockmeyer1981}. Given a machine $M$ and an input word $w$, we build a two-player game where Eve and Adam jointly simulate a run of $M$: on each round, Eve declares the current configuration, a player declares the next transition (Eve for existential states, Adam for universal states), and then both players get a chance to flag the other's previous declarations as invalid. An unflagged play reaching the accepting sink corresponds to an accepting run of $M$. Preferences are designed so that lying is never advantageous -- wrongly flagging a mistake is punished just as much as missing a real one -- so correct assessment of the alternating run is the only behavior compatible with equilibrium.
The constructed game has an SPE reaching the accepting outcome if and only if $M$ accepts $w$, giving \exptimeHard{}ness.
By adapting the proof given for SPEs, we also get that the constrained NE existence problem is \exptimeComplete{}. When the number of payoffs is fixed for each player, we keep \exptimeComplete{}ness for SPEs, while we get \pspaceComplete{}ness for NEs.

\subparagraph*{Related Work}
The formal analysis of non-zero-sum games on graphs has been the subject of sustained investigation, and an extensive body of work addresses NEs for both qualitative objectives~\cite{Bouyer-Brenguier-Markey-2010,BouyerBMU15,BrihayePS13,Gutierrez-MPRSW21,Gutierrez-ijcai2017-nash-lexico} and quantitative objectives~\cite{AlmagorKupfermanPerelli2018,UmmelsWojtczak2011-meanpayoff}. In parallel, the existence and computational complexity of SPEs have been studied across a range of frameworks, notably for $\omega$-regular objectives~\cite{BriceRB22-spe-parity,BrihayeBGRB19,Gradel-Ummels-08} as well as for quantitative objectives~\cite{BriceRB22-spe-meanpayoff,BrihayeBPG13,meunier16}.

More recent contributions have considered broader frameworks for these solution concepts~\cite{BouyerBMU15,KlimosLST12,LeRoux-Pauly-Equilibria,Rajasekaran-Bansal-Vardi-2023}. In particular, games equipped with $\omega$-automatic preference relations~\cite{BruyereFiliotGrandmontRaskin26-arxiv,BruyereGrandmontRaskin25-mfcs} provide an alternative viewpoint for the algorithmic analysis of zero-sum properties and NEs; however, the study of SPEs in such general settings has not yet been developed. The present work is most closely related to games with $\omega$-recognizable preference relations, for which NEs always exist when the relations are preorders~\cite{BruyereGrandmontRaskin25-mfcs}.
In that setting, an $\omega$-recognizable preorder induces a finite lattice of equivalence classes, interpretable as payoff values. Here, we instead adopt this graphical representation of payoffs as a starting point and allow general (unconstrained) preference relations, e.g., without imposing any underlying order structure.

The expressivity of automata to compare pairs of words is studied in~\cite{BansalChaudhuriVardi2022} for several classical objectives. Games with incomplete preferences over temporal objectives are also explored to synthesize NEs using preference automata in~\cite{DBLP:conf/aaai/Kulkarni0T25}. Note that the notion of preference between infinite plays can be viewed as a hyperproperty~\cite{DBLP:journals/jcs/ClarksonS10} relating pairs of traces, in the spirit of the temporal hyperproperty framework of~\cite{DBLP:journals/siglog/Finkbeiner23}. However, obtaining \exptime{}-optimal algorithms requires a direct, tailored treatment rather than an application of this general framework.

\subparagraph*{Structure of the Paper} \Cref{section:preliminaries} presents the required preliminaries and specifies the formal framework and problem studied in this work. \Cref{section:exptime-membership} provides a formal definition of the Prover--Challenger game, proves its correctness as an encoding of the constrained SPE existence problem, and describes an exponential-time decision procedure to solve it. \Cref{section:exptime-hardness} establishes \exptimeHard{}ness for the constrained SPE existence problem with a reduction from the membership problem for linear-space alternating Turing machines. The paper ends with a conclusion and future work.

\section{Preliminaries}
\label{section:preliminaries}

In this section, we recall the concepts of game on graph and of subgame perfect equilibrium, and state our main result.

\subparagraph*{Arenas and Games} An \emph{arena} is a tuple $\arena = (V,E,\players,(V_i)_{i\in\players})$ where $V$ is a finite set of vertices, $E \subseteq V \times V$ is a set of edges, $\players$ is a finite set of players, and $(V_i)_{i\in\players}$ is a partition of $V$, with $V_i$ the set of vertices owned by player~$i$. We assume, w.l.o.g., that each $v \in V$ has at least one successor, i.e., there exists $v' \in V$ such that $(v,v') \in E$. A \emph{play} $\pi\in V^\omega$ (resp.\ a \emph{history} $h\in V^*$) is an infinite (resp.\ finite) sequence of vertices $\pi_0\pi_1\dots$ such that $(\pi_k,\pi_{k+1})\in E$ for all $k$. The set of all plays is denoted $\plays$, and we write $\plays(v)$ for the set of plays starting with vertex $v$. Similarly, we use notation $\hist$ and $\hist(v)$ for the histories.

A \emph{payoff function} $\pay_i$ for player~$i$ is of the form $\pay_i: V^\omega \rightarrow \outset_i$ such that $\outset_i$ is a set equipped with a relation $\mathord{\R_i} \subseteq \outset_i \times \outset_i$, called \emph{preference relation} for player~$i$. We say that player~$i$ \emph{prefers} $\pi'$ to $\pi$ if $\pay_i(\pi) \R_i \pay_i(\pi')$. The set $V^\omega$ is partitioned into the subsets $\pay_i^{-1}(\outelement)$, $\outelement \in \outset_i$. A \emph{game} $\game = (\arena, (\pay_i)_{i\in\players})$ is composed of an arena $\arena$ and of payoff functions $\pay_i: V^\omega \rightarrow \outset_i$, $i \in \players$.

\begin{example} \label{ex:payoff}
    Among games that are classically studied, we have games using \emph{Boolean} payoff functions $\pay_i: V^\omega \rightarrow \outset_i$ such that $\outset_i = \{0,1\}$ and $0 \R_i 1$. The set $\pay_i^{-1}(1)$ is an $\omega$-regular set, which is the \emph{objective} of player~$i$~\cite{BloemCJ18}. More generally, each player can have an $m$-tuple of $\omega$-regular objectives and a payoff function $\pay_i: V^\omega \rightarrow \{0,1\}^m$ that indicates which objectives in the $m$-tuple are satisfied. The preference relation $\R_i$ on $\{0,1\}^m$ can be the lexicographic order, the counting order ($x \R_i y$ if and only if the number of $1$ in $x$ is less than the number of $1$ in $y$), see \cite[Section 2.5.2.]{BouyerBMU15} for additional examples. Other classical examples are \emph{quantitative} payoff functions $\pay_i: V^\omega \rightarrow \outset_i$ such that $\outset_i$ is the set $\Reals \cup \{\pm \infty\}$ equipped with $\R_i$ equal to the usual order on reals. These payoff functions are usually defined from a weight function $w_i: V \times V \rightarrow \Q$ (often restricted to $E$), and well-known examples are lim-inf, mean-payoff and discounted sum~\cite{BloemCJ18,lncs2500}.
\end{example}

\subparagraph*{Strategies, Nash and Subgame Perfect Equilibria} Let $\arena$ be an arena. A \emph{strategy} $\sigma_i:V^*V_i\rightarrow V$ for player~$i$ maps any history $hv \in V^*V_i$ to a successor of vertex $v$. It is \emph{positional} if $\sigma_i(hv) = \sigma_i(v)$ for all histories $h$. A play $\pi = \pi_0\pi_1 \dots$ is \emph{consistent} with a strategy $\sigma_i$ if $\pi_{k+1} = \sigma_i(\pi_0 \dots \pi_k)$ for all $k$ such that $\pi_k \in V_i$. Consistency is naturally extended to histories. A tuple of strategies $\bsigma = (\sigma_i)_{i\in\players}$ with $\sigma_i$ a strategy for each player~$i$ is called a \emph{strategy profile}. The play $\pi$ starting from an initial vertex $v$ and consistent with each $\sigma_i$ is denoted by $\outcomefrom{\bsigma}{v}$ and is called its \emph{outcome}.

Given a game $\game$ and a vertex $v \in V$, a \emph{Nash equilibrium} (NE) from $v$ is a strategy profile $\bsigma$ such that for all players $i$ and all their strategies $\tau_i$, we have $\pay_i(\outcomefrom{\bsigma}{v}) \not\R_i \pay_i(\outcomefrom{(\tau_i,\bsigma_{-i})}{v})$, where $\bsigma_{-i}$ denotes the strategy profile $(\sigma_j)_{j\in\players \setminus \{i\}}$ \cite{Nash50}. So, NEs are strategy profiles where no single player has an incentive to unilaterally deviate from his strategy. When there exists $\tau_i$ such that $\pay_i(\outcomefrom{\bsigma}{v}) \R_i \pay_i(\outcomefrom{(\tau_i,\bsigma_{-i})}{v})$, we say that $(\tau_i,\bsigma_{-i})$ is a \emph{profitable deviation} for player $i$.

We extend NEs to subgames as follows. Given a game $\game = (\arena, (\pay_i)_{i\in\players})$ and a history $h \in \hist$, we denote by $\rest{\game}{h} = (\arena,(\pay_{\|h,i})_{i \in \players})$ the \emph{subgame of $\game$ from $h$}, where for each player~$i$, we define $\pay_{\|h,i}$ by $\pay_{\|h,i}(\pi) = \pay_{i}(h \pi)$ for all $h\pi \in V^\omega$. Given a strategy $\sigma_i$ of player~$i$, we define the strategy $\sigma_{i\|h}$ in the subgame $\rest{\game}{h}$ from $\sigma_i$ by $\sigma_{i\|h}(h') = \sigma_i(hh')$ for all histories $hh' \in V^*V_i$. A \emph{subgame perfect equilibrium} (SPE) in $\game$ from an initial vertex $v_0$ is defined as a strategy profile $\bsigma$ such that for all histories $hv \in \hist(v_0)$, $\bsigma_{\|h}$ is an NE from $v$ in the subgame $\rest{\game}{h}$ \cite{OsborneRubinstein1994}.

\begin{example}
We illustrate both concepts of NE and SPE with a simple game $\game$ from~\cite{Bruyere21}, depicted in \cref{fig:NE}. There are four plays starting from $v_0$ and the payoffs $(\outelement_\playercirc,\outelement_\playersquare)$ of players~$\playercirc$ and~$\playersquare$ are indicated below each play. Their preference relations are the usual order on integers. We consider the strategy profile $\bsigma$ composed of the positional strategies: $v_1 \rightarrow v_4$, $v_2 \rightarrow v_5$ for player~$\playercirc$, and $v_0 \rightarrow v_2$ for player~$\playersquare$. It is an NE from $v_0$ with outcome $v_0v_2(v_5)^\omega$. For instance, if player $\playersquare$ uses the deviating strategy $v_0 \rightarrow v_1$, the resulting outcome is $v_0v_1(v_4)^\omega$ with lower payoff~$0$. However, $\bsigma$ is not an SPE: in the subgame with initial vertex~$v_1$, the outcome of $\bsigma$ is $v_1(v_4)^\omega$ and player~$\playercirc$ gets a higher payoff by deviating with $v_1 \rightarrow v_3$. However, the strategy profile $\bsigma'$ with $v_1 \rightarrow v_3$, $v_2 \rightarrow v_5$ for player~$\playercirc$ and $v_0 \rightarrow v_1$ for player~$\playersquare$ is both an NE and an SPE from $v_0$.
\end{example}

\subparagraph*{Studied Problems and Results} The \emph{constrained SPE (resp.\ NE) existence problem} asks, given a vertex $v$ and $\outelement_i \in \outset_i$ for each player $i \in \players$, whether there exists an SPE (resp.\ NE) $\bsigma$ from $v$ such that for all $i \in \players$, $\outelement_i = \pay_i(\outcomefrom{\bsigma}{v}) ~~\text{or}~~ \outelement_i \R_i \pay_i(\outcomefrom{\bsigma}{v})$. In this paper, we study this problem for games $\game = (\arena,(\pay_i)_{i\in\players})$ satisfying the following two \emph{conditions} for each player~$i$: $(i)$ the set $\outset_i$ is finite; $(ii)$ for each $\outelement \in \outset_i$, the set $\pay_i^{-1}(\outelement)$ is $\omega$-regular. No condition is imposed on $\R_i$. The relation $\R_i$ can be seen as a directed graph $\G_i$ whose set of vertices is $\outset_i$ and $(\outelement,\outelement')$ is an edge if and only if $\outelement \R_i \outelement'$. The games $\game$ satisfying both conditions are called \emph{regular-payoff} (\X{} games).

\begin{example}
    We come back to games with (tuples of) $\omega$-regular objectives as given in \Cref{ex:payoff}. In case of a single objective, those games satisfy the two conditions, and $\R_i$ is described by a graph $\G_i$ with the two vertices $0,1$ and the sole edge $(0,1)$. Games with lim-inf objectives also satisfy the conditions, as they use a set of payoffs in $\{0,1,\dots,p\}$, $p \in \mathbb{N}$, equipped with the usual order on integers~\cite{BansalChaudhuriVardi2022}. Another example of \X{} games are the games with $m$-tuples of ordered $\omega$-regular objectives in \cite[Section 2.5.2.]{BouyerBMU15}: several kinds of preference relations $\R_i$ are studied, that are all supposed to be a strict partial order (see Figure 6 in \cite{BouyerBMU15} for examples of graph $\G_i$).
\end{example}

In games with $\omega$-automatic preference relations introduced in \cite{BruyereGrandmontRaskin25-mfcs}, no payoff function $\pay_i$ is given, but rather a preference relation $R_i$ that directly compares plays: $x R_i y$ indicates that $y$ is preferred to $x$ by player~$i$. Each $R_i$ is supposed to be \emph{$\omega$-automatic}, that is, given by an automaton that reads pairs of vertices by advancing synchronously on the two plays $x,y$~\cite{Bergstrasser-Ganardi-2023,LodingSpinrath2019}. A strict subclass of $\omega$-automatic relations is the class of \emph{$\omega$-recognizable} relations, where $\mathord{R_i} = \bigcup_{i=1}^{k} X_i \times Y_i$, with $X_i, Y_i \subseteq V^\omega$ $\omega$-regular sets. \X{} games are exactly games with $\omega$-recognizable preferences relations. Indeed, given an $\omega$-automatic relation $R_i$, consider the equivalence relation $\equiv_i$ defined by: $x \equiv_i y$ if $\forall z \in V^\omega$, $x R_i z \Leftrightarrow y R_i z$ and $z R_i x \Leftrightarrow z R_i y$. Then, $R_i$ is $\omega$-recognizable if and only if $\equiv_i$ has finite index \cite{LodingSpinrath2019}. Each equivalence class of $\equiv_i$ is $\omega$-regular and corresponds to a payoff.

It is proved in \cite[Section 7]{BruyereGrandmontRaskin25-mfcs} that any \X{} game such that each $\R_i$ is a strict partial order\footnote{The equivalent hypothesis in \cite{BruyereGrandmontRaskin25-mfcs} is that the game has $\omega$-recognizable relations $R_i$ that are preorders. In this case, $x \equiv_i y$ if and only if $x R_i y$ and $y R_i x$.},
always has an NE. This existence result has its limitations: when no condition is imposed on the relations $\R_i$, there are simple \X{} games with no NE. Take the one-player game $\game = (\arena, \pay)$ whose arena is depicted in \cref{fig:game-no-ne}, $\pay: V^\omega \rightarrow \{a,b\}$ is such that $\pay^{-1}(x) = \{d x^\omega\}$, and the preference relation is the \emph{cyclic} relation $\R = \{(a,b), (b,a)\}$. There is no NE in $\game$ from $d$, as any play $d x^\omega$ has a preferable deviation $d y^\omega$ by choosing $y$ such that $x \R y$.

\begin{figure}[t]
    \centering
    \begin{minipage}[c]{0.32\textwidth}
        \centering
        \begin{tikzpicture}[->,>=stealth, shorten >=1pt, auto,
            state/.style={draw, minimum size=0.6cm, inner sep=0pt},scale=.7]
            \node[state, rectangle] (v0) at (0,2.4){$v_0$};
            \node[state, circle] (v1) at (-1,1.2){$v_1$};
            \node[state, circle] (v2) at (1,1.2){$v_2$};
            \node[state, circle] (v3) at (-1.8,0){$v_3$};
            \node[state, circle] (v4) at (-0.5,0){$v_4$};
            \node[state, circle] (v5) at (0.5,0){$v_5$};
            \node[state, circle] (v6) at (1.8,0){$v_6$};

            \draw[-stealth] (v0) edge (v1);
            \draw[-stealth] (v0) edge (v2);
            \draw[-stealth] (v1) edge (v3);
            \draw[-stealth] (v1) edge (v4);
            \draw[-stealth] (v2) edge (v5);
            \draw[-stealth] (v2) edge (v6);

            \draw[-stealth] (v3) edge [loop below] node {\footnotesize$(2,2)$} (v3);
            \draw[-stealth] (v4) edge [loop below] node {\footnotesize$(0,0)$} (v4);
            \draw[-stealth] (v5) edge [loop below] node {\footnotesize$(1,1)$} (v5);
            \draw[-stealth] (v6) edge [loop below] node {\footnotesize$(1,1)$} (v6);
        \end{tikzpicture}

        \captionof{figure}{A simple game with two NEs and one SPE \cite{Bruyere21}.}
        \label{fig:NE}
    \end{minipage}
    \hfill
    \begin{minipage}[c]{0.29\textwidth}
        \centering
        \begin{tikzpicture}[
            automaton,
            system/.append style={minimum size=0.6cm, inner sep=0pt},
            environment/.append style={minimum size=0.6cm, inner sep=0pt},scale=.8]
            \node[system] (v0) {$d$};
            \node[system,left=1.1cm of v0] (a) {$a$};
            \node[system,right=1.1cm of v0] (b) {$b$};

            \path[->]
                (a) edge[loop left] (a)
                (b) edge[loop right] (b)
                (v0) edge (a)
                (v0) edge (b);
        \end{tikzpicture}

        \captionof{figure}{A \X{} one-player game without NE from $d$, using a cyclic relation $\R = \{(a,b),(b,a)\}$.}
        \label{fig:game-no-ne}
    \end{minipage}
    \hfill
    \begin{minipage}[c]{0.36\textwidth}
        \centering
        \begin{tikzpicture}[
            automaton,
            system/.append style={minimum size=0.6cm, inner sep=0pt},
            environment/.append style={minimum size=0.6cm, inner sep=0pt},scale=.8]
            \node[system] (c) {$c$};
            \node[system,right=1cm of c] (a) {$a$};
            \node[environment,right=1cm of a] (b) {$b$};
            \node[environment,right=1cm of b] (d) {$d$};

            \path[->]
                (c) edge[loop left] (c)
                (a) edge[bend left=20] (b)
                (b) edge[bend left=20] (a)
                (d) edge[loop right] (d)
                (a) edge (c)
                (b) edge (d);
        \end{tikzpicture}

        \captionof{figure}{A \X{} two-player game without SPE from vertex $a$, using two total orders $\R_\playercirc$ and $\R_\playersquare$~\cite{Solan-Vieille-2003}.}
        \label{fig:game-no-spe}
    \end{minipage}
\end{figure}

The situation becomes more complex when we move to SPEs: the NE existence result of~\cite[Section 7]{BruyereGrandmontRaskin25-mfcs} is no longer valid for SPEs. Indeed, a two-player game with no SPE is described in~\cite{Solan-Vieille-2003}, where both players have three payoffs totally ordered. Its arena is depicted in \cref{fig:game-no-spe} and the players use the same payoff function $\pay: V^\omega \rightarrow \{c,d,\ell\}$ such that $\pay^{-1}(c)$ (resp.\ $\pay^{-1}(d)$, $\pay^{-1}(\ell)$) collects all plays with suffix $c^\omega$ (resp.\ $d^\omega$, $(ab)^\omega$). Player~\playercirc{} uses the total order $\ell \R_\playercirc c \R_\playercirc d$, while Player~\playersquare{} controls square vertices and uses the total order $c \R_\playersquare d \R_\playersquare \ell$.

Our main result is the following. Its proof is detailed in Sections~\ref{section:exptime-membership} and~\ref{section:exptime-hardness} below.

\begin{theorem}[restate=speexptimecompletetwoplayers,name=]
\label{theorem:main-result}
  The constrained SPE existence problem is \exptimeComplete{} for \X{} games, already for two players.
\end{theorem}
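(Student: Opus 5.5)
The proof splits into membership in \exptime{} and \exptime{}-hardness (already with two players).

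\emph{Membership.} I would reduce the constrained SPE existence problem to a zero-sum two-player game between Prover and Challenger. For each player~$i$ and payoff $\outelement \in \outset_i$, fix a deterministic parity automaton $\aut{A}_\outelement$ for $\pay_i^{-1}(\outelement)$ and run all of them in parallel on the play.

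The key characterization I would establish first is a generalization of the ``requirement'' view of SPEs. A profile is an SPE iff, after every history, whenever a player~$i$ leaves the current outcome, the outcome of the new subgame gives him a payoff $\outelement'$ with no edge $\outelement \R_i \outelement'$ from the payoff $\outelement$ he was promised. Note that a deviating move by $i$ creates a new subgame whose outcome must itself be an equilibrium outcome.

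This suggests the game structure:
\begin{itemize}
\item Prover proposes moves of the candidate profile and, at each point, a promised payoff vector $(\outelement_i)_{i\in\players}$ for the current outcome.
\item At vertices of player~$i$, Challenger either accepts Prover's move or forces a deviation of $i$ to another successor. He thereby starts a new subgame in which Prover must commit to a new promised vector $\bar{p}'$ such that $\outelement_i \not\R_i \outelement'_i$.
\item Challenger may deviate infinitely often. Prover wins if either (a) deviations occur infinitely often, or (b) from the last deviation on, the play's actual payoffs equal the last promised vector.
\end{itemize}
Condition (b) only concerns the suffix, since the automata state at the deviation point determines payoffs. It is therefore a Boolean combination of parity conditions, i.e.\ an Emerson--Lei condition. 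The initial vector is constrained by the given $\outelement_i$.

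I would prove correctness in two directions. From a winning Prover strategy, read off an SPE: strategies after each history follow Prover's proposals, and the payoff check ensures no profitable deviation. Conversely, an SPE gives Prover a strategy promising the actual outcome payoffs.

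For the complexity: the arena has size polynomial in $|V|$ times exponential in the number of automata and payoffs (states of all automata times promised vectors). The Emerson--Lei condition has polynomially many colors, so such games can be solved in time polynomial in the arena and exponential in the number of colors, giving \exptime{} overall. I expect this construction and its correctness proof to be the main obstacle. The points needing care are (1) handling non-order relations without transitivity, (2) ensuring a single deviation suffices by the one-shot deviation principle, which must be justified for $\omega$-regular payoffs since payoffs are not continuous, and (3) infinitely many deviations. For (2), I would avoid relying on one-shot deviation: Challenger forces each deviating step, but multi-step deviations of $i$ are just sequences of forced steps. The promise after each step is checked against the previous promise, and I would argue that the constraint only needs to compare with the promise at the start of a maximal block of $i$'s deviations, tracked in the state.

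\emph{Hardness.} I would reduce from acceptance of linear-space alternating Turing machines. Eve and Adam simulate a run: Eve writes the configuration cell by cell, and the owner of the current state picks the transition. Each player may flag an error in the other's previous declarations, checkable with a local window of polynomial size via $\omega$-regular payoffs. Payoffs are designed so that a false flag or a missed error is punished, and so that an SPE with outcome ``accept'' exists iff $M$ accepts $w$. Subgame perfection forces honest behaviour in all subgames, so the accepting outcome corresponds to a winning strategy in the alternation. Only two players and polynomially many payoffs with polynomial automata are used.
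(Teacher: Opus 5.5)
\textbf{There is a genuine gap in the membership part.} The characterization you start from (a player leaving the current outcome must not reach a payoff preferred to the promised one, judged by the new subgame's outcome) is the one-shot deviation principle. That principle fails for $\omega$-regular payoffs, and your winning condition~(a) builds the failure into the game: a player who profits only by deviating infinitely often is never caught.

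Take one player owning $v$ and $u$, with edges $v\to v$, $v\to u$, $u\to v$. The payoff is $b$ for plays visiting $u$ infinitely often and $a$ otherwise, with $a \R b$. No NE from $v$ has payoff $a$, since the player can switch to $(vu)^\omega$. Yet Prover wins your game from the promise $a$. She always proposes the loop on $v$ and re-promises $a$ after each forced deviation, which is allowed since $a \notR a$. If Challenger stops deviating, the play ends in $v^\omega$ and (b) holds; otherwise (a) holds. To see that this breaks the decision procedure, give payoff $c$ to the plays starting in $u$ and add $c \R a$. Then the constraint $c$ forces payoff $a$, and your procedure accepts a negative instance.

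\textbf{How the paper's Prover game fixes this.} It differs from yours in two ways.
\begin{itemize}
\item Prover wins outright only if at least two distinct players deviate infinitely often; a unilateral deviation from a subgame never yields such a play. If exactly one player $j$ deviates infinitely often, Prover must still ensure that the actual payoff $p'$ of $j$ satisfies $d \notR_j p'$ for every $d$ in the deviation history $D$. This is an Emerson--Lei condition, because $D$ stabilizes and the payoffs are read from the \DPWs{}.
\item $D$ must collect \emph{all} promises made for $j$ during the current block of $j$'s deviations: $D'=D\cup\{p_j\}$, reset to $\{p_j\}$ when the deviator changes. Each new promise is checked against all of $D'$. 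The reason is that every intermediate promise is the equilibrium payoff of a subgame rooted inside the block, and $\notR_j$ is not transitive.
\end{itemize}
So comparing only with the previous promise and/or the start of the block, as you suggest, does not suffice. Consider a one-player chain $r_0\to r_1\to r_2\to r_3$, where each $r_k$ with $k<3$ may exit to a sink of payoff $p^k$ and $r_3$ leads to a sink of payoff $p^3$. Let the relation be exactly $p^1 \R p^3$, $p^3 \R p^2$, $p^2 \R p^1$, with $p^0$ unrelated. The subgame at $r_1$ has no NE, hence there is no SPE. Yet Prover wins with promise $p^k$ at each $r_k$, since all these promises pass the previous/start checks. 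Tracking the full set $D\subseteq P_j$ costs only a factor $2^{|P_j|}$ in the state space, so the bound stays in \exptime{}.

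\textbf{Hardness.} Your outline is the paper's reduction. Make explicit that Adam flags with a cell index $k$, using one sink $\bot_A^k$ per cell. This is what keeps each payoff set recognizable by a polynomial-size deterministic automaton that follows a single cell.
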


By adapting the proof, we get the same complexity result for NEs (see Appendix~\ref{app:exptime-ne-constrained}). When the number of payoffs is fixed for each player, the \exptimeHard{}ness remains for SPEs, while we get \pspaceComplete{}ness for NEs. The proofs are detailed in Appendices \ref{app:spe-exptime-hard-fixed-payoffs} and \ref{app:ne-recognizable-pspace}.

\begin{theorem}[restate=speexptimecompletethreepayoffs,name=]
\label{theorem:main-second-result}
  The constrained SPE existence problem is \exptimeComplete{} for \X{} games, already for three payoffs per player.
\end{theorem}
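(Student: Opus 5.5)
The statement to prove is \cref{theorem:main-second-result}: \exptimeComplete{}ness persists when each player has only three payoffs. Membership is immediate from \cref{theorem:main-result}, since three payoffs per player is a special case of \X{} games. So all the work is \exptimeHard{}ness. The plan is to reuse the reduction from the membership problem for linear-space alternating Turing machines sketched in the introduction, and to compress the payoff sets of the two players, Eve and Adam, to three values each.

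\textbf{The construction.} Given $M$ and $w$, the arena (polynomial in $|M|+|w|$) lets Eve declare the tape configuration cell by cell. The player owning the current state declares the transition. Each player may then flag an inconsistency in the other's last declarations (a wrong cell update, or an illegal transition). A flag sends the play into a small verification gadget. In that gadget the flagger chooses which position to inspect, and the accused player must exhibit the corresponding local window. Whether the accusation was correct is then an $\omega$-regular property of the play, checkable by a deterministic parity automaton of polynomial size, because only a bounded window of positions around the inspected cell matters.

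\textbf{The payoffs.} For each player I use the three payoffs $\mathsf{lose} \R \mathsf{neutral} \R \mathsf{win}$, a total order:
\begin{itemize}
\item $\mathsf{win}$ means reaching the accepting sink unflagged, for Eve, or rejection and a correct flag of Eve, for Adam.
\item $\mathsf{lose}$ collects the plays where the player was caught lying or flagged wrongly.
\item $\mathsf{neutral}$ is everything else.
\end{itemize}
Each class is a Boolean combination of polynomially many $\omega$-regular conditions. Since the deterministic parity automata may grow only polynomially, I will realize each class by product constructions over local checks rather than by intersecting automata globally. The constraint asks for an SPE whose outcome gives Eve payoff $\mathsf{win}$.

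\textbf{Correctness, and the main obstacle.} The key step is showing that every SPE is ``honest in all subgames''.
\begin{itemize}
\item After any history, a wrong flag yields $\mathsf{lose}$ for the flagger, whereas not flagging yields at least $\mathsf{neutral}$ for them.
\item An unflagged true mistake of the opponent lets the player reach $\mathsf{win}$, or at least avoid $\mathsf{lose}$.
\end{itemize}
Subgame perfection, as opposed to mere Nash equilibrium, is exactly what forces the verification gadgets to be played rationally after deviations. With only three payoffs, this must be arranged so that both ``punished for lying'' and ``punished for missing a lie'' collapse into $\mathsf{lose}$ without creating cyclic incentives.

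Given honesty, the remaining choices of Eve and Adam simulate the alternating run. By induction on the run tree, bounded because the space is linear and loops are detected as rejecting, an SPE with outcome payoff $\mathsf{win}$ for Eve exists if and only if $M$ accepts $w$. The converse direction builds the SPE explicitly from an accepting strategy tree, together with honest flagging behaviour in every subgame.

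I expect the delicate point to be the case of simultaneous or mistaken declarations off the equilibrium path: there, ties inside $\mathsf{neutral}$ must not give a player a profitable deviation. I would first try giving flags strict priority, so the first flag ends the simulation, and checking every subgame case by case.
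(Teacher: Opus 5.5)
Your membership argument is fine, but the hardness plan has a genuine gap, and with two players it cannot be repaired.

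\textbf{The automata are not polynomial.} With only Eve and Adam and three payoffs each, the plays in which a flag is correct must lie inside a single payoff class, recognized by one polynomial-size \DPW{}. Your justification, that ``only a bounded window around the inspected cell matters'', fails. The inspected cell $k$ is chosen \emph{after} the configurations have been declared. A deterministic automaton reading the play left to right must therefore already remember what was declared at every cell before it learns $k$. Two prefixes whose declared tapes differ at some cell $k$ are separated by the continuation that flags and inspects $k$, so roughly $2^n$ states are needed. Letting the accused ``exhibit the window'' does not help, because the exhibited symbols must again be compared with the earlier declarations. Likewise, a product of $n$ local checks is exponential, not polynomial. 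This is exactly why the two-player reduction of \cref{section:exptime-hardness} uses $n$ separate sinks $\bot_A^k$, and hence $\Theta(n)$ payoffs for Adam (see the end of Appendix~\ref{app:exptime-hard-dbw-construction}).

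\textbf{No such reduction can exist unless \pspace{} $=$ \exptime{}.} When both the number $n$ of players and the number $\ell$ of payoffs per player are constants, the Prover game of \cref{section:exptime-membership} is small. Its bound $\bigO(|E|\cdot \ell^n\cdot n^2\cdot 2^{\ell}\cdot Q^{\ell n})$ on the number of states is then polynomial, and its Emerson-Lei formula has polynomial size. Emerson-Lei games are solvable in polynomial space, so two players with three payoffs each puts the problem in \pspace{}. The number of players must grow with the input.

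\textbf{What the paper does instead.} Eve and Adam only simulate the run and no longer referee. Each has three payoffs: $\overline{\plays(v_{\text{go}})} \cup C_{\text{go}} \cup C_\bot$ at the bottom, with $C_{\text{rej}}$ and $C_{\text{acc}}$ above it in opposite orders for the two of them. The referee phase becomes a chain $r_1^\alpha,\dots,r_n^\alpha$ owned by $n$ new players $R_1,\dots,R_n$, one per tape cell. Each referee may pass or send the play to a common sink $\bot$.

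Referee $R_k$ has three payoffs built from the $k$-valid plays $C_k$ and the $k$-invalid plays $C_k^{\text{\KO}}$. Both sets are recognized by polynomial \DBWs{} that track only the $k$-projection. The three levels are:
\begin{itemize}
\item at the bottom, $k$-valid plays ending in $\bot$ or visiting $v_{\text{go}}$ infinitely often, together with $k$-invalid plays not ending in $\bot$;
\item in the middle, $k$-invalid plays ending in $\bot$;
\item at the top, $k$-valid plays ending in $v_{\text{acc}}$ or $v_{\text{rej}}$.
\end{itemize}

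Your ``honesty in every subgame'' then becomes two lemmas. First, in any SPE, an invalid history that reaches a referee phase leads to an outcome in $C_\bot$, because the referee of a violated cell profits by flagging. Second, from a valid history ending in $v_{\text{go}}$ the outcome is valid and avoids $C_\bot$. This is proved by backward induction on the bounded number of visits to $v_{\text{go}}$ along valid histories, using that Eve and Adam rank $C_\bot$ lowest. With these lemmas, if $M$ rejects $w$, Adam deviating to a strategy of $\forall$ that wins at the root contradicts the existence of an SPE with outcome in $C_{\text{acc}}$.
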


\section{EXPTIME-Membership}
\label{section:exptime-membership}

In this section, we use the concept of Prover game introduced in~\cite{meunier16} to prove the \exptime{}-membership of the constrained SPE existence problem for \X{} games. The Prover game in this thesis has never been published and was tailored to relations $\R_i$ that are total strict orders, where each set $\pay_i^{-1}(\outelement)$ is accepted by some deterministic Büchi automaton. We extend its usage by allowing for arbitrary relations $\R_i$ and deterministic parity automata (\DPWs{}).

\begin{proposition}[restate=exptimemembershipprover,name=]
\label{prop:membership}
    The constrained SPE existence problem for \X{} games is in \exptime{}.
\end{proposition}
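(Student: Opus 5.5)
We reduce the constrained SPE existence problem to solving a zero-sum two-player game between Prover and Challenger, following the Prover game of~\cite{meunier16} extended to arbitrary relations $\R_i$ and parity automata. First we fix, for each player~$i$ and each payoff $\outelement \in \outset_i$, a \DPW{} $\aut{A}_{i,\outelement}$ recognizing $\pay_i^{-1}(\outelement)$. Taking the synchronized product $\aut{D}$ of all these automata gives a deterministic automaton whose state tells, for every $i$ and $\outelement$, the current state of $\aut{A}_{i,\outelement}$. Its size is exponential only through the product, while each component stays polynomial.

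\textbf{The Prover game.} A position records a current vertex $v$ of $\arena$, the current state of $\aut{D}$, and a \emph{promise}: a tuple $(\outelement_i)_{i\in\players}$ announcing the payoff the equilibrium outcome will give each player from now on. Together with it we keep a \emph{threshold} recording the payoffs that players who may still deviate must not be able to improve on. At a vertex $v \in V_i$, Prover proposes a successor $v'$ (the move of the candidate profile). Challenger either accepts it, or deviates to another successor $v''$ on behalf of player~$i$. In the latter case Prover must announce a new promise $(\outelement'_j)_j$ for the new subgame, subject to the requirement that player~$i$ does not strictly gain: $\outelement_i \not\R_i \outelement'_i$, where $\outelement_i$ is the payoff promised along the non-deviating branch.

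Because $\R_i$ is arbitrary, this non-improvement check has to compare the payoff of the deviation with the payoff of the play Prover would have produced. We handle this by letting Prover commit, at each step, to the payoffs of the continuation following his proposed move. Challenger may then choose to follow either branch, so each commitment is eventually tested. Only the current promise tuple is stored, which is polynomially sized for a fixed tuple and exponentially many in total.

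\textbf{Winning condition.} Consider a play of the Prover game. If Challenger deviates infinitely often, Prover wins, provided each deviation was checked against the threshold locally; intuitively, the subgame property is checked one history at a time. Otherwise, the play eventually stabilizes under a fixed promise $(\outelement_i)_i$. Prover then wins if the resulting play of $\arena$ actually has payoff $\outelement_i$ for each $i$, that is, if it is accepted by each $\aut{A}_{i,\outelement_i}$ of the last promise. This is a Boolean combination of parity conditions indexed by the promise component, which can be stated as an Emerson-Lei condition on the product arena. The initial promise is restricted to satisfy the constraint $\outelement_i = $ target or target $\R_i \outelement_i$.

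\textbf{Correctness and complexity.} We must prove that Prover wins from the initial position if and only if a constrained SPE exists.
\begin{itemize}
\item[($\Leftarrow$)] An SPE yields a Prover strategy: follow the profile and promise the actual payoffs in each subgame. SPE-ness guarantees that every deviation check passes.
\item[($\Rightarrow$)] A winning Prover strategy yields, for each history, promised outcomes. We assemble the profile by the standard construction: follow the current promised play, and on a deviation switch to Prover's response. Stabilization with correct payoffs shows every subgame outcome matches its promise, and the deviation checks give the NE property in every subgame.
\end{itemize}
The main obstacle is precisely this direction together with the design of the condition when infinitely many deviations occur. Subgame histories with infinitely many deviations must not spoil the argument; this works because each individual subgame outcome has only finitely many deviations by construction.

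Finally, the Prover game has exponentially many positions, since it is polynomial in $|V|$ and exponential in the number of automata and promise tuples. Its Emerson-Lei condition uses polynomially many colours, namely the parity priorities of all $\aut{A}_{i,\outelement}$ plus deviation flags. Emerson-Lei games, or equivalently Muller games after an LAR-type translation, are solvable in time polynomial in the arena and exponential in the number of colours. This gives an overall exponential-time procedure, proving \cref{prop:membership}.
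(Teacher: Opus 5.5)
Your overall plan (a Prover--Challenger Emerson-Lei game with announced payoff profiles, solved in exponential time) is the paper's. However, your winning condition makes the direction ``winning Prover strategy $\Rightarrow$ SPE'' fail. You let Prover win whenever Challenger deviates infinitely often, subject only to local checks. The NE condition in a subgame $\game_{\|h}$ quantifies over \emph{all} strategies of a player~$j$, including those that leave $\sigma_j$ at infinitely many histories. Such a deviation is a Prover-game play in which $j$ deviates infinitely often, and its actual payoff must still be compared with the payoff promised at $h$. Because $\omega$-regular payoffs are not continuous, the one-shot deviation principle fails. Your remark that each subgame outcome has finitely many deviations concerns the equilibrium outcomes; it is the deviating plays that must be controlled.

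Here is a concrete counterexample. Let player~1 own $v$ with edges $v\to v$ and $v\to w$, where $w$ is a sink. Let $\pay_1$ give $1$ to $v^\omega$ and $0$ to every other play, with $0 \R_1 1$. Let $\pay_2$ give $0$ to $v^\omega$ and $1$ otherwise, with $\R_2=\varnothing$, and take the constraint $(0,1)$. Every NE from $v$ has outcome $v^\omega$, since otherwise player~1 profits by looping forever; so no SPE meets the constraint. Yet in your game Prover promises $(0,1)$ and always proposes $v\to w$. If Challenger eventually accepts, the promise is met. Each deviation back to $v$ is answered with $(0,1)$ again, which passes your check since $0 \notR_1 0$. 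If Challenger deviates forever, your rule declares Prover the winner.

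A second problem arises already with finitely many deviations. Since $\R_j$ is arbitrary, $\notR_j$ is not transitive. So the chain $\outelement_j \notR_j \outelement'_j \notR_j \outelement''_j$ does not give $\outelement_j \notR_j \outelement''_j$, although two consecutive deviations of $j$ form a single unilateral deviation in the subgame where the first occurs. Your ``threshold'' is never defined, and this is exactly where it would be needed.

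The paper fixes both problems as follows.
\begin{enumerate}
\item Each state stores the last deviator and a deviation history $D$. It accumulates the $j$-components of all profiles announced during an uninterrupted run of deviations by the same player~$j$, and is reset to $\{\outelement_j\}$ when another player deviates.
\item Every new announcement must satisfy $d \notR_j \outelement'_j$ for all $d\in D'$.
\item Prover wins outright only when two distinct players deviate infinitely often; such a play is a unilateral deviation in no subgame.
\item When exactly one player~$j$ deviates infinitely often, Prover must also ensure, through player~$j$'s DPWs, that the play's actual payoff $\outelement'$ satisfies $d \notR_j \outelement'$ for every $d$ in the limit value of $D$.
\end{enumerate}
Since $D$ contains the promise made at the start of every subgame inside the run, this gives the NE property against infinitely-deviating strategies. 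The repaired condition is still an Emerson-Lei condition with polynomially many targets over an exponential arena, so your complexity argument then goes through as in the paper.
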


Let $\game = (\arena, (\pay_i)_{i \in \players})$ be an \X{} game. Our goal is to construct a zero-sum Emerson-Lei game whose winning strategies correspond to SPEs in $\game$. This game is called the \emph{Prover Game} $\game_{\prover,\challenger}$ and is played between two players: \emph{Prover} $\prover$ and \emph{Challenger} $\challenger$. Prover's objective is to demonstrate that an SPE exists by showing that a specific tuple of payoffs is eventually reached by the play. To do that, \prover{} \emph{announces} these payoffs and provides the current SPE outcome vertex by vertex; \challenger{} acts as the collective environment of $n$ players, scrutinizing Prover's claims. Whenever \prover{} suggests a move, \challenger{} can either \emph{accept} it or \emph{deviate}, i.e., making the current vertex's owner deviate. In case of a deviation, \prover{} announces new payoffs where the deviating player does not have a payoff that is preferred to each of the previous payoffs announced consecutively. Prover wins when the play eventually stabilizes (finitely many deviations) and matches the announced payoffs. When multiple players deviate infinitely often, \prover{} also wins. And if player $i$ is the sole player deviating infinitely often, \prover{} wins if \challenger{} fails to achieve a payoff for player $i$ preferred to the payoffs announced along deviations. As $\game$ is an \X{} game, for each payoff $\outelement$ of $\outset_i$, $i \in \players$, we have a \DPW{} accepting $\pay_i^{-1}(\outelement)$. We add states of those automata as observers inside the states of the Prover game.

In the sequel, we denote by $n$ the number of players, by $\ell_i$ the size of each set $\outset_i$, and  by $\aut{A}^i_\outelement$ a \DPW{} that accepts $\pay_i^{-1}(\outelement)$, for all $\outelement \in \outset_i$ and $i \in \players$. The parity objective of $\aut{A}^i_\outelement$ can be encoded as an Emerson-Lei objective\footnote{An Emerson-Lei objective is described by a Boolean formula $\phi$ over variables $\infocc(T)$ and $\finocc(T)$ for some targets $T \subseteq V$, where $\infocc(T)$ (resp. $\finocc(T)$) is evaluated to true along a play $\pi$ if and only if $T$ is visited infinitely (resp.\ finitely) often by $\pi$~\cite{Emerson-Lei-Streett-ptime-complete}.} with Boolean formula denoted~$\varphi_\outelement^i$.

\subsection{Prover Game Definition}
The Prover Game $\game_{\prover,\challenger}$ is defined as follows. We first describe its state space. Let $\boutelement = (\outelement_1,\dots,\outelement_n) \in \outset_1 \times \dots \times \outset_n$ denote an $n$-tuple of payoffs, representing the payoffs currently announced by Prover for all players. Given $i \in \players$, let $D \subseteq \outset_i$ be the \emph{deviation history} that collects the previous payoffs announced consecutively for $i$, and that is initially equal to $\varnothing$. Let $\bar{q}^i = (q^i_1,\dots,q^i_{\ell_i})$ denote the current tuple of states for player $i$ such that $q^i_j$ is a state of $\aut{A}^i_{\outelement_j}$, for all $\outelement_j \in \outset_i$.

\subparagraph*{Arena}
The state space $S$ of $\game_{\prover,\challenger}$ consists of three distinct types of states $s \in S = S_\prover \cup S_\challenger$, where $S_P \cup S_D = S_\prover$ are the states of $\prover$ and $S_\challenger$ are the states of $\challenger$. Each state has components $\boutelement$, $D$, and $\bar{q}^i$, $i \in \players$, as described above, as well as additional components that we describe now.

\begin{itemize}
    \item \textbf{Prover States ($S_P \subseteq S_\prover$):} The states of $S_P$ are of the form $s = \left(v, \boutelement, i, D, \bar{q}^1, \dots, \bar{q}^n \right)$ where $v \in V$ is the current vertex of $\arena$, $i$ is the last player who deviated.

    \item \textbf{Challenger States ($S_\challenger$):} The states of $S_\challenger$ are of the form $s = \left((v,v'), \boutelement, i, D, \bar{q}^1, \dots, \bar{q}^n \right)$ where $(v,v')$ is an edge of $\arena$ suggested by \prover, and \challenger{} will either accept this move or deviate.

    \item \textbf{Deviation States ($S_D \subseteq S_\prover$):} The states of $S_D$ are of the form $s = \left(v, \boutelement, j, D, i, \bar{q}^1, \dots, \bar{q}^n \right)$ where \challenger, acting on behalf of player $j$, has rejected \prover's suggestion to move to $v'$ and deviated to vertex $v \neq v'$. The state indicates the active deviator $j$, and recalls the previous deviator $i$.
\end{itemize}

\noindent There are three types of transitions in the Prover game $\game_{\prover,\challenger}$.
\begin{itemize}
    \item \textbf{Suggestion ($S_P \to S_\challenger$):} From a Prover state $(v, \boutelement, i, D, \bar{q}^1, \dots, \bar{q}^n)$, \prover{} suggests an edge $(v,v') \in E$. The Prover game moves to Challenger state $((v,v'), \boutelement, i, D, \bar{q}^1, \dots, \bar{q}^n)$.

    \item \textbf{Response ($S_\challenger \to S_P$ or $S_\challenger \to S_D$):} From a Challenger state $((v,v'), \boutelement, i, D, \bar{q}^1, \dots, \bar{q}^n)$, \challenger{} evaluates the suggestion:
    \begin{itemize}
        \item \emph{Accept:} The edge $(v,v')$ is accepted. The game moves to $(v', \boutelement, i, D, \bar{q}'^1, \dots, \bar{q}'^n) \in S_P$.
        \item \emph{Deviate:} Let $j$ be such that $v \in V_j$. Challenger deviates to a vertex $u$  
        such that $(v,u) \in E$. The game moves to the Deviation state $(u, \boutelement, j, D, i, \bar{q}'^1, \dots, \bar{q}'^n)$, memorizing the new deviator ($j$) and the previous one ($i$).
    \end{itemize}
    In both cases, all \DPWs{} update to $\bar{q}'^i = \bar\delta_i(\bar q^i,v)$, for all $i \in \{1,\ldots,n\}$, where $\bar\delta_i$ denotes the tuple of transition functions for the \DPWs{} $\aut{A}^i_{\outelement}$, $\outelement \in \outset_i$, used by player $i$.

    \item \textbf{Deviation Resolution ($S_D \to S_\challenger$):} From a Deviation state $(v, \boutelement, j,D,i, \bar{q}^1, \dots, \bar{q}^n)$, \prover{} suggests an edge $(v,v') \in E$ (as from a Prover state). This action resolves the deviation by announcing a new payoff profile $\boutelement'$ and updating the deviation history to $D'$. The game moves to the Challenger state $((v,v'), \boutelement', j, D', \bar{q}^1, \dots, \bar{q}^n)$, such that
    \begin{itemize}
        \item If $j = i$ (the same player deviated again), the deviation history accumulates: $D' = D \cup \{\outelement_j\}$, where $\outelement_j$ is the $j$-component of $\boutelement$.
        \item If $j \neq i$ (a new player deviated), the deviation history resets: $D' = \{\outelement_j\}$.
    \end{itemize}
 Moreover, for the newly announced $\boutelement' = (\outelement'_1,\dots,\outelement'_n)$, we require that $d \notR_j \outelement'_j$, $\forall d \in D'$.
\end{itemize}

\subparagraph*{Prover's Objective}
To formally express the Emerson-Lei objective for Prover in $\game_{\prover,\challenger}$, we first define specific subsets of states to be tracked by the $\infocc$ and $\finocc$ operators during a play. Let $i,j$ be two players and $\outelement$ be a payoff in $\outset_1 \cup \dots \cup \outset_n$:
\begin{itemize}
    \item $S_D$: The entire set of Deviation states.
    \item $S_D(j,i)$: The set of Deviation states where player $j$ is the last deviator and player $i$ is the previous one (i.e., states of the form $(\dots, j, D, i,\dots)$ for some deviation history $D$).
    \item $S_{\text{ann}}(\outelement)$: The set of all states where $\outelement$ is a component of an announced payoff profile $\boutelement$ (i.e., states of the form $(\dots, \boutelement, \dots)$ for some $\boutelement$ and $i$ such that $\outelement = \outelement_i$).
    \item $S_{\text{dev}}(\outelement)$: The set of all states where $\outelement$ belongs to a deviation history $D$ (i.e., states of the form $(\dots, D, \dots)$ for some $D \ni \outelement$).
\end{itemize}
Note that the target sets of formulas $\varphi^i_\outelement$ used for each $\aut{A}^i_\outelement$ are reused, naturally extended to the Prover game state space (by abuse of notation, we keep the initial notation $\varphi^i_\outelement$).

The Emerson-Lei objective for $\prover$ in the Prover game $\game_{\prover,\challenger}$ is described by the Boolean formula $\varphi_\prover$, defined as the disjunction of three scenarios:

\begin{enumerate}
    \item \label{spe:prover-cond1} \textbf{Stable Outcome:} There are finitely many deviations, and the play eventually satisfies the formulas of the announced payoff profile:
    \[
    \finocc(S_D) ~\wedge~ \bigwedge_{i \in \players} \bigwedge_{\outelement \in \outset_i} \left( \infocc(S_{\text{ann}}(\outelement)) \Longrightarrow \varphi_\outelement^i \right)
    \]

    \item \label{spe:prover-cond3} \textbf{Deviator Instability:} Two or more players deviate infinitely often:
    \[
    \bigvee_{i,j \in \players, i \neq j} \infocc(S_D(j,i))
    \]

    \item \label{spe:prover-cond2} \textbf{Infinite Deviation:} Exactly one player $j$ deviates infinitely often, and their deviations do not yield a payoff preferred to what is stored in the deviation history $D$:
    \[
    \bigvee_{j\in\players} \left( \infocc(S_D(j,j)) \wedge \bigwedge_{d \in \outset_j} \left( \infocc(S_{\text{dev}}(d)) \Longrightarrow \bigvee_{\outelement' \in \outset_j, d \notR_j \outelement'} \varphi_{\outelement'}^j \right) \right)
    \]
\end{enumerate}

\subsection{Correspondence with SPEs}

We now can show that SPEs in $\game$ correspond to winning strategies for Prover in $\game_{\prover,\challenger}$.

\begin{proposition}[restate=specorrespondance,name=]
\label{prop:SPEcorrespondence}
    Given a vertex $v_0 \in V$ and a payoff profile $\boutelement_0 \in \outset_1 \times \dots \times \outset_n$, there exists an SPE $\bsigma$ whose outcome $\outcomefrom{\bsigma}{v_0}$ has payoff profile $\boutelement_0$ if and only if Prover has a winning strategy in the Prover game $\game_{\prover,\challenger}$ with Emerson-Lei objective $\varphi_\prover$ from the state $s_0 = (v_0,\boutelement_0,i,\varnothing,\bar q_0^1,\dots,\bar q_0^n)$ such that $i$ is any player and each $\bar q_0^j$ is the tuple of initial states of $\aut{A}_\outelement^j$, $\outelement \in \outset_j$.
\end{proposition}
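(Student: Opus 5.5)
We prove the two directions separately. Both rest on one invariant: every history of $\game_{\prover,\challenger}$ from $s_0$ projects, by keeping the vertex components, onto a history $h$ of $\arena$ from $v_0$, and the \DPW{} components are then exactly the runs of the automata $\aut{A}^i_\outelement$ on $h$. Consequently, along a play of the Prover game, $\varphi^i_\outelement$ holds if and only if the projected play $\pi$ satisfies $\pay_i(\pi)=\outelement$.

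\emph{($\Rightarrow$)} Let $\bsigma$ be an SPE from $v_0$ with outcome payoff $\boutelement_0$. Prover plays as follows.
\begin{itemize}
\item In a Prover state with projected history $hv$, Prover suggests $\sigma_j(hv)$, where $v\in V_j$.
\item In a Deviation state reached by a deviation to $u$, with projected history $hu$, Prover announces $\boutelement'=\pay(h\,\outcomefrom{\bsigma_{\|h}}{u})$ and then suggests the $\bsigma$-move.
\end{itemize}
We first check that the announcement is legal, i.e., that $d\notR_j \outelement'_j$ for all $d\in D'$. Each $d\in D'$ was announced just before one of the consecutive deviations of $j$, as the payoff of the $\bsigma$-outcome from some earlier history $h_k$. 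The play after $h_k$ is consistent with $\bsigma_{-j}$ from $h_k$ onward. So there is a strategy $\tau_j$ of player $j$ whose outcome from $h_k$ is exactly $h\,\outcomefrom{\bsigma_{\|h}}{u}$. Since $\bsigma_{\|h_k}$ is an NE, $d\notR_j \outelement'_j$.

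We then check that every play consistent with this strategy satisfies $\varphi_\prover$.
\begin{itemize}
\item With finitely many deviations, the play eventually follows $\bsigma$ from the last deviation. Its payoff is then exactly the last announced profile, so \cref{spe:prover-cond1} holds.
\item If several players deviate infinitely often, \cref{spe:prover-cond3} holds trivially.
\item If only $j$ deviates from some point on, let $d\in D$ be seen infinitely often. Then $d$ was announced from some history $h_k$, and the whole remaining play is consistent with $\bsigma_{-j}$ from $h_k$. The NE property at $h_k$ gives $d\notR_j \pay_j(\pi)$, so \cref{spe:prover-cond2} holds.
\end{itemize}
The third case needs care: $D$ only accumulates inside a block of consecutive $j$-deviations. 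Eventually $D$ is constant, since it grows monotonically in a finite set and resets only when another player deviates. Each element of the limit $D$ was announced at a history inside the final block, after which only $j$ deviates, so consistency with $\bsigma_{-j}$ from that history does hold.

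\emph{($\Leftarrow$)} Fix a winning Prover strategy. By determinacy of Emerson-Lei games we may take it finite-memory, though this is not needed. We build $\bsigma$ by induction on histories of $\arena$. For each history $hv$ we define a canonical Prover-game history: follow the last point where $hv$ departs from the current suggestion, treat that point as a Challenger deviation, and accept otherwise. Formally, we simulate $hv$ in the Prover game, letting Challenger accept whenever the actual next vertex equals the suggestion and deviate otherwise. Set $\sigma_j(hv)$ to Prover's suggestion at the end of this simulated history. The outcome of $\bsigma_{\|h}$ from $v$ is then the projection of the play in which Challenger accepts forever after $hv$. By \cref{spe:prover-cond1}, its payoff equals the profile announced at the last deviation. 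In particular, from $s_0$ the outcome has payoff $\boutelement_0$.

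To show that $\bsigma_{\|h}$ is an NE at $hv$, take a deviation $\tau_j$ of player $j$ and let $\pi$ be its outcome, with $\outelement$ the payoff announced at $hv$. Simulate $\pi$ in the Prover game: only $j$ deviates from $hv$ on, each time the simulation leaves Prover's suggestion.
\begin{itemize}
\item If $j$ deviates finitely often, then $\pi$ has the last announced payoff $\outelement'_j$.
\item If $j$ deviates infinitely often, \cref{spe:prover-cond2} applies.
\end{itemize}
In both cases we need that $\outelement_j$ stays in the accumulated history $D$, so that the announcement rule and \cref{spe:prover-cond2} give $\outelement_j\notR_j \pay_j(\pi)$. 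The obstacle is that the history $D$ at $hv$ may be reset when the first deviation of $j$ happens, namely if the previous deviator $i$ differs from $j$. To handle it, we note that after a reset $D'=\{\outelement_j\}$ contains precisely the payoff announced before the deviation, and from then on only $j$ deviates, so $D$ only accumulates and keeps $\outelement_j$. When $\pi$ leaves the $\bsigma$-outcome, the announced payoff $\outelement_j$ is still the one at $hv$ if the deviation happens at a later vertex, since announcements change only at deviations. If $\tau_j$ agrees with $\bsigma$ forever, $\pi$ is the outcome itself and there is nothing to prove.

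I expect the main obstacle to be this bookkeeping of $D$ and of the announced profile along the canonical simulation in ($\Leftarrow$), including well-definedness when several players have deviated earlier in $h$. The rest is routine.
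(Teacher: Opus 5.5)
Your proposal is correct and follows the paper's proof essentially step for step. For ($\Rightarrow$), Prover simulates $\bsigma$ and announces the payoff profile of $h\outcomefrom{\bsigma_{\|h}}{u}$ after each deviation, and you check $\varphi_\prover$ with the same three cases; for ($\Leftarrow$), $\bsigma$ is read off the unique $\tau_\prover$-consistent simulation of each history, and the finitely/infinitely-many-deviations split rests on $\outelement_j \in D$.

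Your argument that announcements are legal (the NE property at the earlier history $h_k$, applied to the multi-step deviation of $j$ inside the current block) is more explicit than the paper's one-line justification. The only caveat, which the paper shares, is your ``nothing to prove'' case where $j$ never deviates: it silently needs $\outelement_j \notR_j \outelement_j$, and the definitions do not guarantee this when $\R_j$ is allowed to be reflexive.
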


\begin{proof}
    Suppose that there exists an SPE $\bsigma = (\sigma_i)_{i \in \players}$ in $\game$. We are going to construct a strategy $\tau_{\prover}$ for Prover that simulates $\bsigma$ and that will be proved to be winning in the Prover game. The strategy $\tau_{\prover}$ is defined as follows. Consider any history in $\game_{\prover,\challenger}$ starting from $s_0$ and ending in a state $s = (v,\dots) \in S_{\prover}$, and let $hv$ be its projection onto the set $V$ of $\game$ obtained by projecting onto their first component each of its states belonging to $S_\prover$. Then, Prover's strategy $\tau_{\prover}$ suggests the edge $(v, v') \in E$, as dictated by $\sigma_i(hv) = v'$ such that $v \in V_i$. Hence,
    \begin{itemize}
        \item if $s = (v, \boutelement, i, D, \bar{q}^1, \dots, \bar{q}^n) \in S_{P}$, then \prover{} moves to $s' = ((v,v'), \boutelement, i, D, \bar{q}^1, \dots, \bar{q}^n)$;
        \item if $s = (v, \boutelement, j, D, i, \bar{q}^1 \dots, \bar{q}^n) \in S_{D}$, then \prover{} moves to $s' = ((v,v'), \boutelement', j, D', \bar{q}^1, \dots, \bar{q}^n)$, where the newly announced $\boutelement' = (\outelement'_1,\dots,\outelement'_n)$ is the payoff profile of the outcome $h\outcomefrom{\bsigma_{\|h}}{v}$, that is, $\outelement'_i = \pay_i(h\outcomefrom{\bsigma_{\|h}}{v})$ for all $i \in \players$.
        Note that in this latter case, $\tau_\prover$ respects the transition constraint $d \notR_j \outelement'_j$ for all $d \in D'$ of the Prover game $\game_{\prover,\challenger}$. Indeed, recall that $D'$ accumulates the $j$-components of all announced payoff profiles since player $j$'s first deviation in the current uninterrupted sequence of deviations. By definition of $\tau_\prover$, these payoff profiles are constructed from $g\outcomefrom{\bsigma_{\|g}}{u}$, for some particular prefixes $gu$ of $h$. Because $\bsigma$ is an SPE, player $j$'s deviation to $v$ cannot yield an outcome $h\outcomefrom{\bsigma_{\|h}}{v}$ preferable (with respect to $\R_j$) to any such $g\outcomefrom{\bsigma_{\|g}}{u}$. Therefore, the move proposed by Prover respects the constraint $d \notR_j \outelement'_j$ for all $d \in D'$.
    \end{itemize}

    Let $\tau_{\challenger}$ be any Challenger strategy. We analyze the resulting play $\rho = \outcomefrom{(\tau_\prover,\tau_\challenger)}{s_0}$ and show that it satisfies $\varphi_\prover$, that is, one of Formulas \ref{spe:prover-cond1}--\ref{spe:prover-cond2} defined above:
    \begin{enumerate}
        \item \emph{Finitely many deviations:} Along $\rho$, \challenger{} eventually stops deviating, thus always accepting the move $(v,v')$ proposed by \prover{}. Therefore, $\rho$ has transitions that eventually always alternate between $S_P$ and $S_\challenger$, showing that $\finocc(S_D)$ holds. Moreover, the payoff profile announced in the states of $\rho$ stabilizes at some $\boutelement = (\outelement_1,\dots,\outelement_n)$. By construction of $\tau_\prover$, the projection of $\rho$ onto $V$ is equal to some play $\pi = h\outcomefrom{\bsigma_{\|h}}{v}$ of $\game$ with payoff profile equal to $\boutelement$. Hence, $\infocc(S_{\text{ann}}(\outelement))$ holds exactly for each $\outelement \in \{\outelement_1,\dots,\outelement_n\}$. As the payoff profile of $\pi$ equals $\boutelement$, each formula $\varphi_\outelement^i$ holds, for all $\outelement = \outelement_i\in \{\outelement_1,\dots,\outelement_n\}$. Consequently, Formula \ref{spe:prover-cond1} is satisfied.

        \item \emph{Multiple players deviate infinitely often:} Along $\rho$, at least two players $i,j$, $i \neq j$, infinitely often deviate. Hence, the play $\rho$ visits $S_D(j,i)$ infinitely often, trivially satisfying Formula \ref{spe:prover-cond3}.
        \item \emph{Exactly one player deviates infinitely often:} Eventually, \challenger{} makes exactly one player $j$ deviate infinitely often. Hence, $\rho$ visits infinitely often Deviation states all of the form $(\dots, j,D,j, \dots)$ where the deviation history increases until stabilizing at some $D$ (since $\outset_j$ is finite). Therefore, $\infocc(S_D(j,j))$ holds exactly for the pair $j,j$ and $\infocc(S_{\text{dev}}(d))$ holds exactly for all $d \in D$. Let us consider the payoff profile $\boutelement'$ of the projection $\pi'$ of $\rho$ onto $V$, and let $\outelement' \in \outset_j$ be the $j$-component of $\boutelement'$. We want to show that for all $d \in D$, $d \notR_j \outelement'$. In this way, Formula \ref{spe:prover-cond2} will be satisfied. Consider $d \in D$. By construction of $\tau_\prover$, there exists a prefix of $\rho$ whose projection onto $V$ is equal to $hv$ and the play $\pi = h\outcomefrom{\bsigma_{\|h}}{v}$ has a payoff profile whose $j$-component is equal to $d$. Moreover, in the subgame $\game_{\|h}$, $\pi'$ is a play starting at $v$ and deviating from $\pi$ by player~$j$. Since $\bsigma$ is an SPE, $\pi'$ cannot be preferred to $\pi$ for player~$j$, that is, $d \notR_j \outelement'$.

    \end{enumerate}

    Conversely, if Prover has a winning strategy $\tau_\prover$ from $s_0$ in $\game_{\prover,\challenger}$, we construct a strategy profile $\bsigma = (\sigma_i)_{i \in \players}$ from $v_0$ in $\game$ that mimics $\tau_\prover$, and we prove that it is an SPE in Appendix~\ref{app:ProverSecondPart}.
\end{proof}

Thanks to the previous proposition, the \exptime-membership of the (constrained) SPE existence problem reduces to solving the Prover game with Emerson-Lei objective $\varphi_\prover$. This is described in Appendix~\ref{app:prover-exptime-solve}. Note that the (constrained) SPE existence problem remains in \exptime{}, when the \DPWs{} $\aut{A}^i_{\outelement}$ are replaced by deterministic Emerson-Lei automata in the proof.

\section{EXPTIME-Hardness}
\label{section:exptime-hardness}

We prove the \exptime{}-hardness of the constrained SPE existence problem via a reduction from the membership problem for linear space alternating Turing machines, being \exptimeHard{}~\cite{ChandraKozenStockmeyer1981}.

\begin{proposition}[restate=constrainedspeexptimehard,name=]
\label{prop:SPE-exptime-hard-fixed-players}
    The constrained SPE existence problem for \X{} games is \exptimeHard{}, already for two players.
\end{proposition}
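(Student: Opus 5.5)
We reduce from the membership problem for linear-space alternating Turing machines, which is \exptimeComplete{}~\cite{ChandraKozenStockmeyer1981}. Fix a machine $M$ with existential and universal states, an input $w$ of length $m$, and a tape bound $c\cdot m$. We build in polynomial time a two-player \X{} game with players Eve and Adam, an initial vertex $v_0$, and a constraint profile $(\outelement_E,\outelement_A)$. The reduction must satisfy: $M$ accepts $w$ if and only if there is an SPE from $v_0$ whose outcome has payoffs at least as good as the constraint, which we make the ``accept'' payoff for both players.

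\textbf{Arena.} The play proceeds in rounds. In each round Eve writes a configuration symbol by symbol through a chain of $O(m)$ gadget vertices, choosing one tape letter (possibly marked with the head and state) per cell. Then the owner of the current control state picks a transition: Eve if the state is existential, Adam if it is universal. Then come two flagging vertices. At the first, Adam may flag Eve's last declared configuration as not being the correct successor of the previous one under the declared transition. At the second, Eve may flag Adam's transition choice as illegal. A flag leads to a sink vertex. An accepting configuration leads to an accepting sink $\top$, and a rejecting one to a sink $\bot$.

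\textbf{Payoffs.} They are defined through $\omega$-regular sets of plays, each recognised by a polynomial-size \DPW{}. The payoff sets are $\mathsf{acc}$ (reaching $\top$ with no flag), $\mathsf{rej}$, \emph{correct flag}, \emph{wrong flag}, and \emph{non-terminating}. Checking whether a flag is correct only compares the last two declared configurations cell by cell. Because the configuration is written sequentially, a deterministic automaton with polynomially many states can guess and verify a single inconsistent cell. To keep determinism, we instead let the flagger point to the faulty cell index, so the check stays local. The relations $\R_E,\R_A$ are chosen so that the following hold:
\begin{itemize}
\item a correct flag is strictly good for the flagger and bad for the flagged player;
\item a wrong flag is strictly bad for the flagger, as bad as missing a real error;
\item Eve prefers $\mathsf{acc}$ to $\mathsf{rej}$ and to non-termination;
\item Adam prefers $\mathsf{rej}$, though not enough to outweigh the penalty for lying.
\end{itemize}
The number of payoffs is constant, which also fits \cref{theorem:main-second-result}. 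Non-termination is handled by giving Eve the bad payoff there, because the machine's runs can be assumed to terminate in exponential time. Alternatively, the arena can carry a binary counter encoded in the configuration.

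\textbf{Correctness.} If $M$ accepts $w$, we build a profile from an accepting strategy tree. Eve declares configurations truthfully and chooses existential transitions according to the tree. Adam chooses arbitrary legal transitions. Each player flags exactly when the opponent has erred. We then check subgame perfection at every history, including those after past lies, where flagging remains the rational reply. This works because truthful declarations yield a play that reaches $\top$, and every unilateral deviation is either a lie, which is punished, or a different universal branch, which still accepts. Conversely, suppose an SPE reaching $\mathsf{acc}$ exists. Subgame perfection forces correct flagging in every subgame. Hence, along the outcome and all of Adam's deviation subgames, Eve's declarations are truthful. The branches of the profile under Adam's universal choices then form an accepting run tree.

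\textbf{Main obstacle.} The delicate step is calibrating $\R_E,\R_A$ so that, in every subgame, flagging if and only if the opponent lied is the unique equilibrium behaviour. This has to hold even after earlier lies, without creating cycles that destroy SPE existence in the honest case. I would first fix the relations as total orders on the constant payoff set and verify the case analysis per subgame type, namely after a lie, after a truthful move, and at a sink.
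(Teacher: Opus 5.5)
\textbf{Genuine gap: the flag-correctness payoff cannot be a single polynomial-size automaton.} Your overall plan matches the paper's: simulate the ATM, let Eve declare configurations and the owner of the state pick the transition, and use mutual flags where a wrong flag is punished as much as a missed error. The payoff encoding, however, does not work. You use a constant set of payoff classes, with a single class ``correct flag'' for Adam, and claim it is recognized by a polynomial-size \DPW{}. No such automaton exists.

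Take two histories whose last declared configurations have the same state, head position and transition, and whose contents differ in some cell~$k$ other than the head cell. The common continuation ``declare the correct successor of the first configuration, then flag (at cell~$k$, if the flagger points)'' is a wrong flag after the first history and a correct flag after the second. A deterministic automaton must therefore reach different states after these two prefixes, which forces $2^{\Omega(|w|)}$ states.

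Neither of your two fixes avoids this. ``Guessing'' the faulty cell is nondeterministic. Letting the flagger point to the cell at flag time does not restore determinism either, because the index arrives only after the relevant contents have been read.

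\textbf{How the paper gets locality.} The check is local only if the index is fixed before the play starts, so each cell needs its own payoff classes. This is exactly the paper's construction: sinks $\bot_A^1,\dots,\bot_A^n$ and classes $C_{\bot_A^k}^{\text{\OK}}$, $C_{\bot_A^k}^{\text{\KO}}$. Each class is recognized by a \DBW{} that tracks only the expected $k$-projection $(q,\ell,x_k)$ from the initial configuration onward (see the end of Appendix~\ref{app:exptime-hard-dbw-construction}). Consequently the number of payoffs is linear in $|w|$, not constant.

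\textbf{Constant payoffs with two players cannot work.} Suppose there are two players and each $\ell_i$ is fixed. Then the Prover game of Section~\ref{section:exptime-membership} has polynomially many states and a polynomial-size Emerson--Lei formula. Emerson--Lei games are solvable in polynomial space, so your reduction would place \exptime{} inside \pspace{}. Constant payoffs require unboundedly many players, which is what the Referee construction behind Theorem~\ref{theorem:main-second-result} uses. So the real obstacle is this succinctness issue, not calibrating $\R_E,\R_A$; in the paper those are simple strict partial orders.

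\textbf{Smaller issues in the forward direction.} Your profile is not subgame perfect as stated. Suppose Eve deviates with a legal but losing existential choice, reaching a configuration from which $\forall$ wins. If Adam then plays ``arbitrary legal transitions'', the play may still accept, and Adam has a profitable deviation to a rejecting branch in that subgame. Symmetrically, Eve must play winning moves from every $\exists$-winning node, not just along one accepting tree. You need strategies that are winning from every winning node of the computation tree, as the paper's positional $\sigma_\exists,\sigma_\forall$ underlying $\sigma_E^{\text{can}}$ and $\sigma_A^{\text{can}}$ are.

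Adam must also be able to flag Eve's illegal existential transitions, not only her configurations; this is item~1 of $(E,k)$-validity.

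Your converse direction is fine in outline and matches the paper's argument using Adam's canonical deviation.
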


An \emph{alternating Turing machine} (ATM) is a tuple $M = \langle Q, q_0, g, \Sigma_i, \Sigma_t, \delta, F \rangle$ where $Q = \{q_0,\dots,q_r,q_{\text{acc}},q_{\text{rej}}\}$ is a finite set of states, where $q_0$ is the initial state, $q_{\text{acc}}$ is the accepting state, and $q_{\text{rej}}$ is the rejecting state; $Q_\exists,Q_\forall \subseteq \{q_0,\dots,q_r\}$ form a partition of $Q$ with existential ($\exists$) and universal ($\forall$) states; $\Sigma_i = \{a,b\}$ is an input alphabet and $\Sigma_t = \{a,b,\#\}$ is a tape alphabet where $\#$ is the blank symbol; and $\delta: Q \setminus \{q_{\text{acc}},q_{\text{rej}}\}  \times \Sigma_t \times Q \times \Sigma_t \times \{L,R\}$ is a transition relation where $L,R$ indicate the head movement (Left, Right) on the tape.

A configuration is a triple $(q,\ell,x)$ where $q$ is the current state, $\ell$ the tape head position, and $x \in (\Sigma_t)^{n}$ the $n$ symbols on the tape. A computation tree of an ATM $M$ on an input word $w \in \Sigma_i^*$ is a finite tree whose nodes are configurations, the root is the initial configuration $\mathfrak{c}_0 = (q_0,1,w)$, the leaves are configurations containing either $q_{\text{acc}}$ or $q_{\text{rej}}$, and the successor nodes of any configuration respect the transition relation. A word $w$ is accepted by $M$ if $\exists$ has a strategy that specifies, for each configuration containing an existential state, which successor to choose in the computation tree, in a way that all leaf nodes are accepting configurations. We say that $M$ is a \emph{linear space} ATM if the space used by $M$ on any $w \in \Sigma_i^*$ is bounded by $|w|$. The \emph{membership problem} is to decide whether a given word $w$ is accepted by a given linear space ATM $M$.

\subparagraph*{Reduction}

Given a linear space ATM $M$ and an input word $w = w_1 \ldots w_n$ of length~$n$, we construct an \X{} game $\game$ with two players to simulate the computation of $M$ on $w$. The set of players is $\players = \{\text{Eve}, \text{Adam}\}$. The game simulates the ATM through a sequence of three phases: $(i)$ a \emph{configuration phase}, where Eve declares the current configuration $\mathfrak{c} = (q,\ell,x)$, then $(ii)$ a \emph{transition phase}, where Eve (resp.\ Adam) chooses any tuple $\mathfrak{t} = (q',s',m) \in Q \times \Sigma_t \times \{L,R\}$ if $q \in Q_\exists$ (resp.\ $q \in Q_{\forall}$), acting as the transition declaration $(q,x_\ell,q',s',m)$, and $(iii)$ a \emph{referee phase}, where Eve (resp.\ Adam), observing Adam's choices (resp.\ Eve's choices), selects either to validate the configuration and transition phases, hence going back to phase $(i)$, or to flag the play as invalid due to a mistake, terminating the simulation. The referee phase ensures that the declared configurations are consistent with one another, starting from the initial configuration $\mathfrak{c}_0 = (q_0,1,w)$, and that each transition $(q,x_\ell,q',s',m)$ belongs to the transition relation $\delta$ of $M$. The simulation stops during a referee phase when a player flags the play as invalid, or when the chosen transition leads to $q_{\text{acc}}$ or $q_{\text{rej}}$ and Adam and Eve both validate the play. Payoffs and references relations will be designed so that lying is never advantageous -- wrongly flagging a mistake is punished just as much as missing a real one -- so correct assessment of the alternating run is the only behavior compatible with equilibrium.

\subparagraph*{Arena and Plays}
The arena of $\game$ is structured as a cycle starting at a vertex $v_{\text{go}}$, corresponding to the generation of successive configurations, transitions, and referee phases (see \cref{fig:exptime-hardness-arena}).

\begin{figure}
    \centering

    \caption{The game $\game$ used in the reduction, where circles are owned by Eve and squares by Adam.}
    \label{fig:exptime-hardness-arena}
\end{figure}

\begin{enumerate}
    \item \textbf{Configuration phase:} Eve declares a configuration by choosing edges from the vertices $v_{\text{go}}, v_1, v_2 \dots, v_{n+1}$. At vertex $v_{\text{go}}$, Eve moves to vertex $v_1$ by choosing an intermediate vertex $\ell \in \{1,\ldots,n\}$, indicating that the tape head should be at the $\ell$-th cell. From $v_k$, $k \in \{1,\ldots,n\}$, Eve moves to $v_{k+1}$ by choosing an intermediate vertex labeled with a tape symbol $s \in \Sigma_t$, indicating that the $k$-th tape cell currently contains $s$.
    From $v_{n+1}$, Eve chooses an intermediate vertex labeled $q \in Q \setminus \{q_{\text{acc}}, q_{\text{rej}}\}$, and then move to $u_1^{\exists}$ if $q \in Q_\exists$, and to $u_1^{\forall}$ if $q \in Q_\forall$.
    \item \textbf{Transition phase:} From $u_1^{\exists}$ (resp.\ $u_1^{\forall}$), Eve (resp.\ Adam) chooses a sequence $m \in \{L,R\}$, $s' \in \Sigma_t$, and $q' \in Q$, corresponding to the transition that should be taken, respectively the movement of the tape head, the symbol to write at the tape head position, and the new state of $M$. If $q' = q_{\text{acc}}$ (resp.\ $q' = q_{\text{rej}}$), the ATM goes to $r_{E}^{\text{acc}}$ (resp.\ $r_{E}^{\text{rej}}$), otherwise it goes to $r_{E}^{\text{go}}$.
    \item \textbf{Referee Phase:} From vertex $r_E^\alpha$, with $\alpha \in \{\text{acc},\text{rej},\text{go}\}$, Eve has two choices: either move to $r_A^{\alpha}$, or move to a sink vertex $\bot_{E}$, to flag the play as invalid due to Adam. From $r_{A}^{\alpha}$, Adam has $n+1$ choices: either move to $v_{\alpha}$, or move to a sink vertex $\bot_A^k$ for $k \in \{1,\ldots,n\}$, to flag the play as invalid due to Eve. The difference between each $\bot_A^k$ is discussed later.
\end{enumerate}
The arena of $\game$ has a size polynomial in the size $M$ and the length $n$ of the input word $w$.

A play $\pi$ in $\plays(v_{\text{go}})$ is \emph{valid} if it correctly simulates a \emph{path} of the computation tree on $w$, i.e.:
\begin{itemize}
\item In the first configuration phase, the declared configuration is the initial configuration $\mathfrak{c}_0$ of $M$;
\item For each declared configuration $\mathfrak{c} = (q,\ell,x)$ of $\pi$, the transition $\mathfrak{t} = (m,s',q')$ declared just after $\mathfrak{c}$ in the transition phase is an existing transition of $M$, i.e., $(q,x_\ell,q',s',m) \in \delta$;
\item After each such transition $\mathfrak{t}$ of $\pi$, in the referee phase, $\pi$ successively moves to $r_E^\alpha$, $r_A^\alpha$, and finally to $v_\alpha$, with $\alpha \in \{\text{acc},\text{rej},\text{go}\}$. In case $\pi$ moves to $v_{\text{go}}$, the next declared configuration $\mathfrak{c}'$ is the next configuration of the path of the computation tree, obtained by applying $\mathfrak{t}$ on $\mathfrak{c}$.
\end{itemize}
This concept is extended to histories: a history is $\emph{valid}$ if it is prefix of some valid play.

During the referee phase, Eve (resp.\ Adam) decides to validate or not the transition declaration of Adam (resp.\ the configuration and transition declarations of Eve) by either going to $v_\alpha$, with $\alpha \in \{\text{go, acc, rej}\}$ or going to $\bot_E$ (resp.\ $\bot_A^k$, $k \in \{1,\dots, n\}$). Therefore, we will define the notion of $A$-validity for plays that are seen valid for Eve. We will similarly define the notion of $(E,k)$-validity for Adam, a more complex notion as Adam has to check both configuration and transition declarations of Eve, locally at the $k$-th tape cell.
\begin{itemize}
    \item A play $\pi$ is \emph{$A$-valid} if for any declared configuration $\mathfrak{c} = (q,\ell,x)$ of $\pi$, if the transition $\mathfrak{t} = (m,s',q')$ just after $\mathfrak{c}$ was declared by Adam, then $(q,x_\ell,q',s',m) \in \delta$.

    \item Let $k \in \{1,\dots,k\}$. For any declared configuration $\mathfrak{c} = (q,\ell,x)$ of a play $\pi$, we focus on its \emph{$k$-projection} $(q,\ell, x_k)$ such that $x$ is projected on its $k$-th symbol. A play $\pi$ is \emph{$(E,k)$-valid} if:
    \begin{enumerate}
        \item\label{item:E-k-validity-transi} For any configuration $\mathfrak{c} = (q,\ell,x)$ of $\pi$ such that $\ell = k$, if the transition $\mathfrak{t} = (m,s',q')$ just after $\mathfrak{c}$ was declared by Eve, then $(q,x_k,q',s',m) \in \delta$.
        \item\label{item:E-k-validity-first-config} The $k$-projection of the first configuration of $\pi$ is equal to $(q_0,1,w_1)$ if $k = 1$, and to $(\cdot,\cdot,w_k)$ if $k \neq 1$.
        \item\label{item:E-k-validity-k-projection-config} For any two successive configurations $\mathfrak{c} = (q,\ell,x)$ and $\mathfrak{c}'$ of $\pi$ separated by the transition $\mathfrak{t} = (m,s',q')$, we define $(q',\ell',t)$ such that $\ell' = \begin{cases}
        \ell - 1 & \text{if } m = L,\\
        \ell + 1 & \text{if } m = R,
        \end{cases}$ and $t = \begin{cases}
        s' & \text{if } \ell = k,\\
        s  & \text{if } \ell \neq k.
        \end{cases}$ Then the $k$-projection of $\mathfrak{c}'$ is equal to $(q',\ell',t)$ if $\ell' = k$, and to $(\cdot,\cdot,t)$ if $\ell' \neq k$.
    \end{enumerate}
\end{itemize}

A history is $A$-valid (resp. $(E,k)$-valid) if it is prefix of an $A$-valid (resp. $(E,k)$-valid) play.
Note that a play or history that is $A$-valid and $(E,k)$-valid for all $k$, is valid. A play or history which is not valid (resp.\ not $A$-valid, not $(E,k)$-valid) is called \emph{invalid} (resp.\ \emph{$A$-invalid}, \emph{$(E,k)$-invalid}).

\subparagraph*{Preference relations}
We now want to show that $\game$ is an \X{} game. We first introduce the following partition of the set $\plays(v_{\text{go}})$ of $\game$:
\begin{itemize}
    \item The set $C_{\text{acc}}$ (resp.\ $C_{\text{rej}}$) of all plays eventually looping in $v_{\text{acc}}$ (resp.\ $v_{\text{rej}}$);
    \item The set $C_{\text{go}}$ of all plays visiting $v_{\text{go}}$ infinitely often;
    \item The set of all plays eventually looping in $\bot_{E}$ is partitioned into two subsets: the subset $C_{\bot_{E}}^{\text{\OK}}$ of $A$-invalid plays that correctly led Eve to move to $\bot_{E}$, and the subset $C_{\bot_{E}}^{\text{\KO}}$ of $A$-valid plays that incorrectly led Eve to move to $\bot_{E}$.
    \item Similarly, for each $k \in \{1, \dots, n\}$, the set of all plays eventually looping in $\bot_{A}^k$ is partitioned into the subset $C_{\bot_{A}^k}^{\text{\OK}}$ of $(E,k)$-invalid plays and the subset $C_{\bot_{A}^k}^{\text{\KO}}$ of $(E,k)$-valid plays.
\end{itemize}
We finally denote by $\overline{\plays(v_{\text{go}})}$ the set $V^\omega \setminus \plays(v_{\text{go}})$, where $V$ denotes the set of vertices of $\game$.

By notation abuse, we directly define the preference relations $\R_E$ and $\R_A$ over a partition of $V^\omega$ by using the previously defined sets. On the left side of \cref{fig:exptime-hard-relations}, $\R_E$ is a strict partial order given by its graph $\G_E$: an edge $X \rightarrow Y$ means that for all $(x,y) \in X \times Y$, $\pay_E(x) \R_E \pay_E(y)$ (edges obtained by transitivity are not indicated). On the right side of \cref{fig:exptime-hard-relations}, the graph $\G_A$ of the strict partial order $\R_A$ is depicted, such that $\pay_A(x) \R_A \pay_A(y)$ if and only if $\pay_E(y) \R_E \pay_E(x)$.

\begin{figure}[h]
    \centering
    \begin{minipage}[c]{0.49\textwidth}
    \begin{center}
    \begin{tikzpicture}[scale=0.7, transform shape, ->, >=stealth]
        \tikzset{io/.style={minimum width=1cm, minimum height=0.5cm}}

        \node at (4.67, 1.1) {$\R_E$};

        \node[io] (leftE) at (0, 1.2) {$C_{\bot_{E}}^{\text{\KO}}$};
        \node[io] (leftA1) at (0, 0.4) {$C_{\bot_{A}^1}^{\text{\OK}}$};
        \node[io] (leftDots) at (0, -0.4) {$\vdots$};
        \node[io] (leftAn) at (0, -1.2) {$C_{\bot_{A}^n}^{\text{\OK}}$};

        \node (crej) at (3.5, 0) {$C_{\text{rej}} \cup C_{\text{go}} \cup \overline{\plays(v_{\text{go}})}$};
        \node (cacc) at (6.3, 0) {$C_{\text{acc}}$};

        \node[io] (rightE) at (8.8, 1.2) {$C_{\bot_{E}}^{\text{\OK}}$};
        \node[io] (rightA1) at (8.8, 0.4) {$C_{\bot_{A}^1}^{\text{\KO}}$};
        \node[io] (rightDots) at (8.8, -0.4) {$\vdots$};
        \node[io] (rightAn) at (8.8, -1.2) {$C_{\bot_{A}^n}^{\text{\KO}}$};

        \draw (leftE) -- (crej);
        \draw (leftA1) -- (crej);
        \draw (leftDots) -- (crej);
        \draw (leftAn) -- (crej);

        \draw (crej) -- (cacc);

        \draw (cacc) -- (rightE);
        \draw (cacc) -- (rightA1);
        \draw (cacc) -- (rightDots);
        \draw (cacc) -- (rightAn);
    \end{tikzpicture}
    \end{center}
    \end{minipage}
    \hfill
    \begin{minipage}[c]{0.49\textwidth}
    \begin{center}
    \begin{tikzpicture}[scale=0.7, transform shape, ->, >=stealth]
        \tikzset{io/.style={minimum width=1cm, minimum height=0.5cm}}

        \node at (4.67, 1.1) {$\R_A$};

        \node[io] (leftE) at (0, 1.2) {$C_{\bot_{E}}^{\text{\OK}}$};
        \node[io] (leftA1) at (0, 0.4) {$C_{\bot_{A}^1}^{\text{\KO}}$};
        \node[io] (leftDots) at (0, -0.4) {$\vdots$};
        \node[io] (leftAn) at (0, -1.2) {$C_{\bot_{A}^n}^{\text{\KO}}$};

        \node (cacc) at (2, 0) {$C_{\text{acc}}$};
        \node (crej) at (5, 0) {$C_{\text{rej}} \cup C_{\text{go}} \cup \overline{\plays(v_{\text{go}})}$};

        \node[io] (rightE) at (8.5, 1.2) {$C_{\bot_{E}}^{\text{\KO}}$};
        \node[io] (rightA1) at (8.5, 0.4) {$C_{\bot_{A}^1}^{\text{\OK}}$};
        \node[io] (rightDots) at (8.5, -0.4) {$\vdots$};
        \node[io] (rightAn) at (8.5, -1.2) {$C_{\bot_{A}^n}^{\text{\OK}}$};

        \draw (leftE) -- (cacc);
        \draw (leftA1) -- (cacc);
        \draw (leftDots) -- (cacc);
        \draw (leftAn) -- (cacc);

        \draw (cacc) -- (crej);

        \draw (crej) -- (rightE);
        \draw (crej) -- (rightA1);
        \draw (crej) -- (rightDots);
        \draw (crej) -- (rightAn);
    \end{tikzpicture}
    \end{center}
    \end{minipage}
    \caption{The two relations $\R_E$ and $\R_A$ that are strict partial orders.}
    \label{fig:exptime-hard-relations}
\end{figure}

To prove that $\game$ is an \X{} game, it remains to show that the sets of the partition used by $\R_E$ and $\R_A$ are all $\omega$-regular. It is enough to provide an accepting deterministic Büchi automaton (\DBW{}) of polynomial size for each set $C_{\text{acc}}$, $C_{\text{rej}}$, $C_{\text{go}}$, $C_{\bot_{E}}^{\text{\OK}}$, $C_{\bot_{E}}^{\text{\KO}}$, $C_{\bot_{A}^k}^{\text{\OK}}$, $C_{\bot_{A}^k}^{\text{\KO}}$, and $\overline{\plays(v_{\text{go}})}$, as \DBWs{} are closed under union with a polynomial size construction. This is easily done for $C_{\text{acc}}$, $C_{\text{rej}}$, $C_{\text{go}}$, and $\overline{\plays(v_{\text{go}})}$. The automata for the remaining sets are detailed in Appendix~\ref{app:exptime-hard-dbw-construction}.

\subparagraph*{Correctness of the reduction}

We are now ready to prove the correctness of the reduction: the Turing machine $M$ accepts the input word $w$ if and only if there exists an SPE from $v_{\text{go}}$ in $\game$ satisfying the constraint $C_{\text{acc}}$ for both players. In view of the preferences relations $\R_E$ and $\R_A$ (see \Cref{fig:exptime-hard-relations}), this is equivalent to the existence of an SPE whose outcome \emph{belongs to $C_{\text{acc}}$}.

To this end, we need to define the notion of \emph{canonical strategy}. Consider the finite computation tree of $M$ on $w$. At each node of the tree, either $\exists$ wins, or $\forall$ wins; and $M$ accepts $w$ if and only if $\exists$ wins at its root. Therefore, there exist two positional strategies $\sigma_{\exists}$ and $\sigma_{\forall}$, winning from their respective winning nodes, and arbitrary from their respective loosing nodes. From $\sigma_{\exists}$, we define the following canonical strategy $\sigma_E^{\text{can}}$ in $\game$. Let $h \in \hist(v_{\text{go}})$. We have three cases:
\begin{itemize}
\item \emph{$h$ is valid}: If $h$ ends in $v_{\text{go}}$ (resp.\ $u_1^\exists$), then $\sigma_E^{\text{can}}$ mimics $\sigma_\exists$ from $h$ by declaring the expected configuration (resp.\ declaring the expected transition). If it ends in $r_E^{\alpha}$, then $\sigma_E^{\text{can}}(h) = r_A^{\alpha}$.
\item \emph{$h$ is invalid but $A$-valid}: $\sigma_E^{\text{can}}$ can no longer mimic $\sigma_{\exists}$ and thus declares an arbitrary configuration (resp.\ transition) if $h$ ends in $v_{\text{go}}$ (resp.\ $u_1^\exists$). If $h$ ends in $r_E^{\alpha}$, then $\sigma_E^{\text{can}}(h) = r_A^{\alpha}$.
\item \emph{$h$ is $A$-invalid}: we proceed as in the previous item, except that $\sigma_E^{\text{can}}(h) = \bot_E$ if $h$ ends in $r_E^{\alpha}$.
\end{itemize}
The strategy $\sigma_E^{\text{can}}$ is a natural retaliating strategy  in case Adam picks an edge leading to an invalid history. We similarly define the canonical strategy $\sigma_A^{\text{can}}$ from $\sigma_\forall$, with one small difference. Suppose that $h$ ends in $r_A^\alpha$. If $h$ is valid or $(E,k)$-valid for all $k$, then $\sigma_A^{\text{can}}(h) = v_\alpha$. Otherwise, consider the shortest prefix $h'$ of $h$ such that $h'$ is $(E,k)$-invalid for a unique $k$, then $\sigma_A^{\text{can}}(h) = \bot_A^k$.

\begin{proof}
Suppose that $M$ accepts $w$. We will prove that there exists an SPE in $\game$ from $v_{\text{go}}$ with outcome in $C_{\text{acc}}$. From the computation tree of $M$ on $w$, we consider the strategies $\sigma_\exists$ and $\sigma_\forall$ as described above, as well as their canonical strategies $\sigma_E^{\text{can}}$ and $\sigma_A^{\text{can}}$. Let $\bsigma = (\sigma_E^{\text{can}},\sigma_A^{\text{can}})$ and denote by $\rho$ the outcome $\outcomefrom{\bsigma}{v_{\text{go}}}$.
As $\bsigma$ produces a valid outcome by construction, and $\exists$ is winning from the root of the computation tree, we get that $\rho \in C_{\text{acc}}$. Let us show that $\bsigma$ is an SPE, that is, for each history $hv \in \hist(v_{\text{go}})$, $\bsigma_{\|h}$ is an NE from $v$ in the subgame $\rest{\game}{h}$. We can assume that $v$ is not a sink vertex, since otherwise any strategy profile is trivially an NE in this subgame.
\begin{itemize}
    \item If $hv$ is valid, $h \outcomefrom{\bsigma_{\|h}}{v}$ is valid by construction. Moreover, it belongs to $C_{\text{acc}}$ or $C_{\text{rej}}$ according to the winner in the computation tree at the node corresponding to the longest prefix of $h$ ending in $u_1^\exists$ or $u_1^\forall$. Consider a deviating strategy $\sigma'_E$ of Eve in $\rest{\game}{h}$. If $h \outcomefrom{(\sigma'_E,\sigma_{A|h}^{\text{can}})}{v}$ is valid, it is non-profitable for Eve with respect to $\R_E$, as the winner plays optimally. If it is invalid, then it is $(E,k)$-invalid for some $k$, and thus belongs to $C_{\bot_A^k}^{\text{\OK}}$ by definition of $\sigma_A^{\text{can}}$. It is again non-profitable for Eve. The same arguments can be repeated for a deviating strategy of Adam.
    \item If $hv$ is $A$-valid and $(E,k)$-invalid for some $k$, for the $k$ such that the shortest invalid prefix $h'$ of $h$ is $(E,k)$-invalid uniquely for $k$, then $h \outcomefrom{\bsigma_{\|h}}{v}$ belongs to $C_{\bot_A^k}^{\text{\OK}}$ by construction. So, Adam has a maximal payoff with respect to $\R_A$, and has no incentive to deviate. Consider a deviating strategy $\sigma'_E$ of Eve in $\rest{\game}{h}$. Her choices will have no impact, except from $r_E^\alpha$, with $\alpha \in \{\text{acc},\text{rej},\text{go}\}$. If she moves to $r_A^\alpha$, then the resulting outcome $h \outcomefrom{(\sigma'_E,\sigma_{A|h}^{\text{can}})}{v}$ belongs to $C_{\bot_A^k}^{\text{\OK}}$, and thus is non-profitable. If she moves to $\bot_E$, the resulting outcome belongs to $C_{\bot_E}^\text{\KO}$ as Adam played with valid choices, resulting again in a non-profitable deviation.
    \item If $hv$ is $A$-invalid, then $h \outcomefrom{\bsigma_{\|h}}{v}$ belongs to $C_{\bot_E}^{\text{\OK}}$ by construction, leading to a maximal payoff for Eve. All deviating strategies of Adam lead to $r_E^\alpha$, with $\alpha \in \{\text{acc},\text{rej},\text{go}\}$, from which Eve moves to $\bot_E$. The resulting outcome remains in $C_{\bot_E}^{\text{\OK}}$ and is non-profitable for Adam.
\end{itemize}

Conversely, assume that $w$ is not accepted by $M$ and there exists an SPE $\bsigma = (\sigma_E,\sigma_A)$ with outcome $\rho \in C_{\text{acc}}$. From the computation tree of $M$ on $w$, we extract a positional strategy $\tau_{\forall}$ of $\forall$, winning at its root, and construct its canonical strategy $\sigma_A^{\text{can}}$, as described above. We consider the subgame of $\game$ equal to itself, and show that $\bsigma$ is not an NE from $v_{\text{go}}$, which is a contradiction. First, if $\rho$ is invalid, from the shortest invalid prefix of $\rho$, Eve (or Adam) would have an incentive to move to $\bot_E$ (or some $\bot_A^k$) to get a profitable deviation in $C_{\bot_E}^{\text{\OK}}$ (or $C_{\bot_A^k}^{\text{\OK}}$). Second, if $\rho$ is valid, let us consider the deviating strategy $\sigma_A^{\text{can}}$ of Adam and the resulting play $\rho' = \outcomefrom{\sigma_E,\sigma_A^{\text{can}}}{v_{\text{go}}}$. If $\rho'$ is valid, then $\rho' \in C_{\text{rej}}$ by definition of $\sigma_A^{\text{can}}$, which is profitable for Adam (as $\rho \in C_{\text{acc}}$). If $\rho'$ is not valid, again by definition of $\sigma_A^{\text{can}}$, it is due to Eve and $\rho' \in C_{\bot_A^k}^{\text{\OK}}$ for some $k$, which is profitable for Adam. In both cases, we get the announced contradiction.
\end{proof}

\section{Conclusion}
\label{section:conclusion-future-work}

In this work, we have studied the contrained SPE existence problem for \X{} games, that is, for games where each player has a preference relation that is $\omega$-recognizable. This generic class of games encompasses several games with specific payoff functions and offers a general approach to solving equilibrium decision problems. Our main result states that this problem is \exptimeComplete{}, and that the hardness already holds for two players. The proof can be easily adapted to get the same complexity result for NEs. When the number of payoffs is fixed for each player, the contrained SPE existence problem remains \exptimeComplete{}, whereas it becomes \pspaceComplete{} for NEs. 

As future work, we will consider the extension of our analysis to games equipped with $\omega$-automatic preference relations. The decidability status of the constrained SPE existence problem in this broader framework is, to the best of our knowledge, open. We nevertheless conjecture that this problem becomes undecidable in the presence of $\omega$-automatic preference relations. This conjecture is informed by a recent undecidability result for $\omega$-automatic relations in the rational synthesis setting~\cite{BruyereFiliotGrandmontRaskin26-arxiv}. There, undecidability arises from the need to compare separated subtrees in trees that encode strategies. An analogous difficulty appears to occur for SPEs when preferences are specified by $\omega$-automatic relations. In contrast, for NEs, when $\omega$-automatic preferences are given by deterministic parity automata, the constrained NE existence problem is shown to be \pspaceComplete{} in the same paper~\cite{BruyereFiliotGrandmontRaskin26-arxiv}.

\bibliography{bibliography}

\newpage\appendix

\section{Proof of \texorpdfstring{\cref{prop:SPEcorrespondence}}{Proposition~\ref{prop:SPEcorrespondence}}}
\label{app:ProverSecondPart}

We here provide the proof of the second part of \cref{prop:SPEcorrespondence}.

\specorrespondance*

\begin{proof}[Proof of the second part of \cref{prop:SPEcorrespondence}]
    Suppose that Prover has a winning strategy $\tau_\prover$ from $s_0$ in $\game_{\prover,\challenger}$. From $\tau_\prover$, we construct a strategy profile $\bsigma = (\sigma_i)_{i \in \players}$ from $v_0$ in $\game$ as follows. Let $hv$ be any history in $\game$ with $v \in V_i$. There exists a unique history $g$ in $\game_{\prover,\challenger}$ ending in a state of $S_\prover$ and consistent with $\tau_\prover$ such that its projection onto $V$ is equal to $hv$. If $\tau_\prover(g) = ((v,v'),\dots)$, then we define $\sigma_i(hv)= v'$. Let us show that $\bsigma$ is an SPE in $\game$. Consider any history $hv$, the subgame $\game_{\|h}$, and the outcome $\pi = \outcomefrom{\bsigma_{\|h}}{v}$ starting at vertex $v$. Let $j$ be a player, $\sigma'_j$ be a deviating strategy, and $\pi' = \outcomefrom{(\bsigma_{-j\|h},\sigma'_j)}{v}$. Our goal is to show that $\pi'$ is not a profitable deviation for $j$.

    We denote by $\rho$ the unique play in the Prover game such that its projection onto $V$ is equal to $h\pi$. Along the suffix of $\rho$ that simulates $\pi$, \challenger{} always accepts the moves suggested by \prover, and the payoff profile announced by \prover{} is a fixed $\boutelement = (\outelement_1,\dots,\outelement_n)$. Hence, as $\tau_\prover$ is winning, Formula~\ref{spe:prover-cond1} is satisfied by $\rho$, meaning that the payoff profile of $h\pi$ is equal to $\boutelement$. Similarly to $\rho$, there exists a unique play $\rho'$ in $\game_{\prover,\challenger}$ whose projection onto $V$ is equal to $h\pi'$. Let $\tau_\challenger$ be the strategy of \challenger{} such that $\outcomefrom{\tau_\prover,\tau_{\challenger}}{s_0} = \rho'$. It simulates the strategy $\sigma'_j$ in the subgame $\game_{\|h}$. There are two cases:
    \begin{enumerate}
        \item \emph{Finitely many deviations by player $j$}: Along $\rho'$, as \challenger{} eventually stops deviating, we repeat for $\rho'$ the argument made above for $\rho$. By Formula \ref{spe:prover-cond1}, \challenger{} eventually accepts the payoff profile $\boutelement'$ announced by \prover{} and the play $h\pi'$ has payoff profile $\boutelement'$. Moreover, the deviation history eventually stabilizes at some $D$ that contains the component $\outelement_j$ of $\boutelement$. By construction of $\game_{\prover,\challenger}$, at the last update of the deviation history (leading to $D$), \prover{} is forced to announce a payoff $\outelement'_j$ in $\boutelement'$ such that $d \notR_j \outelement'_j$, $\forall d \in D$. In particular, $\outelement_j \notR_j \outelement'_j$. It follows that $\pay_j(h\pi) \notR_j \pay_j(h\pi')$.

        \item \emph{Infinitely many deviations by player $j$}: the play $\rho'$ infinitely often visits states of the form $(\dots,j,D,j,\dots)$ for some fixed deviation history $D$ that contains $\outelement_j$. Let $\boutelement' = (\outelement'_1,\dots,\outelement'_n)$ be the payoff profile of $h \pi'$. Then $\varphi_{\outelement'}^i$ holds for all $\outelement' = \outelement'_i$, $i \in \{1, \dots, n\}$. As $\tau_\prover$ is winning and $\rho'$ visits $S_D(j,j)$ infinitely often, Formula \ref{spe:prover-cond2} is satisfied. Thus, $\rho'$ satisfies $\bigvee_{\outelement' \in \outset_j, d \notR_j \outelement'} \varphi_{\outelement'}^j$ for all $d \in D$, and in particular, $\rho'$ satisfies $\outelement_j \notR_j \outelement'_j$ (as $D \ni \outelement_j$). Therefore, $\pay_j(h\pi) \notR_j \pay_j(h\pi')$.
    \end{enumerate}
    Consequently, $\bsigma$ is an SPE.
\end{proof}

\section{Proof of \texorpdfstring{\cref{prop:membership}}{Proposition \ref{prop:membership}}}
\label{app:prover-exptime-solve}

We here complete the proof of the \exptime-membership of the constrained SPE existence problem.

\exptimemembershipprover*

By \Cref{prop:SPEcorrespondence}, we have to solve the Prover game with Emerson-Lei objective $\varphi_\prover$. We will use the next two results.

\begin{lemma}\label{prop:ParityToBB}
    A parity objective with priorities in $\{0, \ldots, d\}$ (with $d$ even) can be encoded into a Emerson-Lei objective with $d+1$ targets and a Boolean formula of size $O(d^2)$.
\end{lemma}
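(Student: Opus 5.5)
The natural approach is to use targets $T_1,\dots,T_d$ that collect the states of priority at least $k$, or alternatively to use the sets $P_k$ of states with priority exactly $k$. I take the latter: for each priority $k \in \{0,\ldots,d\}$, let $P_k$ be the set of states of priority $k$. These $d+1$ targets are exactly the ones announced in the statement. Under the max-parity convention (even maximum wins), the parity objective says that the largest priority seen infinitely often is even. I would therefore write it as
\[
\bigvee_{k \in \{0,\ldots,d\},\ k \text{ even}} \Big( \infocc(P_k) \wedge \bigwedge_{k < k' \leq d} \finocc(P_{k'}) \Big).
\]
If the paper's convention is min-parity instead, the inner conjunction would range over $k' < k$; the argument is symmetric.

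The key steps are as follows.

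\begin{enumerate}
\item Check correctness. Along any play, the set of priorities visited infinitely often is nonempty, because the state space is finite and every state carries a priority. Let $m$ be its maximum. Then $m$ is the unique $k$ for which $\infocc(P_k) \wedge \bigwedge_{k'>k} \finocc(P_{k'})$ holds. Hence the disjunction is true if and only if $m$ is even, which is exactly the parity condition.
\item Bound the size. There are at most $d/2+1$ disjuncts. Each disjunct has at most $d+1$ literals. The formula therefore has size $O(d^2)$.
\end{enumerate}

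I expect no real obstacle. The only point needing care is the nonemptiness of the infinitely-visited set, which holds on finite arenas and which makes the uniqueness of the maximum argument valid. One could also note an alternative encoding of size $O(d)$ using the upward-closed targets, but the statement only requires $O(d^2)$, so the direct encoding above suffices.
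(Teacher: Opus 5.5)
Your proof is correct: one target $P_k$ per priority, with the disjunction over even $k$ of $\infocc(P_k)\wedge\bigwedge_{k'>k}\finocc(P_{k'})$, gives exactly the $d+1$ targets and $O(d^2)$ formula size claimed, and your argument that a unique maximum priority is visited infinitely often establishes correctness. The paper states this lemma without proof, so the only comparison is that your construction matches the announced parameters; your side remark that upward-closed targets give an $O(d)$ formula is also correct.
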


\begin{theorem}[\cite{BruyereFRT24}] \label{thm:BBgames}
    Solving Emerson-Lei games is polynomial in the number $|V|$ of vertices, exponential in the number $k$ of targets, and linear in the size $|\phi|$ of the Boolean formula.
\end{theorem}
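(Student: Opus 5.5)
The statement is cited from~\cite{BruyereFRT24}. To re-derive it, I would reduce the Emerson-Lei game to a parity game with a \emph{latest appearance record} (LAR) over the $k$ targets, and then solve that parity game with an algorithm that is fixed-parameter tractable in the number of priorities. The reason for this route is that all blow-up caused by the targets then ends up in a factor depending only on $k$. Only $|V|$ enters polynomially, and $|\phi|$ is used only when assigning priorities.

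\textbf{Product game.} Let $T_1,\dots,T_k$ be the targets of $\phi$. Form the product of the arena with an LAR memory: a permutation of $\{1,\dots,k\}$ together with a hit position. When a vertex is entered, move the indices of the targets containing it to the front and record the deepest position moved. This product has at most $|V|\cdot k!\cdot (k+1)$ states. The standard LAR property gives the following. Along any play, let $m$ be the hit position taken infinitely often, chosen as large as possible. Then the set of the first $m$ indices of the permutation, eventually fixed at those moments, is exactly the set $\mathrm{Inf}$ of targets visited infinitely often. This step is deterministic, so the product game is equivalent to the original game for both players.

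\textbf{Priorities.} Give a product state with hit position $m$ and prefix set $F$ the priority $2m$ if $\phi$ evaluates to true under the valuation $\infocc(T_i)\mapsto [i\in F]$, $\finocc(T_i)\mapsto[i\notin F]$, and $2m+1$ otherwise, with the max-parity convention. The maximal priority seen infinitely often then encodes both $m$ and the truth of $\phi$ on $\mathrm{Inf}$, so the winning plays coincide with those of the original objective. Computing all priorities takes one evaluation of $\phi$ per product state, i.e.\ time $O(|V|\cdot k!\cdot k\cdot|\phi|)$. This is linear in $|\phi|$, polynomial in $|V|$ and exponential in $k$. The resulting parity game has $N = O(|V|\cdot k!\cdot k)$ vertices and $d = O(k)$ priorities.

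\textbf{Solving the parity game.} I would use a parity-game algorithm running in time $f(d)\cdot N^{c}$ for a fixed constant $c$, for instance the FPT bound of the quasi-polynomial algorithms of Calude et al. Substituting $N$ and $d$ gives time $2^{\mathrm{poly}(k)}\cdot |V|^{c}$. Adding the $|\phi|$-linear cost of computing priorities yields the claimed bound.

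\textbf{Main obstacle.} The difficulty is keeping the dependence on $|V|$ polynomial with an exponent that does not grow with $k$. A naive parity solver running in time $N^{O(d)}$ would give $|V|^{O(k)}$, which is not enough. This is why I commit to an FPT parity algorithm. If I wanted to avoid that black box, I would instead run a direct Zielonka-style recursion on the Emerson-Lei game. Each recursive call fixes which targets must be avoided, so the recursion tree has at most $2^{O(k)}$ nodes per level and depth $O(k)$. Each call performs polynomial-time attractor computations in $|V|$ and evaluates $\phi$ on a candidate set of targets. The care needed there is to show that the recursion depth is bounded by $k$ and not by $|V|$, which follows because every recursive descent removes a target from consideration.
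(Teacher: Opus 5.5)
The paper does not prove this statement. It is imported from \cite{BruyereFRT24} and used only as a black box in the complexity count of Proposition~\ref{prop:membership}. Your main argument is a sound reconstruction. The latest-appearance-record product with simultaneous moves has the property you need, for three reasons:
\begin{itemize}
\item eventually only targets in $\mathrm{Inf}$ are moved, and they end up in the first $|\mathrm{Inf}|$ positions;
\item the target at position $|\mathrm{Inf}|$ stays there until it is hit again;
\item an update with hit $m$ does not change the \emph{set} of the first $m$ indices.
\end{itemize}
So your $2m$/$2m+1$ priorities are well defined, and the maximal priority seen infinitely often is $2|\mathrm{Inf}|$ or $2|\mathrm{Inf}|+1$ according to $\phi(\mathrm{Inf})$. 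The formula size $|\phi|$ enters only through the priority computation, and the FPT bound of Calude et al.\ has the form $O(N^5)+g(d)$. The total is therefore $O\big((|V|\,k!\,(k+1))^5\big)+g(2k+2)+O(|V|\,k!\,k\,|\phi|)$. The $k!$ factor makes the dependence on $k$ of order $2^{k\log k}$ rather than $2^{O(k)}$, so ``exponential in $k$'' must be read as $2^{\mathrm{poly}(k)}$. That is all the paper needs, since there $k$ is polynomial in the input.

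The fallback in your last paragraph, however, does not give the bound. In a Zielonka/McNaughton-style recursion for an Emerson-Lei (or Muller) condition, the depth is indeed $O(k)$, the depth of the Zielonka tree. But each call repeats ``solve a subgame recursively, remove the opponent's attractor'' until a fixpoint, and the number of rounds is bounded only by $|V|$, not by a function of $k$. Hence a call can spawn up to $|V|\cdot 2^{O(k)}$ calls one level down, and the running time is $|V|^{O(k)}$. This is exactly the XP behaviour you identified as the obstacle. So the FPT parity solver, or some other genuinely FPT procedure, is not an optional black box in your plan. Keep the first route and drop the alternative.
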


\begin{proof}[Proof of \cref{prop:membership}]
Given a vertex $v$ of $\game$ and payoffs $d_i \in \outset_i$ for each player $i \in \players$, we have to decide whether there exists an SPE $\bsigma$ such that for all $i \in \players$, $d_i = \pay_i(\outcomefrom{\bsigma}{v})$ or $d_i \R_i \pay_i(\outcomefrom{\bsigma}{v})$. By \Cref{prop:SPEcorrespondence}, this is equivalent to decide whether there exists some payoff profile $\boutelement$ such that $d_i = \outelement_i$ or $d_i \R_i \outelement_i$, for all $i \in \players$, for which Prover has a winning strategy in $\game_{\prover,\challenger}$ with objective $\varphi_\prover$ from the state $s_0 = (v,\boutelement,i,\varnothing,\bar q_0^1,\dots,\bar q_0^n)$ (with $i$ an arbitrary player).

Let us evaluate the complexity of this decision procedure, depending on the following parameters:
\begin{itemize}
    \item $|V|$ and $|E|$: respectively the number of vertices and the number of edges of $\game$,
    \item $n$: number of players,
    \item $\ell$: $\max\{ \ell_i \mid i \in \{1,\dots,n\}\}$,
    \item $Q$: $\max \{|Q^i_\outelement| \mid i \in \{1,\dots,n\}, \outelement\in \outset_i\}$ where $Q^i_\outelement$ is the set of states of $\aut{A}^i_\outelement$,
    \item $d$: $\max \{d^i_\outelement \mid i \in \{1,\dots,n\}, \outelement\in \outset_i\}$ where $\{0, \dots, d^i_\outelement\}$ is the set of priorities of $\aut{A}^i_\outelement$,
\end{itemize}

By \Cref{prop:ParityToBB}, the formulas $\varphi^i_\outelement$ all use a number of target sets in $\bigO(d)$ and their sizes $|\varphi^i_\outelement|$ are in $\bigO(d^2)$. In view of Theorem~\ref{thm:BBgames}, we need to carry out several computations.
\begin{itemize}
    \item \textbf{State space of $\game_{\prover,\challenger}$:} The state space $S$ consists of the union of $S_P$, $S_\challenger$, and $S_D$. A state is defined by a current vertex or edge, an announced payoff profile $\boutelement = (\outelement_1,\dots,\outelement_n)$, an active player $i$, a deviation history $D \subseteq \outset_i$, a newly deviating player, and $n$ tuples of automata states $\bar{q}^1, \dots, \bar{q}^n$. We get $|S|$ in $\bigO\left( |E| \cdot \ell^n \cdot n \cdot 2^{\ell} \cdot n \cdot Q^{\ell\cdot n} \right)$.

    \item \textbf{Number $k$ of target sets in $\varphi_\prover$:} The formula $\varphi_\prover$ tracks the target sets $S_D$, $S_D(j,i)$ for all $j,i \in \players$, $S_{\text{ann}}(\outelement)$ and $S_{\text{dev}}(\outelement)$ for all $\outelement \in \outset_i$, $i \in \players$, and the target sets embedded within each sub-formulas $\varphi^i_\outelement$. We get $k$ in $\bigO\left(1 + n^2 + 2 \cdot n \cdot \ell + n \cdot \ell \cdot d \right) = \bigO\left(n^2 + n \cdot \ell \cdot d \right)$.

    \item \textbf{Size of formula $\varphi_\prover$:} Recall that $\varphi_\prover$ is the disjunction of Formulas \ref{spe:prover-cond1}--\ref{spe:prover-cond2}. The size of Formula \ref{spe:prover-cond1} is dominated by the size of Formula \ref{spe:prover-cond2} whose size is in $\bigO(n \cdot (2 + \ell \cdot(2 + \ell \cdot d^2)) )$. By adding the size of Formula \ref{spe:prover-cond3}, we get $|\varphi_\prover|$ in $\bigO\left( n \cdot \ell^2 \cdot d^2 + n^2 \right)$.
\end{itemize}

By \Cref{thm:BBgames}, constructing and solving the Prover game with condition $\varphi_\prover$ is polynomial in the number $|S|$ of its states, exponential in the number $k$ of its target sets, and linear in the size $|\varphi_\prover|$, that is, polynomial in $|E|$, and exponential in $\log(Q)$, $n$, $\ell$, and $d$. Recall that we have to solve such a Prover game for all $\boutelement$ such that $d_i = \outelement_i$ or $d_i \R_i \outelement_i$ for each given $d_i$.
The previous complexity has to be multiplied by $\ell^n$, thus leading to an exponential total complexity.
\end{proof}

Note that the SPE existence problem in \X{} games is in \exptime{}, as a direct corollary of the previous proof.

\section{Construction of \DBWs{} in the Proof of \texorpdfstring{\cref{prop:SPE-exptime-hard-fixed-players}}{Proposition~\ref{prop:SPE-exptime-hard-fixed-players}}}
\label{app:exptime-hard-dbw-construction}

In this section, we describe the \DBWs{} accepting the set $C^{\text{\OK}}_{\bot_{E}}$, $C^{\text{\KO}}_{\bot_{E}}$, $C^{\text{\OK}}_{\bot_{A}^k}$, and $C^{\text{\KO}}_{\bot_{A}^k}$ respectively, as a part of the proof of \cref{prop:SPE-exptime-hard-fixed-players}.

\constrainedspeexptimehard*

Note that for these automata focussed on $A$-valid and $(E,k)$-valid plays, in addition to verifying that the declared transition exists, we must also verify that the tape head does not move to $0$ or $n+1$ (recall that we use a reduction from the membership problem for linear space ATMs). The check of the latter condition can be avoided by slightly modifying the arena of \Cref{fig:exptime-hardness-arena}: We extend the vertices of the configuration phase with the information $1$ (resp.\ $n$, $\mathsf{other}$) to remember that the declared position of the tape head is $1$ (resp.\ $n$, $\ell \in \{2,\dots,n-1\}$). In the transition phase, the allowed movements of the tape head from $u_1^\exists$ and $u_1^\forall$ are then uniquely $R$ (resp.\ uniquely $L$, both $R, L$). In the sequel, we suppose working with the modified arena.

We start by constructing the $\aut{A}^{\text{\OK}}_{\bot_{E}}$ accepting $C_{\bot_{E}}^{\text{\OK}}$, and then explain how to modify it to obtain the \DBW{} $\aut{A}^{\text{\KO}}_{\bot_{E}}$ for $C_{\bot_{E}}^{\text{\KO}}$. \Cref{fig:exptime-hard-dbw-E-OK} describes a part of this \DBW{} whose initial state is $q_\dagger$ and the unique final state $q_\sharp$ is double-circled. Two symbols on an edge is a shortcut to read the first symbol, reach a new intermediate state, and then read the second symbol. A star $*$ on a loop means that it loops for all remaining options to keep the automaton deterministic. For simplicity, we suppose that the words accepted by the automaton belong to $\plays(v_{\text{go}})$.\footnote{Simply take the intersection with the \DBW{} for $\plays(v_{\text{go}})$.}

The automaton $\aut{A}^{\text{\OK}}_{\bot_{E}}$ reads the current declared configuration $\mathfrak{c} = (q,\ell,x)$ until it stores its fragment $(q,\ell,s)$ such that $s = x_\ell$. Then it stores the information of the transition phase up to reading $u_3^\exists \cdot q'$ or $u_3^\forall \cdot q'$. If Adam was playing and the transition he declared does not exist, i.e., $u_3^\forall$ was read and $(q,s,q',s',m) \not\in \delta$, then the automaton enters a gadget by reading symbols until it loops in the final state while reading $\bot_E$.\footnote{\label{footnote:continue-playing}Eve invalidates the wrong actions of Adam by directly going to $\bot_E$, or continues playing before going to $\bot_E$.} Otherwise, it reads $r_E^\alpha$ and then $r_A^{\text{go}}$, before going back to the initial state and repeating the same scenario for the next declared configuration.

The automaton $\aut{A}^{\text{\KO}}_{\bot_{E}}$ is similar to $\aut{A}^{\text{\OK}}_{\bot_{E}}$ except that its final state is $q'_\sharp$ (instead of $q_\sharp$). Note that both automata have polynomial size.

\begin{figure}
    \centering
    \begin{tikzpicture}[x=0.75pt,y=0.75pt,yscale=-1]
        \draw (204.97,125.82) -- (128.5,125.82);
        \draw [shift={(207.97,125.82)}, rotate = 180,fill={rgb, 255:red, 0;green,0;blue,0},line width=0.08,draw opacity=0] (5.36,-2.57) -- (0,0) -- (5.36,2.57) -- (3.56,0) -- cycle;
        \draw   (300.51,125.49) .. controls (300.51,120.41) and (304.63,116.3) .. (309.7,116.3) -- (324.14,116.3) .. controls (329.22,116.3) and (333.33,120.41) .. (333.33,125.49) -- (333.33,125.49) .. controls (333.33,130.57) and (329.22,134.68) .. (324.14,134.68) -- (309.7,134.68) .. controls (304.63,134.68) and (300.51,130.57) .. (300.51,125.49) -- cycle ;
        \draw   (413.64,125.49) .. controls (413.64,120.41) and (417.75,116.3) .. (422.83,116.3) -- (450.62,116.3) .. controls (455.7,116.3) and (459.81,120.41) .. (459.81,125.49) -- (459.81,125.49) .. controls (459.81,130.57) and (455.7,134.68) .. (450.62,134.68) -- (422.83,134.68) .. controls (417.75,134.68) and (413.64,130.57) .. (413.64,125.49) -- cycle ;
        \draw (437.01,184.7) -- (437.15,135.16);
        \draw [shift={(437,187.7)}, rotate = 270.16,fill={rgb, 255:red, 0;green,0;blue,0},line width=0.08,draw opacity=0] (5.36,-2.57) -- (0,0) -- (5.36,2.57) -- (3.56,0) -- cycle;
        \draw (403.62,196.82) -- (353.11,196.82);
        \draw [shift={(350.11,196.82)}, rotate = 360,fill={rgb, 255:red, 0;green,0;blue,0},line width=0.08,draw opacity=0] (5.36,-2.57) -- (0,0) -- (5.36,2.57) -- (3.56,0) -- cycle;
        \draw   (403.62,196.82) .. controls (403.62,191.75) and (407.73,187.63) .. (412.81,187.63) -- (459.89,187.63) .. controls (464.97,187.63) and (469.09,191.75) .. (469.09,196.82) -- (469.09,196.82) .. controls (469.09,201.9) and (464.97,206.01) .. (459.89,206.01) -- (412.81,206.01) .. controls (407.73,206.01) and (403.62,201.9) .. (403.62,196.82) -- cycle ;
        \draw   (268.59,196.82) .. controls (268.59,191.75) and (272.7,187.63) .. (277.78,187.63) -- (341.67,187.63) .. controls (346.74,187.63) and (350.86,191.75) .. (350.86,196.82) -- (350.86,196.82) .. controls (350.86,201.9) and (346.74,206.01) .. (341.67,206.01) -- (277.78,206.01) .. controls (272.7,206.01) and (268.59,201.9) .. (268.59,196.82) -- cycle ;
        \draw (297,125.82) -- (226.53,125.82);
        \draw [shift={(300,125.82)}, rotate = 180,fill={rgb, 255:red, 0;green,0;blue,0},line width=0.08,draw opacity=0] (5.36,-2.57) -- (0,0) -- (5.36,2.57) -- (3.56,0) -- cycle;
        \draw (410.8,125.49) -- (333.33,125.49);
        \draw [shift={(413.8,125.49)}, rotate = 180,fill={rgb, 255:red, 0;green,0;blue,0},line width=0.08,draw opacity=0] (5.36,-2.57) -- (0,0) -- (5.36,2.57) -- (3.56,0) -- cycle;
        \draw (213.66,116.75) .. controls (207.11,94.3) and (227.43,92.79) .. (221.96,113.69);
        \draw [shift={(221.14,116.41)}, rotate = 288.7,fill={rgb, 255:red, 0;green,0;blue,0},line width=0.08,draw opacity=0] (5.36,-2.57) -- (0,0) -- (5.36,2.57) -- (3.56,0) -- cycle;
        \draw (107.51,126.14) -- (94.04,126.14);
        \draw [shift={(110.51,126.14)}, rotate = 180,fill={rgb, 255:red, 0;green,0;blue,0},line width=0.08,draw opacity=0] (5.36,-2.57) -- (0,0) -- (5.36,2.57) -- (3.56,0) -- cycle;
        \draw (312.99,116.75) .. controls (306.44,94.3) and (326.76,92.79) .. (321.29,113.69);
        \draw [shift={(320.47,116.41)}, rotate = 288.7,fill={rgb, 255:red, 0;green,0;blue,0},line width=0.08,draw opacity=0] (5.36,-2.57) -- (0,0) -- (5.36,2.57) -- (3.56,0) -- cycle;
        \draw (140.22,196.82) -- (119.5,196.7) -- (119.62,178.52);
        \draw [shift={(119.64,175.52)}, rotate = 90.39,fill={rgb, 255:red, 0;green,0;blue,0},line width=0.08,draw opacity=0] (5.36,-2.57) -- (0,0) -- (5.36,2.57) -- (3.56,0) -- cycle;
        \draw (268.28,196.82) -- (161.78,196.82);
        \draw [shift={(158.78,196.82)}, rotate = 360,fill={rgb, 255:red, 0;green,0;blue,0},line width=0.08,draw opacity=0] (5.36,-2.57) -- (0,0) -- (5.36,2.57) -- (3.56,0) -- cycle;
        \draw (119.64,156.96) -- (119.77,138.42);
        \draw [shift={(119.79,135.42)}, rotate = 90.39,fill={rgb, 255:red, 0;green,0;blue,0},line width=0.08,draw opacity=0] (5.36,-2.57) -- (0,0) -- (5.36,2.57) -- (3.56,0) -- cycle;
        \draw (110.95,166.16) -- (72.45,166.16);
        \draw [shift={(69.45,166.16)}, rotate = 360,fill={rgb, 255:red, 0;green,0;blue,0},line width=0.08,draw opacity=0] (5.36,-2.57) -- (0,0) -- (5.36,2.57) -- (3.56,0) -- cycle;
        \draw   (48.37,166.16) .. controls (48.37,160.34) and (53.09,155.62) .. (58.91,155.62) .. controls (64.73,155.62) and (69.45,160.34) .. (69.45,166.16) .. controls (69.45,171.98) and (64.73,176.69) .. (58.91,176.69) .. controls (53.09,176.69) and (48.37,171.98) .. (48.37,166.16) -- cycle ;
        \draw (54.32,156.75) .. controls (51.53,141.43) and (64.46,141.02) .. (62.45,153.6);
        \draw [shift={(61.81,156.41)}, rotate = 286.4,fill={rgb, 255:red, 0;green,0;blue,0},line width=0.08,draw opacity=0] (5.36,-2.57) -- (0,0) -- (5.36,2.57) -- (3.56,0) -- cycle;
        \draw   (207.97,125.82) .. controls (207.97,120.7) and (212.12,116.54) .. (217.25,116.54) .. controls (222.37,116.54) and (226.53,120.7) .. (226.53,125.82) .. controls (226.53,130.95) and (222.37,135.1) .. (217.25,135.1) .. controls (212.12,135.1) and (207.97,130.95) .. (207.97,125.82) -- cycle ;
        \draw   (110.51,126.14) .. controls (110.51,121.02) and (114.66,116.86) .. (119.79,116.86) .. controls (124.91,116.86) and (129.07,121.02) .. (129.07,126.14) .. controls (129.07,131.27) and (124.91,135.42) .. (119.79,135.42) .. controls (114.66,135.42) and (110.51,131.27) .. (110.51,126.14) -- cycle ;
        \draw   (140.22,196.82) .. controls (140.22,191.7) and (144.38,187.54) .. (149.5,187.54) .. controls (154.63,187.54) and (158.78,191.7) .. (158.78,196.82) .. controls (158.78,201.95) and (154.63,206.1) .. (149.5,206.1) .. controls (144.38,206.1) and (140.22,201.95) .. (140.22,196.82) -- cycle ;
        \draw   (110.36,166.24) .. controls (110.36,161.12) and (114.52,156.96) .. (119.64,156.96) .. controls (124.77,156.96) and (128.92,161.12) .. (128.92,166.24) .. controls (128.92,171.37) and (124.77,175.52) .. (119.64,175.52) .. controls (114.52,175.52) and (110.36,171.37) .. (110.36,166.24) -- cycle ;
        \draw (309.66,237.97) -- (309.79,206.42);
        \draw [shift={(309.64,240.96)}, rotate = 270.24,fill={rgb, 255:red, 0;green,0;blue,0},line width=0.08,draw opacity=0] (5.36,-2.57) -- (0,0) -- (5.36,2.57) -- (3.56,0) -- cycle;
        \draw   (369.85,250.69) .. controls (369.85,245.57) and (374.01,241.41) .. (379.13,241.41) .. controls (384.25,241.41) and (388.41,245.57) .. (388.41,250.69) .. controls (388.41,255.82) and (384.25,259.97) .. (379.13,259.97) .. controls (374.01,259.97) and (369.85,255.82) .. (369.85,250.69) -- cycle ;
        \draw   (367.18,250.69) .. controls (367.18,244.09) and (372.53,238.75) .. (379.13,238.75) .. controls (385.73,238.75) and (391.08,244.09) .. (391.08,250.69) .. controls (391.08,257.29) and (385.73,262.64) .. (379.13,262.64) .. controls (372.53,262.64) and (367.18,257.29) .. (367.18,250.69) -- cycle ;
        \draw (389.73,256.79) .. controls (415.56,262.3) and (415.48,244.46) .. (393.07,246.89);
        \draw [shift={(390.15,247.31)}, rotate = 350.19,fill={rgb, 255:red, 0;green,0;blue,0},line width=0.08,draw opacity=0] (5.36,-2.57) -- (0,0) -- (5.36,2.57) -- (3.56,0) -- cycle;
        \draw   (300.36,250.24) .. controls (300.36,245.12) and (304.52,240.96) .. (309.64,240.96) .. controls (314.77,240.96) and (318.92,245.12) .. (318.92,250.24) .. controls (318.92,255.37) and (314.77,259.52) .. (309.64,259.52) .. controls (304.52,259.52) and (300.36,255.37) .. (300.36,250.24) -- cycle ;
        \draw (364.42,250.24) -- (318.92,250.24);
        \draw [shift={(367.42,250.24)}, rotate = 180,fill={rgb, 255:red, 0;green,0;blue,0},line width=0.08,draw opacity=0] (5.36,-2.57) -- (0,0) -- (5.36,2.57) -- (3.56,0) -- cycle;
        \draw (301.33,254.37) .. controls (282.01,258.73) and (284.25,245.03) .. (297.45,247.05);
        \draw [shift={(300.36,247.71)}, rotate = 196.59,fill={rgb, 255:red, 0;green,0;blue,0},line width=0.08,draw opacity=0] (5.36,-2.57) -- (0,0) -- (5.36,2.57) -- (3.56,0) -- cycle;

        \draw (148.33,110.61) node [anchor=north west,inner sep=0.75pt,font=\small,align=left] {$v_{\text{go}} \cdot \ell $};
        \draw (212.67,120.27) node [anchor=north west,inner sep=0.75pt,font=\normalsize,align=left] {$\ell $};
        \draw (413.46,116.94) node [anchor=north west,inner sep=0.75pt,font=\normalsize,align=left] {$(q,\ell ,s)$};
        \draw (403.19,188.22) node [anchor=north west,inner sep=0.75pt,font=\normalsize,align=left] {$(q,\ell ,s,m)$};
        \draw (439.82,160.37) node [anchor=north west,inner sep=0.75pt,font=\small,align=left] {$u_{1}^{\forall } \cdot m$};
        \draw (268.79,188.27) node [anchor=north west,inner sep=0.75pt,font=\normalsize,align=left] {$(q,\ell ,s,m,s')$};
        \draw (361.22,177.7) node [anchor=north west,inner sep=0.75pt,font=\small,align=left] {$u_{2}^{\forall } \cdot s'$};
        \draw (439.86,142.33) node [anchor=north west,inner sep=0.75pt,font=\small,align=left] {$u_{1}^{\exists } \cdot m$};
        \draw (360.79,160.07) node [anchor=north west,inner sep=0.75pt,font=\small,align=left] {$u_{2}^{\exists } \cdot s'$};
        \draw (299.79,117.61) node [anchor=north west,inner sep=0.75pt,font=\normalsize,align=left] {$(\ell ,s)$};
        \draw (247,114.27) node [anchor=north west,inner sep=0.75pt,font=\small,align=left] {$v_{\ell } \cdot s$};
        \draw (193.64,70.33) node [anchor=north west,inner sep=0.75pt,font=\footnotesize,align=left] {\begin{minipage}[lt]{37.22pt}\setlength\topsep{0pt}
        \begin{center}
        $v_{k} \cdot t$\\for $k\ \neq \ell $
        \end{center}

        \end{minipage}};
        \draw (348.33,111.94) node [anchor=north west,inner sep=0.75pt,font=\small,align=left] {$v_{n+1} \cdot s$};
        \draw (292.97,70.33) node [anchor=north west,inner sep=0.75pt,font=\footnotesize,align=left] {\begin{minipage}[lt]{35.71pt}\setlength\topsep{0pt}
        \begin{center}
        $v_{k} \cdot t$\\for $k\leq n$
        \end{center}

        \end{minipage}};
        \draw (167.02,200.7) node [anchor=north west,inner sep=0.75pt,font=\footnotesize,align=left] {\begin{minipage}[lt]{68.85pt}\setlength\topsep{0pt}
        \begin{center}
        $u_{3}^{\forall } \cdot q'$ with\\$(q,s,q',s',m) \in \delta $
        \end{center}

        \end{minipage}};
        \draw (189.1,175.9) node [anchor=north west,inner sep=0.75pt,font=\small,align=left] {$u_{3}^{\exists } \cdot q'$ or};
        \draw (100.17,225.3) node [anchor=north west,inner sep=0.75pt,font=\small,align=left] {$ $};
        \draw (121.64,180.52) node [anchor=north west,inner sep=0.75pt,font=\small,align=left] {$r_{E}^{\alpha }$};
        \draw (121.43,139.93) node [anchor=north west,inner sep=0.75pt,font=\small,align=left] {$r_{A}^{\text{go}}$};
        \draw (80.67,150.45) node [anchor=north west,inner sep=0.75pt,font=\normalsize,align=left] {$\bot _{E}$};
        \draw (51.8,157.07) node [anchor=north west,inner sep=0.75pt,font=\normalsize,align=left] {$q'_{\sharp }$};
        \draw (49.67,127.78) node [anchor=north west,inner sep=0.75pt,font=\normalsize,align=left] {$\bot _{E}$};
        \draw (306.5,256.97) node [anchor=north west,inner sep=0.75pt,font=\small,align=left] {$ $};
        \draw (313.02,212.27) node [anchor=north west,inner sep=0.75pt,font=\footnotesize,align=left] {$u_{3}^{\forall } \cdot q'$ with $(q,s,q',s',m) \notin \delta $};
        \draw (372.13,245.73) node [anchor=north west,inner sep=0.75pt,font=\normalsize,align=left] {$q_{\sharp }$};
        \draw (409.33,244.12) node [anchor=north west,inner sep=0.75pt,font=\normalsize,align=left] {$\bot _{E}$};
        \draw (277.55,246.97) node [anchor=north west,inner sep=0.75pt,font=\small,align=left] {$*$};
        \draw (330,235.12) node [anchor=north west,inner sep=0.75pt,font=\normalsize,align=left] {$\bot _{E}$};
        \draw (113.53,120.93) node [anchor=north west,inner sep=0.75pt,font=\normalsize,align=left] {$q_{\dagger }$};
    \end{tikzpicture}
    \caption{The \DBW{} accepting $C_{\bot_E}^{\text{\OK}}$.}
    \label{fig:exptime-hard-dbw-E-OK}
\end{figure}

\begin{figure}
    \centering
    \begin{tikzpicture}[x=0.75pt,y=0.75pt,yscale=-1]
        \draw   (169.45,128.82) .. controls (169.45,123.75) and (173.56,119.63) .. (178.64,119.63) -- (206.43,119.63) .. controls (211.51,119.63) and (215.62,123.75) .. (215.62,128.82) -- (215.62,128.82) .. controls (215.62,133.9) and (211.51,138.01) .. (206.43,138.01) -- (178.64,138.01) .. controls (173.56,138.01) and (169.45,133.9) .. (169.45,128.82) -- cycle ;
        \draw (187.32,119.97) .. controls (181.25,93.62) and (201.77,90.57) .. (195.47,117.09);
        \draw [shift={(194.81,119.63)}, rotate = 285.77,fill={rgb, 255:red, 0;green,0;blue,0},line width=0.08,draw opacity=0] (5.36,-2.57) -- (0,0) -- (5.36,2.57) -- (3.56,0) -- cycle;
        \draw (270.09,128.82) -- (215.62,128.82);
        \draw [shift={(273.09,128.82)}, rotate = 180,fill={rgb, 255:red, 0;green,0;blue,0},line width=0.08,draw opacity=0] (5.36,-2.57) -- (0,0) -- (5.36,2.57) -- (3.56,0) -- cycle;
        \draw (384.06,128.82) -- (338.56,128.82);
        \draw [shift={(387.06,128.82)}, rotate = 180,fill={rgb, 255:red, 0;green,0;blue,0},line width=0.08,draw opacity=0] (5.36,-2.57) -- (0,0) -- (5.36,2.57) -- (3.56,0) -- cycle;
        \draw (510.23,128.82) -- (469.33,128.82);
        \draw [shift={(513.23,128.82)}, rotate = 180,fill={rgb, 255:red, 0;green,0;blue,0},line width=0.08,draw opacity=0] (5.36,-2.57) -- (0,0) -- (5.36,2.57) -- (3.56,0) -- cycle;
        \draw   (273.09,128.82) .. controls (273.09,123.75) and (277.21,119.63) .. (282.28,119.63) -- (329.37,119.63) .. controls (334.45,119.63) and (338.56,123.75) .. (338.56,128.82) -- (338.56,128.82) .. controls (338.56,133.9) and (334.45,138.01) .. (329.37,138.01) -- (282.28,138.01) .. controls (277.21,138.01) and (273.09,133.9) .. (273.09,128.82) -- cycle ;
        \draw   (387.06,128.82) .. controls (387.06,123.75) and (391.18,119.63) .. (396.26,119.63) -- (460.14,119.63) .. controls (465.22,119.63) and (469.33,123.75) .. (469.33,128.82) -- (469.33,128.82) .. controls (469.33,133.9) and (465.22,138.01) .. (460.14,138.01) -- (396.26,138.01) .. controls (391.18,138.01) and (387.06,133.9) .. (387.06,128.82) -- cycle ;
        \draw   (513.23,128.82) .. controls (513.23,123.75) and (517.34,119.63) .. (522.42,119.63) -- (560.45,119.63) .. controls (565.53,119.63) and (569.64,123.75) .. (569.64,128.82) -- (569.64,128.82) .. controls (569.64,133.9) and (565.53,138.01) .. (560.45,138.01) -- (522.42,138.01) .. controls (517.34,138.01) and (513.23,133.9) .. (513.23,128.82) -- cycle ;
        \draw   (80.52,128.43) .. controls (80.52,123.3) and (84.67,119.15) .. (89.8,119.15) .. controls (94.92,119.15) and (99.08,123.3) .. (99.08,128.43) .. controls (99.08,133.55) and (94.92,137.71) .. (89.8,137.71) .. controls (84.67,137.71) and (80.52,133.55) .. (80.52,128.43) -- cycle ;
        \draw   (183.07,199.74) .. controls (183.07,194.62) and (187.23,190.46) .. (192.35,190.46) .. controls (197.47,190.46) and (201.63,194.62) .. (201.63,199.74) .. controls (201.63,204.87) and (197.47,209.02) .. (192.35,209.02) .. controls (187.23,209.02) and (183.07,204.87) .. (183.07,199.74) -- cycle ;
        \draw (193.29,187.41) -- (193.45,137.95);
        \draw [shift={(193.28,190.41)}, rotate = 270.18,fill={rgb, 255:red, 0;green,0;blue,0},line width=0.08,draw opacity=0] (5.36,-2.57) -- (0,0) -- (5.36,2.57) -- (3.56,0) -- cycle;
        \draw (429.92,138.6) -- (429.92,199.8) -- (204.63,199.74);
        \draw [shift={(201.63,199.74)}, rotate = 0.01,fill={rgb, 255:red, 0;green,0;blue,0},line width=0.08,draw opacity=0] (5.36,-2.57) -- (0,0) -- (5.36,2.57) -- (3.56,0) -- cycle;
        \draw (102.46,128.81) -- (169.45,128.82);
        \draw [shift={(99.46,128.81)}, rotate = 0.01,fill={rgb, 255:red, 0;green,0;blue,0},line width=0.08,draw opacity=0] (5.36,-2.57) -- (0,0) -- (5.36,2.57) -- (3.56,0) -- cycle;
        \draw   (111.52,199.83) .. controls (111.52,194.7) and (115.67,190.55) .. (120.8,190.55) .. controls (125.92,190.55) and (130.08,194.7) .. (130.08,199.83) .. controls (130.08,204.95) and (125.92,209.11) .. (120.8,209.11) .. controls (115.67,209.11) and (111.52,204.95) .. (111.52,199.83) -- cycle ;
        \draw   (108.85,199.83) .. controls (108.85,193.23) and (114.2,187.88) .. (120.8,187.88) .. controls (127.39,187.88) and (132.74,193.23) .. (132.74,199.83) .. controls (132.74,206.42) and (127.39,211.77) .. (120.8,211.77) .. controls (114.2,211.77) and (108.85,206.42) .. (108.85,199.83) -- cycle ;
        \draw (95.14,117.29) .. controls (96.31,107.22) and (93.79,101.91) .. (90.89,100.96) .. controls (86.93,99.66) and (82.25,106.45) .. (85.19,120.35);
        \draw [shift={(94.68,120.42)}, rotate = 279.94,fill={rgb, 255:red, 0;green,0;blue,0},line width=0.08,draw opacity=0] (5.36,-2.57) -- (0,0) -- (5.36,2.57) -- (3.56,0) -- cycle;
        \draw (136.08,199.73) -- (183.07,199.74);
        \draw [shift={(133.08,199.73)}, rotate = 0.02,fill={rgb, 255:red, 0;green,0;blue,0},line width=0.08,draw opacity=0] (5.36,-2.57) -- (0,0) -- (5.36,2.57) -- (3.56,0) -- cycle;
        \draw (188.11,208.12) .. controls (182.55,228.15) and (199.96,228.79) .. (198.06,211.18);
        \draw [shift={(197.6,208.25)}, rotate = 78.28,fill={rgb, 255:red, 0;green,0;blue,0},line width=0.08,draw opacity=0] (5.36,-2.57) -- (0,0) -- (5.36,2.57) -- (3.56,0) -- cycle;
        \draw (125.94,186.49) .. controls (127.11,176.42) and (124.59,171.11) .. (121.69,170.16) .. controls (117.73,168.86) and (113.05,175.65) .. (115.99,189.55);
        \draw [shift={(125.48,189.62)}, rotate = 279.94,fill={rgb, 255:red, 0;green,0;blue,0},line width=0.08,draw opacity=0] (5.36,-2.57) -- (0,0) -- (5.36,2.57) -- (3.56,0) -- cycle;

        \draw (169.27,120.27) node [anchor=north west,inner sep=0.75pt,font=\normalsize,align=left] {$(q,\ell ,s)$};
        \draw (272.67,120.22) node [anchor=north west,inner sep=0.75pt,font=\normalsize,align=left] {$(q,\ell ,s,m)$};
        \draw (223.9,113.03) node [anchor=north west,inner sep=0.75pt,font=\small,align=left] {$u_{1}^{\forall } \cdot m$};
        \draw (387.27,120.27) node [anchor=north west,inner sep=0.75pt,font=\normalsize,align=left] {$(q,\ell ,s,m,s')$};
        \draw (345.7,111.03) node [anchor=north west,inner sep=0.75pt,font=\small,align=left] {$u_{2}^{\forall } \cdot s'$};
        \draw (188.3,85.83) node [anchor=north west,inner sep=0.75pt,font=\small,align=left] {$*$};
        \draw (514.87,120.47) node [anchor=north west,inner sep=0.75pt,font=\normalsize,align=left] {$(q',\ell ',t)$};
        \draw (223.67,95) node [anchor=north west,inner sep=0.75pt,font=\small,align=left] {$u_{1}^{\exists } \cdot m$};
        \draw (345.6,93.4) node [anchor=north west,inner sep=0.75pt,font=\small,align=left] {$u_{2}^{\exists } \cdot s'$};
        \draw (473.1,110.63) node [anchor=north west,inner sep=0.75pt,font=\small,align=left] {$u_{3}^{\forall } \cdot q'$};
        \draw (82.8,118.62) node [anchor=north west,inner sep=0.75pt,font=\normalsize,align=left] {$q'_{\sharp }$};
        \draw (197.7,141.63) node [anchor=north west,inner sep=0.75pt,font=\small,align=left] {$v_{\text{go}} \cdot \ell '$ for $\ell '\neq \ell $ (if $\ell =k$)};
        \draw (197.5,157.23) node [anchor=north west,inner sep=0.75pt,font=\small,align=left] {$v_{k} \cdot s'$ for $s'\neq s$};
        \draw (197.9,172.83) node [anchor=north west,inner sep=0.75pt,font=\small,align=left] {$v_{n+1} \cdot q'$ for $q'\neq q$ (if $\ell =k$)};
        \draw (433.9,156.03) node [anchor=north west,inner sep=0.75pt,font=\small,align=left] {$u_{3}^{\exists } \cdot q'$ such that $\ell =k$\\and $(q,s,q',s',m) \notin \delta $};
        \draw (103.6,109.72) node [anchor=north west,inner sep=0.75pt,font=\footnotesize,align=left] {$r_{E}^{\alpha } \cdot r_{A}^{\alpha } \cdot \bot _{A}^{k}$};
        \draw (81.9,83.88) node [anchor=north west,inner sep=0.75pt,font=\small,align=left] {$\bot _{A}^{k}$};
        \draw (113.8,194.02) node [anchor=north west,inner sep=0.75pt,font=\normalsize,align=left] {$q_{\sharp }$};
        \draw (429.03,75.23) node [anchor=north west,inner sep=0.75pt,font=\small,align=left] {$u_{3}^{\exists } \cdot q'$ such that $\ell \neq k$\\or $(q,s,q',s',m) \in \delta $};
        \draw (150.33,178.33) node [anchor=north west,inner sep=0.75pt,font=\normalsize,align=left] {$\bot _{A}^{k}$};
        \draw (111.7,152.08) node [anchor=north west,inner sep=0.75pt,font=\small,align=left] {$\bot _{A}^{k}$};
        \draw (188.9,223.83) node [anchor=north west,inner sep=0.75pt,font=\small,align=left] {$*$};
    \end{tikzpicture}
    \caption{The \DBW{} accepting $C_{\bot_A^k}^{\text{\OK}}$, where $\ell' =
\begin{cases}
\ell + 1 & \text{if } m = R,\\
\ell - 1 & \text{if } m = L,
\end{cases}$, $t =
\begin{cases}
s' & \text{if } \ell = k,\\
s  & \text{if } \ell \neq k.
\end{cases}$}
    \label{fig:exptime-hard-dbw-A_k-OK}
\end{figure}

The \DBW{} $\aut{A}_{\bot_{A}^k}^{\text{\OK}}$ that accepts $C_{\bot_{A}^k}^{\text{\OK}}$ is partially depicted in \Cref{fig:exptime-hard-dbw-A_k-OK}. Again for simplicity, we suppose that the words accepted by the automaton belong to $\plays(v_{\text{go}})$. The construction is slightly different from that of $\aut{A}^{\text{\OK}}_{\bot_{E}}$: the states no longer store the fragments of the read declared configurations; instead, they indicate their \emph{expected} $k$-projections. Therefore, the initial state is now the expected $(q_0,1,w_k)$. If $(q,\ell,s)$ is the expected current $k$-projection, the loop labeled $*$ on this state means that the automaton reads symbol by symbol the current declared configuration~$\mathfrak{c}$, until entering the transition phase if Adam considers $\mathfrak{c}$ correct with respect to $(q,\ell,s)$ (see items~\ref{item:E-k-validity-first-config} and~\ref{item:E-k-validity-k-projection-config} of $(E,k)$-validity). Otherwise, $\aut{A}_{\bot_{A}^k}^{\text{\OK}}$ moves to a gadget where the word is accepted if it eventually reads $\bot_A^k$.\footnote{\cref{footnote:continue-playing} can be repeated here for Adam.} If Adam considers $\mathfrak{c}$ correct, the automaton then stores the information of the transition phase up to reading $u_3^\exists \cdot q'$ or $u_3^\forall \cdot q'$. If Eve was playing, $\ell = k$, and the transition she declared did not exist, then it enters the same gadget as before. Otherwise, it moves to the next expected $k$-projection $(q',\ell',t)$ as described in item~\ref{item:E-k-validity-k-projection-config}  of $(E,k)$-validity, and the scenario is repeated for $(q',\ell',t)$.

The \DBW{} $\aut{A}_{\bot_{A}^k}^{\text{\KO}}$ for $C_{\bot_{A}^k}^{\text{\KO}}$ is similar to $\aut{A}_{\bot_{A}^k}^{\text{\OK}}$ except that its unique final state is $q'_\sharp$. Again, both automata have polynomial size. We are now able to explain why Adam needs several sink vertices $\bot^k_A$. Assume that he only uses one $\bot_A$ and one $C_{\bot_A}^{\text{\OK}}$ as for Eve. Then he would need to store the full content of the tape to treat all $k \in \{1,\dots,n\}$, leading to an exponential state space for $\aut{A}_{\bot_{A}}^{\text{\OK}}$. And if two sets $C_{\bot_A^k}^{\text{\OK}}$ and $C_{\bot_A^{k'}}^{\text{\OK}}$ with $k \neq k'$ share the same $\bot_A$, then a play $\pi$ both $(E,k)$-invalid and $(E,k')$-invalid would belong to their intersection, while a partition is required.

\section{Constrained NE Existence Problem is EXPTIME-Complete}
\label{app:exptime-ne-constrained}

In this section, we prove that the constrained NE existence problem is \exptimeComplete{} for \X{} games. 

\begin{theorem}\label{theorem:ne-exptime-complete}
    The constrained NE existence problem is \exptimeComplete{} for \X{} games. The hardness holds even for two players.
\end{theorem}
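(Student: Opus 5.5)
For the upper bound, I will adapt the Prover game $\game_{\prover,\challenger}$ of \cref{prop:SPEcorrespondence}. The difference from SPEs is that an NE only needs to resist deviations from the initial outcome; after a deviation the profile is only required to \emph{punish} the deviator, so there are no nested constraints. The simplest route is to keep the same state space but cap the number of deviations.
\begin{itemize}
\item Along the main play, \prover{} announces $\boutelement_0$ and proposes moves. \challenger{} may accept, or make the owner $j$ of the current vertex deviate.
\item After the first deviation, no further deviations are allowed. Concretely, I would add a flag component, or equivalently make all later Challenger states have a single successor (accept).
\item Once player~$j$ has deviated, \prover{} must produce the continuation alone. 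Her objective is then that the resulting play does not satisfy any $\varphi^j_{\outelement'}$ with $\outelement_{0,j} \R_j \outelement'$, i.e.\ she needs $\bigvee_{\outelement' : \outelement_{0,j}\notR_j \outelement'} \varphi^j_{\outelement'}$.
\item Prover's full objective is: either no deviation occurs and the announced profile holds (Formula~\ref{spe:prover-cond1}), or the single deviation by $j$ leads to the formula above.
\end{itemize}

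Deviations by $j$ at later vertices of $j$ must still be handled, because the deviator follows a strategy, not a single move. I will therefore keep the deviator $j$ in the state and let \challenger{} continue to choose at $V_j$ vertices, while \prover{} chooses for the others. This is precisely a two-player game in which coalition $-j$ opposes player~$j$ with a fixed target. The correspondence proof then follows the first half of \cref{prop:SPEcorrespondence}, restricted to the single history along the outcome:
\begin{itemize}
\item From an NE, Prover plays $\bsigma$ and punishes with $\bsigma_{-j}$.
\item From a winning Prover strategy, the profile is the outcome plus Prover's punishing choices after the first deviation point of $j$.
\end{itemize}
The state space is a product of $V$, a bounded tag in $\{\bot\}\cup\players$, $\boutelement_0$, and the DPW state tuples. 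It is exponential only through $Q^{\ell n}$, and the formula size stays polynomial. \cref{thm:BBgames} then gives \exptime{}, iterating over the $\ell^n$ candidate profiles as in the proof of \cref{prop:membership}.

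For hardness, I will reuse the game of \cref{prop:SPE-exptime-hard-fixed-players} with the same constraint $C_{\text{acc}}$. If $M$ accepts $w$, the canonical profile is an SPE, hence an NE. For the converse, the SPE proof already only uses that $\bsigma$ is an NE from $v_{\text{go}}$ in the whole game, so it transfers verbatim. Along an invalid outcome, the player who detects the first invalid prefix gains by deviating to the correct sink. Along a valid accepting outcome when $\forall$ wins, Adam gains by switching to $\sigma_A^{\text{can}}$. Two-player hardness follows.

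The main obstacle is the punishment phase of the upper bound. I must confirm that fixing the deviator $j$ after the first deviation is sound, i.e.\ that a deviation by $j$ is detected at its first divergence point and Prover's later choices for the other players can depend on the full history. This holds because Prover's strategy is history-dependent. I also expect Formula~\ref{spe:prover-cond2}-style bookkeeping of $D$ to become unnecessary, since $D=\{\outelement_{0,j}\}$ is fixed.
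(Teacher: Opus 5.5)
Your proposal is correct and follows the paper's proof essentially step by step. For membership, the paper uses exactly this simplified Prover game: after the first deviation by player~$j$, the state records only $j$ and $j$'s announced payoff $\outelement_j$; \challenger{} owns $V_j$ and \prover{} owns the other vertices. \prover{} must ensure either that no deviation occurs and the announced profile is realized, or, after a deviation, that $\bigvee_{\outelement' \in \outset_j,\, \outelement_j \notR_j \outelement'} \varphi^j_{\outelement'}$ holds. The game is solved with \cref{thm:BBgames} for each admissible profile. For hardness, the paper reuses the reduction of \cref{prop:SPE-exptime-hard-fixed-players}, noting that its converse direction only refutes the NE property from $v_{\text{go}}$.

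Discard your first formulation, in which every \challenger{} state after a deviation has the single successor ``accept''. It lets \prover{} choose the deviator's moves and is unsound. The version you settle on, with \challenger{} keeping control of $V_j$, is correct and is the paper's.
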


The \exptimeHard{}ness is obtained with the same reduction used for the constrained SPE existence problem (see \Cref{section:exptime-hardness}). Indeed, we proved that if the ATM $M$ accepts the input word $w$, then there exists an SPE (thus an NE) in $\game$ from $v_{\text{go}}$; and if $M$ does not accept $w$, then there is no NE in $\game$ from $v_{\text{go}}$. Therefore, we here focus on the \exptime{}-membership:

\begin{proposition}\label{prop:prover-game-ne-exptime}
    The constrained NE existence problem is in \exptime{} for \X{} games.
\end{proposition}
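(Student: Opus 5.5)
We adapt the Prover game of Section~\ref{section:exptime-membership} to NEs. The key simplification is that an NE only has to resist deviations from the initial vertex along its own outcome, so after a first deviation by player~$j$ the remaining players only need to punish~$j$; no further consistency is required. We therefore build a Prover game $\game^{\text{NE}}_{\prover,\challenger}$ with two modes.

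\textbf{Announcement mode.} This works exactly as before with $D=\varnothing$. Prover suggests edges along the candidate outcome, the announced profile $\boutelement$ is fixed, and Challenger may accept or deviate on behalf of the owner $j$ of the current vertex.

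\textbf{Punishment mode.} At the first deviation, the state records the deviator $j$ and the single payoff $\outelement_j$ that was announced. From then on, no new payoff profile is ever announced. Prover controls the vertices of all players other than $j$, and Challenger controls the vertices of $V_j$. States still carry the tuples $\bar q^1,\dots,\bar q^n$ of \DPW{} states. Because the edge-suggestion/acceptance protocol becomes pointless in punishment mode, we can replace it by a plain arena copy in which each vertex owner moves directly.

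\textbf{Prover's objective.} The Emerson-Lei formula is the disjunction of two parts. The first part says the play never enters punishment mode and $\bigwedge_{i}\bigwedge_{\outelement}(\infocc(S_{\text{ann}}(\outelement))\Rightarrow \varphi^i_\outelement)$ holds. The second part is $\bigvee_{j}\bigvee_{d\in\outset_j}\bigl(\infocc(S_{\text{pun}}(j,d)) \wedge \bigvee_{\outelement'\in\outset_j,\, d\notR_j \outelement'}\varphi^j_{\outelement'}\bigr)$, where $S_{\text{pun}}(j,d)$ is the set of punishment states with deviator $j$ and stored payoff $d$. Since $D$ has been collapsed to a single stored payoff, there are no Deviator Instability or repeated-deviation disjuncts.

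\textbf{Correctness.} The analogue of \cref{prop:SPEcorrespondence} follows the two halves of its proof, but it is simpler. Suppose an NE $\bsigma$ exists. Prover follows $\outcomefrom{\bsigma}{v_0}$, announcing its payoff profile. After a deviation by $j$ at history $hv$, Prover plays $\bsigma_{-j}$ on the continuation. This is allowed because an NE must resist every deviating strategy $\tau_j$, and hence every deviating play reachable against $\bsigma_{-j}$. Every resulting play then has a $j$-payoff $\outelement'$ with $\outelement_j\notR_j\outelement'$. Conversely, suppose Prover has a winning strategy. Prover's choices in announcement mode define the outcome, and Prover's choices in punishment mode, keyed by the full history, define $\bsigma_{-j}$ after a deviation. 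The winning condition then gives exactly the NE inequalities for every $\tau_j$.

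\textbf{Complexity.} The state space is bounded by $|E|\cdot \ell^n\cdot n\cdot \ell\cdot Q^{\ell n}$ up to a constant. The number of targets is $\bigO(n\ell d)$ and the formula size is polynomial. Combining \Cref{prop:ParityToBB} and \Cref{thm:BBgames}, and enumerating the at most $\ell^n$ profiles $\boutelement$ that satisfy the constraint, gives an exponential-time algorithm.

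The main obstacle is the soundness of the first-deviation encoding. We must check that resolving the game with Challenger acting as a single adversary for $j$ really matches the universal quantification over $\tau_j$ in the NE definition, and that Prover's punishment may depend on the history. The \DPW{} states must be carried across the mode switch so that $\varphi^j_{\outelement'}$ is evaluated on the whole play $h\pi'$ and not only on its suffix. We handle this by never resetting the $\bar q^i$ components.
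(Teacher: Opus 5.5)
Your proposal is correct and is essentially the paper's own argument: the same simplified Prover game with an announcement phase and an absorbing punishment phase of states $(v,\outelement_j,j,\dots)$, where \prover{} controls the vertices outside $V_j$ and \challenger{} those of $V_j$; the same two-disjunct Emerson-Lei objective (stable outcome, or the deviator's payoff is not preferred to the stored $\outelement_j$); the same correspondence argument; and the same complexity count via \Cref{prop:ParityToBB} and \Cref{thm:BBgames} with enumeration of the admissible profiles. The only difference is cosmetic: you keep all tuples $\bar q^1,\dots,\bar q^n$ in punishment mode, whereas the paper keeps only the deviator's tuple $\bar q^j$, and this does not affect the exponential bound.
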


We use a Prover game similar to the one in \cref{section:exptime-membership}, but much simpler. It is again a two-player zero-sum Emerson-Lei game, where the state space and objective are simplified; we explain the changes made below. The game starts at $(v_0,\boutelement_0,\varnothing)$, meaning that we are at vertex $v_0$, $\boutelement_0$ is the announced payoff profile, and no player deviated yet. At each step from $(v,\boutelement,\varnothing)$, Prover \prover{} suggests an edge $(v,v') \in E$, from which Challenger \challenger{} can either accept, leading to $(v',\boutelement,\varnothing)$, or make player~$i$ such that $v \in V_i$ deviate to $(v,v'') \in E$, leading to $(v'',\outelement_i,i)$. From a state of the form $(v,\outelement,i)$, \prover{} now acts for the coalition $-i$ retaliating against player~$i$, while \challenger{} acts for player~$i$, trying to achieve a payoff preferred to $\outelement$. The component equal to $\varnothing$ or some player~$i$ is called the \emph{deviating component}. With each state of the Prover game, for all $i \in \players$, we also associate tuples of states $\bar{q}^i = (q^i_1,\dots,q^i_{\ell_i})$ denoting the current state vector for player $i$ such that $q^i_j$ is a state of $\aut{A}^i_{\outelement_j}$, for all $\outelement_j \in \outset_i$.

\subparagraph*{Arena}
The state space $S$ of $\game_{\prover,\challenger}$ consists of the following vertices, connected by some transitions:
\begin{itemize}
    \item $s = \left(v, \boutelement, \varnothing, \bar{q}^1, \dots, \bar{q}^n \right)$, owned by $\prover$, where $v \in V$ is the current vertex and $\boutelement$ is the announced payoff profile. From $s$, Prover can move to $\left((v,v'), \boutelement, \varnothing, \bar{q}^1, \dots, \bar{q}^n \right)$, with $(v,v') \in E$;
    \item $s = \left((v,v'), \boutelement, \varnothing, \bar{q}^1, \dots, \bar{q}^n \right)$, owned by $\challenger$, where $(v,v') \in E$. From $s$, Challenger can move to $\left(v', \boutelement, \varnothing, \bar{q}'^1, \dots, \bar{q}'^n \right)$ or $\left(v'', \outelement_i, i, \bar{q}'^i\right)$, with $v''$ such that $(v,v'') \in E$, $i \in \players$ such that $v \in V_i$, and where all automata update to $\bar{q}'^j = \bar\delta_j(\bar q^j,v)$, for all $j \in \players$;
    \item $s = \left(v, \outelement_i, i, \bar{q}^i \right)$, owned by $\challenger$ if $v \in V_i$, and by $\prover$ if $v \not\in V_i$. The player owning $s$ can move to $\left(v', \outelement_i, i, \bar{q}'^i \right)$ such that $(v,v') \in E$, and where $\bar{q}^i$ is updated as above.
\end{itemize}

\subparagraph*{Prover's Objective}
To formally express the Emerson-Lei objective for Prover in $\game_{\prover,\challenger}$, we first define specific subsets of states to be tracked by the $\infocc$ and $\finocc$ operators during a play. Let $i$ be a player and $\outelement$ be a payoff in $\outset_1 \cup \dots \cup \outset_n$:
\begin{itemize}
    \item $S(\varnothing)$: The set of all states where the deviating component is $\varnothing$.
    \item $S(i)$: The set of all states where the deviating component is $i \in \players$.
    \item $S_{\text{ann}}(\outelement)$: The set of all states where $\outelement$ is a component of an announced payoff profile $\boutelement$ (i.e., states of the form $(\dots, \boutelement, \varnothing, \dots)$ for some $\boutelement$ and $i$ such that $\outelement = \outelement_i$, or states of the form $(\dots,\outelement,i,\dots)$ for some $i$).
\end{itemize}
Note that the target sets of formulas $\varphi^i_\outelement$ used for each $\aut{A}^i_\outelement$ are reused, naturally extended to the Prover game state space (by abuse of notation, we keep the initial notation $\varphi^i_\outelement$).

The Emerson-Lei objective for Prover $\prover$ in the Prover game $\game_{\prover,\challenger}$ is described by the Boolean formula $\varphi_\prover$, defined as the disjunction of two scenarios:
\begin{enumerate}
    \item \label{ne:prover-cond1} \textbf{Stable Outcome:} No player deviates, and the play satisfies the formulas of the announced payoff profile.
    \[
    \infocc(S(\varnothing)) ~\wedge~ \bigwedge_{i \in \players} \bigwedge_{\outelement \in \outset_i} \left( \infocc(S_{\text{ann}}(\outelement)) \Longrightarrow \varphi_\outelement^i \right)
    \]

    \item \label{ne:prover-cond2} \textbf{Deviation:} A player $j$ deviates, and the deviation does not yield a payoff preferred to the announced payoff.
    \[
    \bigvee_{j\in\players} \left( \infocc(S(j)) \wedge \bigwedge_{\doutelement \in \outset_j} \left( \infocc(S_{\text{ann}}(\doutelement)) \Longrightarrow \bigvee_{\outelement' \in \outset_j, \doutelement \notR_j \outelement'} \varphi_{\outelement'}^j \right) \right)
    \]
\end{enumerate}

\subparagraph*{Correspondence with NEs} We can now show that NEs in $\game$ correspond to winning strategies for Prover in $\game_{\prover,\challenger}$. We then prove the \exptime{}-membership stated in \cref{prop:prover-game-ne-exptime}.

\begin{proposition} \label{prop:NEcorrespondence}
    Given a vertex $v_0 \in V$ and a payoff profile $\boutelement_0 \in \outset_1 \times \dots \times \outset_n$, there exists an NE $\bsigma$ whose outcome $\outcomefrom{\bsigma}{v_0}$ has payoff profile $\boutelement_0$ if and only if Prover has a winning strategy in the Prover game $\game_{\prover,\challenger}$ with Emerson-Lei objective $\varphi_\prover$ from the state $s_0 = (v_0,\boutelement_0,\varnothing,\bar q_0^1,\dots,\bar q_0^n)$ such that each $\bar q_0^j$ is the tuple of initial states of $\aut{A}_\outelement^j$, $\outelement \in \outset_j$.
\end{proposition}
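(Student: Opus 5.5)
We prove the two directions separately, mirroring the proof of \cref{prop:SPEcorrespondence}. The NE case is simpler: after the first deviation the deviating component is frozen to some player~$i$, and the play becomes a zero-sum game between Challenger (for player~$i$) and Prover (for the coalition $-i$).

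\emph{From an NE to a winning Prover strategy.} Let $\bsigma$ be an NE from $v_0$ whose outcome $\pi_0 = \outcomefrom{\bsigma}{v_0}$ has payoff profile $\boutelement_0$.
\begin{itemize}
\item Before any deviation, Prover projects the current history onto $V$, obtaining some $hv$ with $v \in V_i$, and suggests $\sigma_i(hv)$.
\item In a state $(v,\outelement_i,i,\bar q^i)$ that Prover owns (so $v \notin V_i$), Prover again follows $\bsigma_{-i}$ on the projected history.
\end{itemize}
Consider a play $\rho$ consistent with this strategy. If Challenger never deviates, the projection of $\rho$ is $\pi_0$. Hence $\infocc(S(\varnothing))$ holds, and $S_{\text{ann}}(\outelement)$ is visited infinitely often exactly for the components of $\boutelement_0$. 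Since $\pay_i(\pi_0)=\outelement_i$, each $\varphi^i_{\outelement_i}$ holds because the automata $\aut{A}^i_{\outelement_i}$ run on $\pi_0$; this gives Formula~\ref{ne:prover-cond1}. If Challenger makes player~$i$ deviate, the play stays in $S(i)$ forever and $\infocc(S_{\text{ann}}(\doutelement))$ holds only for $\doutelement=\outelement_i$. The projection $\pi'$ is the outcome of $(\tau_i,\bsigma_{-i})$ for the strategy $\tau_i$ that Challenger implicitly plays. Since $\bsigma$ is an NE, $\outelement_i \notR_i \pay_i(\pi')$, and the observer automata make $\varphi^i_{\pay_i(\pi')}$ hold, which gives Formula~\ref{ne:prover-cond2}.

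\emph{From a winning Prover strategy to an NE.} Let $\tau_\prover$ be winning from $s_0$. Define $\sigma_j(hv)$ as Prover's suggestion on the unique undeviated history projecting to $hv$, whenever $hv$ is a prefix of the main outcome. After player~$i$ first leaves the main outcome at a prefix $h$, define $\sigma_j$ for $j\neq i$ by following $\tau_\prover$ in the corresponding deviation branch. That branch is uniquely determined by $h$ and the deviation vertex; histories that deviate from the main outcome for some other reason can be defined arbitrarily.
\begin{itemize}
\item The outcome of $\bsigma$ corresponds to the play where Challenger always accepts, so Formula~\ref{ne:prover-cond1} forces its payoff profile to be $\boutelement_0$.
\item A unilateral deviation $\tau_i$ of player~$i$ yields a play in which Challenger deviates once and then plays $\tau_i$. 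Only Formula~\ref{ne:prover-cond2} can hold, with $j=i$ (the deviating component is fixed). This gives $\outelement_i \notR_i \pay_i(\pi')$, so the deviation is not profitable.
\end{itemize}

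\emph{Main obstacle.} The delicate step is the bookkeeping of uniqueness: the map from Prover-game histories consistent with $\tau_\prover$ to $V$-histories must be injective enough. In particular, we must use the fact that a deviation state is entered only at the first divergence from Prover's suggestion. This fixes the deviating player and the retaliating branch, so that $\bsigma_{-i}$ is well defined independently of $\tau_i$. A minor point is checking that the deviation step reads $v$ into the automata before moving to $v''$, so that the observer tuple $\bar q^i$ correctly tracks $\pay_i(\pi')$.

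\emph{Complexity.} \Cref{prop:prover-game-ne-exptime} then follows as in the proof of \cref{prop:membership}. The state space is polynomial in $|E|$ and exponential only in $n,\ell$ and $\log Q$, and the number of targets and the formula size are polynomial, so \Cref{thm:BBgames} applies. We enumerate the at most $\ell^n$ profiles $\boutelement$ meeting the constraint and solve one Prover game for each.
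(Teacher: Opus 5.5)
Your proposal is correct and follows the paper's proof essentially step by step. Prover replays $\bsigma$ on projected histories, switching to $\bsigma_{-i}$ once the deviating component is frozen to $i$. Conversely, $\bsigma$ is read off from $\tau_{\prover}$ along the always-accept play and along the deviation branch entered at the first divergence (this is the paper's ``unique history $g$''). The Stable Outcome formula then fixes the outcome's profile to $\boutelement_0$, and the Deviation formula for the frozen deviator rules out profitable deviations.

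One caveat is shared with the paper: a strategy $\tau_i$ whose outcome never leaves the main play is passed over silently. That case needs $\outelement_i \notR_i \outelement_i$, which holds for the strict orders in the paper's examples but is not guaranteed by Prover winning when $\R_i$ may be reflexive.
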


\begin{proof}
    Suppose that there exists an NE $\bsigma = (\sigma_i)_{i \in \players}$ from $v_0$ in $\game$ and denote by $\pi$ the outcome $\outcomefrom{\bsigma}{v}$ and by $\boutelement_0$ the payoff profile of $\pi$. Let us construct a strategy $\tau_{\prover}$ for Prover that simulates $\bsigma$ and that will be proved to be winning. Note first that any play or history in $\game_{\prover,\challenger}$ can be lifted to a play or history in $\game$ by projection to the set $V$, skipping each state of the form $((v,v'),\dots)$. The strategy $\tau_{\prover}$ is defined as follows: consider any history in $\game_{\prover,\challenger}$ starting from $s_0$, and let $hv$ be its projection onto the set $V$ of $\game$. We have that $\sigma_i(hv) = v'$ such that $v \in V_i$. Then, if the history ends in a state $s = (v,\boutelement,\varnothing,\dots)$, Prover's strategy $\tau_{\prover}$ moves to $((v,v'),\boutelement,\varnothing,\dots)$. Otherwise, the history ends in a state $s = (v,\outelement,j,\dots)$ for $j \neq i$, and $\tau_{\prover}$ moves to $(v',\outelement,j,\dots)$.

    Let $\tau_{\challenger}$ be any Challenger strategy. We analyze the resulting play $\rho = \outcomefrom{(\tau_\prover,\tau_\challenger)}{s_0}$ and show that it satisfies $\varphi_\prover$, that is, one of the two formulas~\ref{ne:prover-cond1} and~\ref{ne:prover-cond2} defined above is satisfied:
    \begin{enumerate}
        \item \emph{No deviation:} Along $\rho$, assume that Challenger always accepts the edge suggested by Prover. Then $\infocc(S(\varnothing))$ holds and $\pi$ is the projection of $\rho$ in $\game$. By hypothesis, the payoff profile of $\pi$ is equal to $\boutelement_0 = (\outelement_1,\dots,\outelement_n)$, and is thus the profile announced by Prover. Hence, $\infocc(S_{ann(\outelement)})$ exactly holds for each $\outelement \in \{\outelement_1,\dots,\outelement_n\}$. As $\boutelement$ is the payoff profile of $\pi$, each formula $\varphi^i_{\outelement_i}$ holds, for each $\outelement_i$. We conclude that Formula~\ref{ne:prover-cond1} is satisfied.

        \item \emph{Deviation:} If eventually, Challenger makes some player~$j$ deviate, it follows that $\rho$ visits infinitely often states in $S(j)$, and $\infocc(S(j))$ holds. As $\bsigma$ is an NE and by definition of $\tau_\prover$, the projection $\pi'$ of $\rho$ is such that $\pay_j(\pi) \notR_j \pay_j(\pi')$. Recall that the payoff profile of $\pi$ is $\boutelement_0 = (\outelement_0,\dots,\outelement_n)$. Let $\outelement' \in \outset_j$ be such that $\pay_j(\pi') =  \outelement'$, it follows that have $\outelement_j \notR_j \outelement'$. We have that $\infocc(S(\outelement_j))$ holds only for $\outelement_j$ and $\varphi^i_{\outelement'}$ also holds. Hence, Formula \ref{ne:prover-cond2} is satisfied.
    \end{enumerate}
    Consequently, $\tau_{\prover}$ is a winning strategy for Prover.

    \medskip

    Conversely, suppose that Prover has a winning strategy $\tau_\prover$ from $s_0$ in $\game_{\prover,\challenger}$. From $\tau_\prover$, we construct a strategy profile $\bsigma = (\sigma_i)_{i \in \players}$ from $v_0$ in $\game$ as follows.

    For each history $hv$ in $\game$, there exists a unique history $g$ in $\game_{\prover,\challenger}$ whose projection is $hv$ and $g$ ends in a state $(v,\boutelement,\varnothing,\dots)$ or $(v,\outelement,j,\dots)$. In the first case, if $v \in V_i$ and $\tau_\prover(g) = ((v,v'),\boutelement,\varnothing,\dots)$, we define $\sigma_i(hv)=v'$. In the second case, if $v \in V_i$ with $i \neq j$ and $\tau_\prover(g) = (v',\outelement,j,\dots)$, we define $\sigma_i(hv)=v'$. We denote by $\pi$ the outcome $\outcomefrom{\bsigma}{v_0}$.

    Let $\tau_\challenger$ be the strategy of Challenger that always accepts the edges suggested by Prover. Consider the play $\rho = \outcomefrom{(\tau_\prover,\tau_\challenger)}{s_0}$. By construction of $\bsigma$, its projection on $\game$ is equal to $\pi$. As $\tau_\prover$ is winning, Formula~\ref{ne:prover-cond1} is satisfied along $\rho$, so the payoff profile of $\pi$ is $\boutelement_0 = (\outelement_1,\dots,\outelement_n)$ by construction.

    Let us show that $\bsigma$ is an NE from $v_0$ in $\game$.
    Let $\tau_j$ be a deviating strategy of player~$j$. Our goal is to show that $\pi' = \outcomefrom{\bsigma_{-j},\tau_j}{v_0}$ is not a profitable deviation for $j$, i.e., $\outelement_j = \pay_j(\pi) \notR_j \pay_j(\pi')$. There is a unique play $\rho'$ in $\game_{\prover,\challenger}$ whose projection onto $\game$ is $\pi'$. Note that since $\tau_\prover$ is winning, Formula~\ref{ne:prover-cond2} is satisfied along $\rho'$ by construction of $\bsigma$. As $\infocc(S_{\text{ann}}(\outelement_j))$ is satisfied, it must hold that the only payoff $\outelement \in \outset_j$ such that $\rho'$ satisfies $\varphi_{\outelement}^j$ also satisfies $\outelement_j \notR_j \outelement$.
    Consequently, $\bsigma$ is an NE.
\end{proof}

Thanks to the previous proposition, we are now able to prove \cref{prop:prover-game-ne-exptime}. We do not give details of the proof, as it is very close to the proof of \Cref{prop:membership}, with almost the same size for the arena and Emerson-Lei objectives. Moreover, as for SPEs, we get the corollary that the existence NE problem is in \exptime{}.

\section{Constrained SPE Existence Problem is EXPTIME-Hard (Fixed Number of Payoffs)}
\label{app:spe-exptime-hard-fixed-payoffs}

In this section, we provide another proof for the \exptime{}-hardness of the constrained SPE existence problem for \X{} games (\cref{theorem:main-second-result}). The reduction, from the membership problem for linear space ATMs, is similar to the one of \cref{section:exptime-hardness} for \cref{prop:SPE-exptime-hard-fixed-players} and, instead of using a fixed number of players, uses a fixed number of payoffs for each player.

\begin{proposition}\label{prop:SPE-exptime-hard-fixed-payoffs}
    The constrained SPE existence problem for \X{} games is \exptimeHard{}, already for three payoffs per player.
\end{proposition}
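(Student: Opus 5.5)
We keep the ATM simulation of \cref{section:exptime-hardness} unchanged: the configuration, transition and referee phases, the notions of validity, $A$-validity and $(E,k)$-validity, and the canonical strategies. Only the player structure changes. In the two-player reduction, the number of payoffs of Adam grows with $n$ because of the sets $C_{\bot_A^k}^{\text{\OK}}$ and $C_{\bot_A^k}^{\text{\KO}}$, $k \in \{1,\dots,n\}$. We therefore distribute the referee role over several players, each with a constant number of payoffs, namely three.

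\textbf{Players.} We use Eve, Adam, and one referee player $R_k$ per tape cell $k \in \{1,\dots,n\}$ (symmetrically a referee $R_E$ for Eve's check of Adam's transitions). The vertex $r_A^\alpha$ becomes a chain of referee vertices $r_{1}^\alpha, \dots, r_n^\alpha$, where $R_k$ either continues or moves to the sink $\bot_A^k$.

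\textbf{Payoffs.} Each referee $R_k$ gets three payoffs:
\begin{itemize}
\item $\mathsf{good}$, collecting $C_{\bot_A^k}^{\text{\OK}}$, together with the plays on which $R_k$ never flags and which are $(E,k)$-valid;
\item $\mathsf{bad}$, collecting $C_{\bot_A^k}^{\text{\KO}}$, together with the plays on which an $(E,k)$-invalid history is not flagged by $R_k$;
\item $\mathsf{neutral}$, for the remaining plays, such as sinks reached by other referees.
\end{itemize}
The preference is $\mathsf{bad} \R \mathsf{neutral} \R \mathsf{good}$. Each set is $\omega$-regular via the polynomial \DBWs{} of Appendix~\ref{app:exptime-hard-dbw-construction}.

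Eve and Adam receive three payoffs each: a loss class (rejection, or a correctly flagged invalidity of their own), a middle class, and a win class (acceptance for Eve, rejection for Adam, or correctly flagging the other). I expect some care is needed to fold the two flagging outcomes into these three classes. Eve must prefer acceptance but must strictly dislike being caught by any referee. Adam must prefer $C_{\text{rej}}$ and must dislike being caught by $R_E$. This suggests merging the ``caught'' outcomes with the bottom class, and a correct flag by a referee with the top class of the flagged player's opponent.

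\textbf{Correctness.} We follow the proof of \cref{prop:SPE-exptime-hard-fixed-players}. If $M$ accepts $w$, the profile consisting of the canonical strategies for Eve and Adam, plus referees that flag exactly at the first $(E,k)$-invalid point, is an SPE with outcome in $C_{\text{acc}}$. The case analysis is over valid, $A$-invalid, and $(E,k)$-invalid histories. The new point is that a referee never profits from a deviation: flagging wrongly gives $\mathsf{bad}$, and not flagging a real error also gives $\mathsf{bad}$.

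Conversely, if $M$ rejects $w$, suppose an SPE has outcome in $C_{\text{acc}}$. If the outcome is invalid, the first responsible referee profits by flagging. Otherwise, Adam deviates to his canonical strategy derived from a winning $\forall$ strategy. This deviation reaches either $C_{\text{rej}}$ or an $(E,k)$-invalid history. In the latter case, subgame perfection forces $R_k$ to flag in that subgame, which yields a payoff Adam prefers.

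\textbf{Main obstacle.} The hardest step is the converse direction. We need referees' sequential rationality in every subgame to guarantee punishment of invalid plays, without letting Eve and Adam exploit the referees. The three-payoff encoding must therefore make each referee's incentive local to its own cell. A partition issue like the one noted at the end of Appendix~\ref{app:exptime-hard-dbw-construction} must also be handled. If referees prove problematic, the fallback is to keep two players but split each $\bot_A^k$ branch payoff classes by $k \bmod 3$. That fallback seems unlikely to work, however, so the multi-referee approach is the main plan.
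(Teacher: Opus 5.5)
Your architecture (one referee per tape cell with three payoffs, and the referee phase turned into a chain $r_1^\alpha,\dots,r_n^\alpha$) is the paper's. But you leave open the step that carries the proposition, namely the payoffs of Eve and Adam, and the classes you hint at do not work.

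\textbf{The suggested Eve/Adam classes are not polynomial.} Putting ``a correctly flagged invalidity of their own'' into Eve's bottom class, or ``a correct flag by a referee'' into Adam's top class, requires recognizing $\bigcup_k C_{\bot_A^k}^{\text{\OK}}$ (possibly together with $C_{\text{rej}}$ or $C_{\text{acc}}$).
\begin{itemize}
\item A deterministic automaton learns $k$ only when it reads the sink. So it must track $(E,k)$-validity for all $k$ at once, i.e., the whole expected tape.
\item Adam's transitions are not constrained by $(E,k)$-validity. Hence histories that are $(E,k)$-valid for every $k$ can reach exponentially many expected tape contents.
\item Two such histories differing at cell $k$ are separated by a continuation ending in $\bot_A^k$. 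So any DPW for such a class has exponential size.
\end{itemize}
This is exactly the blow-up the paper gives as the reason Adam needs several sinks $\bot_A^k$. Your reduction would therefore not be polynomial.

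\textbf{Your referee classes are not a partition.} An $(E,k)$-valid play ending in another referee's sink is both $\mathsf{good}$ and $\mathsf{neutral}$. An $(E,k)$-invalid play ending in another referee's sink is both $\mathsf{bad}$ and $\mathsf{neutral}$.

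\textbf{The paper's fix.} The missing idea is to make Eve's and Adam's payoffs blind to whether a flag is correct.
\begin{itemize}
\item The paper uses a single sink $\bot$. Each $R_k$ checks \emph{both} players on cell $k$, using one polynomial DBW for the set $C_k$ of $k$-valid plays, so no extra referee for Adam is needed.
\item For \emph{both} Eve and Adam, $\overline{\plays(v_{\text{go}})}\cup C_{\text{go}}\cup C_\bot$ is the bottom class, with $C_{\text{rej}}$ and $C_{\text{acc}}$ above it in opposite orders.
\item $R_k$ ranks $C_k\cap C_{\text{acc},\text{rej}}$ above $C_k^{\text{\KO}}\cap C_\bot$, which is above everything else.
\item All classes are then Boolean combinations of polynomial DBWs, and each referee's classes form a partition.
\end{itemize}

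\textbf{Consequence for the converse.} In this design a flag hurts Adam too, so your converse step ``$R_k$ flags, which yields a payoff Adam prefers'' is no longer available. The paper replaces it by two lemmas about any SPE:
\begin{itemize}
\item every invalid history that enters a referee phase ends in $C_\bot$;
\item from every valid history ending in $v_{\text{go}}$, the SPE continuation is valid and lies in $C_{\text{acc},\text{rej}}$. This is proved by backward induction on the bounded number of rounds of the finite computation.
\end{itemize}
Adam's deviation along a winning $\forall$-strategy then yields a valid play, which must lie in $C_{\text{rej}}$.

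\textbf{Repairing your route instead.} You could let Eve's and Adam's classes depend only on \emph{which} sink is reached, which is trivially regular. You would then also have to show two things: in an SPE no referee flags a history that is valid for its own check, and Eve can never force Adam into an illegal transition and profit from $R_E$'s flag.
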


We refer to the notation and definitions from \cref{section:exptime-hardness}, and we explain here the main differences. Intuitively, the main difference is now that Adam and Eve are not verifying the validity the configurations and transitions: new external players called Referees, one for each cell, have this role. 

Given a linear space ATM $M$ and an input word $w = w_1 \ldots w_n$ of length~$n$, we construct a similar \X{} game $\game$ with $n+2$ players. The set of players is $\players = \{\text{Eve}, \text{Adam}, R_1, \dots, R_{n}\}$, where each $R_k$ is a referee. The two first phases, the configuration and transition phases are exactly the same. Only the referee phase changes: it consists in a sequence of vertices $r_1^\alpha, r_2^\alpha \dots r_{n}^\alpha$, where each $r_k^\alpha$ is controlled by Referee $R_k$, $k \in \{1,\ldots,n\}$, and $\alpha \in \{\text{acc},\text{rej},\text{go}\}$. Each Referee has two available edges: $(i)$ $R_k$ can choose to pass, i.e., move to $r_{k+1}^\alpha$ (or, when $k = n$, move to $v_{\alpha}$), or $(ii)$, $R_k$ can choose to flag the play as invalid, i.e., move to the sink vertex $\bot$. The game is illustrated in \cref{fig:exptime-hardness-arena-referees}. 

\begin{figure}
    \centering

    \caption{The game $\game$ used in the \exptime{} reduction, where circles are owned by Eve, squares by Adam, and diamonds by Referees.}
    \label{fig:exptime-hardness-arena-referees}
\end{figure}

Each Referee $R_k$ checks the move validity of \emph{both players} Eve and Adam at the same time. Therefore, we say that a play $\pi$ is \emph{$k$-valid} if:
    \begin{enumerate}
        \item For any configuration $\mathfrak{c} = (q,\ell,x)$ of $\pi$ such that $\ell = k$, let $\mathfrak{t} = (m,s',q')$ be the transition declared\footnote{Here, we no longer make a distinction based on which player declares the transition.} just after $\mathfrak{c}$, then $(q,x_k,q',s',m) \in \delta$.
        \item The $k$-projection of the first configuration of $\pi$ is equal to $(q_0,1,w_1)$ if $k = 1$, and to $(\cdot,\cdot,w_k)$ if $k \neq 1$.
        \item For any two successive configurations $\mathfrak{c} = (q,\ell,x)$ and $\mathfrak{c}'$ of $\pi$ separated by the transition $\mathfrak{t} = (m,s',q')$, we define $(q',\ell',t)$ such that $\ell' = \begin{cases}
        \ell - 1 & \text{if } m = L,\\
        \ell + 1 & \text{if } m = R,
        \end{cases}$ and $t = \begin{cases}
        s' & \text{if } \ell = k,\\
        s  & \text{if } \ell \neq k.
        \end{cases}$ Then the $k$-projection of $\mathfrak{c}'$ is equal to $(q',\ell',t)$ if $\ell' = k$, and to $(\cdot,\cdot,t)$ if $\ell' \neq k$.
    \end{enumerate}
Otherwise the play $\pi$ is \emph{$k$-invalid}. Moreover, a history is \emph{$k$-valid} if it is prefix of a $k$-valid play, otherwise it is \emph{$k$-invalid}. 

We keep the definitions of $C_{\text{acc}}$, $C_{\text{rej}}$, $C_{\text{go}}$, and $\overline{\plays(v_{\text{go}})}$, but change the other sets:
\begin{itemize}
    \item $C_\bot$ is the set of plays in $\plays(v_{\text{go}})$ that eventually loop on $\bot$;
    \item $C_k$ is the set of plays in $\plays(v_{\text{go}})$ that are $k$-valid;
    \item $C_k^{\text{\KO}}$ is the set of plays in $\plays(v_{\text{go}})$ that are $k$-invalid.
\end{itemize}

The preference relations $\R_E,\R_A,\R_1,\dots,\R_n$ for players $\text{Eve}, \text{Adam}, R_1, \dots, R_{n}$ respectively, are the strict partial orders defined as follows, where notation $C_{\text{acc},\text{rej}}$ means $C_{\text{acc}} \cup C_\text{rej}$ (similarly for $C_{\text{go},\bot}$ and $C_{\text{go},\text{acc},\text{rej}}$):
\begin{itemize}
    \item For Eve: $$\overline{\plays(v_\text{go})} \cup C_{\text{go}} \cup C_\bot ~\R_E~ C_{\text{rej}} ~\R_E~ C_{\text{acc}}.$$

    \item For Adam: $$\overline{\plays(v_\text{go})} \cup C_{\text{go}} \cup C_\bot ~\R_A~ C_{\text{acc}} ~\R_A~ C_{\text{rej}}.$$
    
    \item For $R_k$: $$\overline{\plays(v_\text{go})} \cup ((C_k \cap C_{\text{go},\bot}) \cup (C_k^{\text{\KO}} \cap C_{\text{go},\text{acc},\text{rej}})) ~\R_k~ (C_k^{\text{\KO}} \cap C_{\bot}) ~\R_k~ (C_k \cap C_{\text{acc},\text{rej}}).$$
\end{itemize}
Hence, only three payoffs are used for each player, and one can check the sets used to define each preference relation form a partition of $V^\omega$. To prove that $\game$ is an \X{} game, we still need to show that all these sets are $\omega$-regular. As \DBWs{} are closed under union and intersection with polynomial size constructions, it is enough to provide \DBWs{} for the sets $C_{\text{acc}}$, $C_{\text{rej}}$, $C_{\text{go}}$, $\overline{\plays(v_{\text{go}})}$, $C_\bot$, 
$C_k$, and $C_k^{\text{\KO}}$. We concentrate on the last two cases, the others being easy to treat. The \DBW{} $\aut{A}_k$ accepting $C_k$ is depicted in \Cref{fig:exptime-hard-dbw-valid}. It is close to the \DBW{} of \Cref{fig:exptime-hard-dbw-A_k-OK}. The main differences are that this automaton has to also check $A$-validity of plays, does not impose that accepting plays eventually loop on $\bot_k$, and has all its states final except $q_\sharp$. The \DBW{} accepting $C_k^{\text{\KO}}$ is similar to $\aut{A}_k$ except that its unique final state is $q_\sharp$.

\begin{figure}
    \centering
    \begin{tikzpicture}[x=0.75pt,y=0.75pt,yscale=-1]
        \draw   (99.85,112.82) .. controls (99.85,107.75) and (103.96,103.63) .. (109.04,103.63) -- (136.83,103.63) .. controls (141.91,103.63) and (146.02,107.75) .. (146.02,112.82) -- (146.02,112.82) .. controls (146.02,117.9) and (141.91,122.01) .. (136.83,122.01) -- (109.04,122.01) .. controls (103.96,122.01) and (99.85,117.9) .. (99.85,112.82) -- cycle ;
        \draw (117.92,101.57) .. controls (114.03,84.24) and (128.1,83.29) .. (125.95,98.43);
        \draw [shift={(125.41,101.23)}, rotate = 283.72,fill={rgb, 255:red, 0;green,0;blue,0},line width=0.08,draw opacity=0] (5.36,-2.57) -- (0,0) -- (5.36,2.57) -- (3.56,0) -- cycle;
        \draw (122.89,171.41) -- (123.05,124.95);
        \draw [shift={(122.88,174.41)}, rotate = 270.19,fill={rgb, 255:red, 0;green,0;blue,0},line width=0.08,draw opacity=0] (5.36,-2.57) -- (0,0) -- (5.36,2.57) -- (3.56,0) -- cycle;
        \draw   (113.6,183.69) .. controls (113.6,178.57) and (117.76,174.41) .. (122.88,174.41) .. controls (128,174.41) and (132.16,178.57) .. (132.16,183.69) .. controls (132.16,188.82) and (128,192.97) .. (122.88,192.97) .. controls (117.76,192.97) and (113.6,188.82) .. (113.6,183.69) -- cycle ;
        \draw (114.37,187.77) .. controls (96.98,191.77) and (94.38,177.08) .. (111.19,179.72);
        \draw [shift={(114,180.29)}, rotate = 193.62,fill={rgb, 255:red, 0;green,0;blue,0},line width=0.08,draw opacity=0] (5.36,-2.57) -- (0,0) -- (5.36,2.57) -- (3.56,0) -- cycle;
        \draw (197.62,112.82) -- (148.65,112.82);
        \draw [shift={(200.62,112.82)}, rotate = 180,fill={rgb, 255:red, 0;green,0;blue,0},line width=0.08,draw opacity=0] (5.36,-2.57) -- (0,0) -- (5.36,2.57) -- (3.56,0) -- cycle;
        \draw (312.62,112.82) -- (271.07,112.82);
        \draw [shift={(315.62,112.82)}, rotate = 180,fill={rgb, 255:red, 0;green,0;blue,0},line width=0.08,draw opacity=0] (5.36,-2.57) -- (0,0) -- (5.36,2.57) -- (3.56,0) -- cycle;
        \draw (360.45,124.55) -- (360.45,183.75) -- (135.16,183.69);
        \draw [shift={(132.16,183.69)}, rotate = 0.01,fill={rgb, 255:red, 0;green,0;blue,0},line width=0.08,draw opacity=0] (5.36,-2.57) -- (0,0) -- (5.36,2.57) -- (3.56,0) -- cycle;
        \draw (437.42,112.82) -- (403.47,112.82);
        \draw [shift={(440.42,112.82)}, rotate = 180,fill={rgb, 255:red, 0;green,0;blue,0},line width=0.08,draw opacity=0] (5.36,-2.57) -- (0,0) -- (5.36,2.57) -- (3.56,0) -- cycle;
        \draw   (97.22,112.82) .. controls (97.22,106.33) and (102.49,101.07) .. (108.98,101.07) -- (136.89,101.07) .. controls (143.38,101.07) and (148.65,106.33) .. (148.65,112.82) -- (148.65,112.82) .. controls (148.65,119.32) and (143.38,124.58) .. (136.89,124.58) -- (108.98,124.58) .. controls (102.49,124.58) and (97.22,119.32) .. (97.22,112.82) -- cycle ;
        \draw   (203.46,112.82) .. controls (203.46,107.75) and (207.58,103.63) .. (212.65,103.63) -- (259.04,103.63) .. controls (264.11,103.63) and (268.23,107.75) .. (268.23,112.82) -- (268.23,112.82) .. controls (268.23,117.9) and (264.11,122.01) .. (259.04,122.01) -- (212.65,122.01) .. controls (207.58,122.01) and (203.46,117.9) .. (203.46,112.82) -- cycle ;
        \draw   (200.62,112.82) .. controls (200.62,106.33) and (205.89,101.07) .. (212.38,101.07) -- (259.31,101.07) .. controls (265.8,101.07) and (271.07,106.33) .. (271.07,112.82) -- (271.07,112.82) .. controls (271.07,119.32) and (265.8,124.58) .. (259.31,124.58) -- (212.38,124.58) .. controls (205.89,124.58) and (200.62,119.32) .. (200.62,112.82) -- cycle ;
        \draw   (319.4,112.82) .. controls (319.4,107.75) and (323.52,103.63) .. (328.59,103.63) -- (390.5,103.63) .. controls (395.57,103.63) and (399.69,107.75) .. (399.69,112.82) -- (399.69,112.82) .. controls (399.69,117.9) and (395.57,122.01) .. (390.5,122.01) -- (328.59,122.01) .. controls (323.52,122.01) and (319.4,117.9) .. (319.4,112.82) -- cycle ;
        \draw   (315.62,112.82) .. controls (315.62,106.33) and (320.89,101.07) .. (327.38,101.07) -- (391.71,101.07) .. controls (398.2,101.07) and (403.47,106.33) .. (403.47,112.82) -- (403.47,112.82) .. controls (403.47,119.32) and (398.2,124.58) .. (391.71,124.58) -- (327.38,124.58) .. controls (320.89,124.58) and (315.62,119.32) .. (315.62,112.82) -- cycle ;
        \draw   (443.64,112.82) .. controls (443.64,107.75) and (447.75,103.63) .. (452.83,103.63) -- (491.06,103.63) .. controls (496.14,103.63) and (500.25,107.75) .. (500.25,112.82) -- (500.25,112.82) .. controls (500.25,117.9) and (496.14,122.01) .. (491.06,122.01) -- (452.83,122.01) .. controls (447.75,122.01) and (443.64,117.9) .. (443.64,112.82) -- cycle ;
        \draw   (440.42,112.82) .. controls (440.42,106.33) and (445.69,101.07) .. (452.18,101.07) -- (491.71,101.07) .. controls (498.2,101.07) and (503.47,106.33) .. (503.47,112.82) -- (503.47,112.82) .. controls (503.47,119.32) and (498.2,124.58) .. (491.71,124.58) -- (452.18,124.58) .. controls (445.69,124.58) and (440.42,119.32) .. (440.42,112.82) -- cycle ;

        \draw (99.67,105.27) node [anchor=north west,inner sep=0.75pt,font=\normalsize,align=left] {$(q,\ell ,s)$};
        \draw (116.13,177.53) node [anchor=north west,inner sep=0.75pt,font=\normalsize,align=left] {$q_{\sharp }$};
        \draw (127.1,126.23) node [anchor=north west,inner sep=0.75pt,font=\small,align=left] {$v_{\text{go}} \cdot \ell '$ for $\ell '\neq \ell $ (if $\ell =k$)};
        \draw (126.9,141.83) node [anchor=north west,inner sep=0.75pt,font=\small,align=left] {$v_{k} \cdot s'$ for $s'\neq s$};
        \draw (127.3,157.43) node [anchor=north west,inner sep=0.75pt,font=\small,align=left] {$v_{\text{state}} \cdot q'$ for $q'\neq q$ (if $\ell =k$)};
        \draw (203.07,104.47) node [anchor=north west,inner sep=0.75pt,font=\normalsize,align=left] {$(q,\ell ,s,m)$};
        \draw (154.3,96.03) node [anchor=north west,inner sep=0.75pt,font=\small,align=left] {$u_{1}^{\forall } \cdot m$};
        \draw (317.67,103.27) node [anchor=north west,inner sep=0.75pt,font=\normalsize,align=left] {$(q,\ell ,s,m,s')$};
        \draw (277.1,95.03) node [anchor=north west,inner sep=0.75pt,font=\small,align=left] {$u_{2}^{\forall } \cdot s'$};
        \draw (364.7,138.23) node [anchor=north west,inner sep=0.75pt,font=\small,align=left] {$u_{3}^{\exists } \cdot q'$, $u_{3}^{\forall } \cdot q'$ such that\\$\ell =k$ and $(q,s,q',s',m) \notin \delta $};
        \draw (404.1,77.23) node [anchor=north west,inner sep=0.75pt,font=\small,align=left] {$u_{3}^{\exists } \cdot q'$};
        \draw (117.3,76.83) node [anchor=north west,inner sep=0.75pt,font=\small,align=left] {$*$};
        \draw (443.27,104.47) node [anchor=north west,inner sep=0.75pt,font=\normalsize,align=left] {$(q',\ell ',s')$};
        \draw (154.4,78) node [anchor=north west,inner sep=0.75pt,font=\small,align=left] {$u_{1}^{\exists } \cdot m$};
        \draw (277,77.4) node [anchor=north west,inner sep=0.75pt,font=\small,align=left] {$u_{2}^{\exists } \cdot s'$};
        \draw (403.5,94.63) node [anchor=north west,inner sep=0.75pt,font=\small,align=left] {$u_{3}^{\forall } \cdot q'$};
        \draw (88.3,179.63) node [anchor=north west,inner sep=0.75pt,font=\small,align=left] {$*$};
        \draw (440.7,70.03) node [anchor=north west,inner sep=0.75pt,font=\footnotesize,align=left] {such that $\ell \neq k$ or\\$(q,s,q',s',m) \in \delta $};
    \end{tikzpicture}
    \caption{The \DBW{} $\aut{A}_k$ accepting $C_k$.}
    \label{fig:exptime-hard-dbw-valid}
\end{figure}

\subparagraph*{Correctness of the Reduction}
 We are now ready to prove the correctness of the reduction: the Turing machine $M$ accepts $w$ if and only if there exists an SPE from $v_{\text{go}}$ in $\game$ satisfying the constraint $C_{\text{acc}}$ for all players. In view of the preference relation $\R_E$, this is equivalent to the existence of an SPE whose outcome belongs to $C_{\text{acc}}$. The proof requires some preliminary lemmas.

For each $R_k$, we define the \emph{canonical strategy} $\varsigma_k$ as follows: for $\alpha \in \{\text{go},\text{rej},\text{acc}\}$, $\varsigma_k(h  r_k^{\alpha}) = \bot$ if $h$ is $k$-invalid, otherwise the other successor is chosen. These strategies are the most natural strategies to get the highest payoff for each Referee.

\begin{lemma}\label{lem:spe-exptime-hard-invalid-history-referee}
    Let $\sigma$ be an SPE and $hv$ be an invalid history of $\game$. If $\outcomefrom{\sigma_{\|h}}{v}$ eventually enters a referee phase, then $h\outcomefrom{\sigma_{\|h}}{v} \in C_\bot$.
\end{lemma}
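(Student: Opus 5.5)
Let $\bsigma$ be an SPE from $v_{\text{go}}$ and $hv \in \hist(v_{\text{go}})$ be invalid. We argue by contradiction: we show that if the outcome in the subgame did not end in $\bot$, some Referee would have a profitable one-step deviation at a suitable history, contradicting the fact that $\bsigma$ is an SPE.

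\emph{Step 1: fix a Referee.} First I will fix an index $k$ such that $hv$ is $k$-invalid. Such a $k$ exists because a history that is $k$-valid for every $k \in \{1,\dots,n\}$ is valid: the three conditions of $k$-validity together check the first configuration, every declared transition (whoever declared it), and the consistency of successive configurations cell by cell. By definition, a $k$-invalid history has no $k$-valid extension. Hence every play extending $hv$ is $k$-invalid, and in particular so is $\rho = h\outcomefrom{\bsigma_{\|h}}{v}$, which lies in $\plays(v_{\text{go}})$.

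\emph{Step 2: locate the decision point of $R_k$.} By hypothesis, $\rho$ eventually enters a referee phase after $hv$, at some vertex $r_1^\alpha$. Inside a referee phase the play visits $r_1^\alpha, r_2^\alpha, \dots$ in order until either some Referee moves to $\bot$ or the play leaves through $v_\alpha$. If some Referee $R_{k'}$ with $k' < k$ moves to $\bot$ in this phase, then $\rho \in C_\bot$ and we are done. Otherwise $\rho$ reaches $r_k^\alpha$, and I let $g\,r_k^\alpha$ be the corresponding prefix of $\rho$, which extends $hv$.

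\emph{Step 3: derive the contradiction.} Suppose $\rho \notin C_\bot$. Every play of $\plays(v_{\text{go}})$ lies in exactly one of $C_\bot$, $C_{\text{go}}$, $C_{\text{acc}}$, $C_{\text{rej}}$, so $\rho \in C_k^{\text{\KO}} \cap C_{\text{go},\text{acc},\text{rej}}$, which is the minimal class for $\R_k$. Since $\bsigma$ is an SPE, $\bsigma_{\|g}$ is an NE from $r_k^\alpha$ in $\rest{\game}{g}$, and $\outcomefrom{\bsigma_{\|g}}{r_k^\alpha}$ is the suffix of $\rho$. Now let $R_k$ deviate at $g\,r_k^\alpha$ by moving to $\bot$. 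The resulting play loops on $\bot$ and extends $hv$, so it is still $k$-invalid and belongs to $C_k^{\text{\KO}} \cap C_\bot$. This class is strictly preferred by $R_k$ to the class of $\rho$, so the deviation is profitable, contradicting the NE property. Hence $\rho \in C_\bot$.

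The only delicate point is Step 1, namely checking that the invalidity of $hv$ is witnessed by a single index $k$ and that $k$-invalidity is preserved by all extensions. Both follow directly from the definition of $k$-validity for histories as prefixes of $k$-valid plays, so I expect no real obstacle. The case where an earlier Referee already flags in the phase is harmless, since it yields $C_\bot$ immediately.
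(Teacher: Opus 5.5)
Your proof is correct and follows the paper's argument: fix a cell $k$ witnessing the invalidity, and let Referee $R_k$ deviate to $\bot$ at the next $r_k^\alpha$. This moves the outcome from the lowest $\R_k$-class to the strictly better class of $k$-invalid plays ending in $\bot$, contradicting subgame perfection. The paper instead picks $k$ from the outcome and uses the canonical strategy $\varsigma_k$ as the deviation. Your choice of $k$ from the history $hv$ itself, which keeps every extension including the deviation $k$-invalid, and your explicit handling of an earlier Referee flagging are small but sound refinements of that terser argument.
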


\begin{proof}
    If $hv$ is invalid, then the play $h\outcomefrom{\sigma_{\|h}}{v}$ is also invalid. Hence, there exists some $k$ such that $h\outcomefrom{\sigma_{\|h}}{v} \in C_k^{\text{\KO}}$. By contradiction, assume that $h\outcomefrom{\sigma_{\|h}}{v} \not\in C_\bot$. As $\outcomefrom{\sigma_{\|h}}{v}$ eventually enters a referee phase, Referee $R_k$ has a profitable deviation in the subgame from the next visited vertex $r_k^{\alpha}$, with $\alpha \in \{\text{go},\text{rej},\text{acc}\}$, by playing their canonical strategy $\varsigma_k$. This is impossible as $\sigma$ is an SPE.
\end{proof}

\begin{lemma}\label{lem:spe-exptime-hard-valid-history}
    Let $\sigma$ be an SPE and $h v_{\text{go}}$ be a valid history of $\game$ ending in $v_{\text{go}}$. Then, $h \outcomefrom{\sigma_{\|h}}{v_{\text{go}}}$ is a valid play and does not belong to $C_\bot$.
\end{lemma}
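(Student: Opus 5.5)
The plan is a proof by contradiction, reading the outcome from $v_{\text{go}}$ rounds by rounds. Write $\rho = h\outcomefrom{\sigma_{\|h}}{v_{\text{go}}}$.

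\emph{First claim: $\rho$ is valid.} Suppose $\rho$ is invalid. Let $g$ be its shortest invalid prefix; it extends $hv_{\text{go}}$, since $hv_{\text{go}}$ is valid. An invalid declaration (of a configuration or a transition) is always followed by the next referee phase, unless the play has already left through some $\bot$. Only referees can move to $\bot$, and they do so only inside a referee phase that comes after $g$. Hence $\rho$ enters a referee phase after $g$. By \cref{lem:spe-exptime-hard-invalid-history-referee}, applied to the subgame at the last vertex of $g$, we get $\rho \in C_\bot$.

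I then derive a contradiction by finding a profitable deviation for the player who made the mistake, in the subgame just before the invalid move. Let $g'$ be the prefix of $g$ ending at the vertex, owned by Eve or Adam, from which the invalid edge is chosen. From that vertex, the mover can instead choose an edge that keeps the history valid. Such an edge always exists: from a valid configuration one can declare the correct next symbol, and from a valid non-halting configuration the machine has some transition. We may assume $\delta$ is total on non-halting states, which is w.l.o.g.

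The mover then continues with a strategy that stays valid, while the opponent plays according to $\sigma$. Along this continuation, the outcome of the remaining subgame is either valid forever or becomes invalid later. When the opponent makes it invalid, the same lemma again sends the play to $C_\bot$, so that branch gives the deviator nothing better. This is the delicate point: the deviation must yield $C_{\text{acc}}$ or $C_{\text{rej}}$, which are strictly better than $C_\bot$ for both Eve and Adam.

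\emph{Main obstacle: handling the referees along a valid continuation.} I need to show that in any SPE, from a valid history, referees never move to $\bot$. Indeed, for $R_k$ on a $k$-valid history, passing leads eventually to $C_k \cap C_{\text{acc},\text{rej}}$ (the top payoff), to $C_k \cap C_{\text{go}}$, or to $\bot$. Flagging gives $C_k \cap C_\bot$, which is the bottom payoff. Passing is never worse, but the referee profits strictly from passing only if the continuation terminates in $C_{\text{acc},\text{rej}}$. Since the ATM is linear space with a finite computation tree, I will assume runs are bounded, so every valid continuation reaches $v_{\text{acc}}$ or $v_{\text{rej}}$; standard ATM normalization with a step counter makes it halt. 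I then argue by backward induction on the depth of the remaining computation tree. At depth $0$, from $r_k^\alpha$ with $\alpha\in\{\text{acc},\text{rej}\}$ on a valid history, passing yields $C_k\cap C_{\text{acc},\text{rej}}$, strictly better than $\bot$, so $\sigma$ must pass. Inductively, valid histories lead, under $\sigma$, to valid terminating plays, since each deviation to invalidity is punished by $C_\bot$ (bottom for Eve and Adam), while staying valid yields $C_{\text{acc},\text{rej}}$.

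\emph{Conclusion.} This induction simultaneously gives validity and termination of $\rho$, hence $\rho \notin C_\bot$. It also settles the first claim: there, the deviator's valid continuation gets $C_{\text{acc}}$ or $C_{\text{rej}}$, which beats $\rho \in C_\bot$, contradicting that $\sigma$ is an SPE. So the real proof is a single induction on the remaining depth of the computation tree of the configuration declared after $hv_{\text{go}}$, with both statements proved together.
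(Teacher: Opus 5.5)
Your proposal is correct and is essentially the paper's proof. The opening contradiction argument only closes once the induction is available, as you note yourself. The induction you end with is what the paper does, inducting downward on the bounded number of visits to $v_{\text{go}}$ and treating the referee, transition and configuration phases in turn. In both, \cref{lem:spe-exptime-hard-invalid-history-referee} sends any invalid move by Eve or Adam to $C_\bot$, and a referee must pass on a valid history because $C_k \cap C_{\text{acc},\text{rej}}$ beats $C_k \cap C_\bot$.
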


\begin{proof}
    As the computation of $M$ on $w$ is finite, there exists some constant $K$ such that the number of visits to $v_{\text{go}}$ of any valid history is bounded by $K$. Let $h v_{\text{go}}$ be a valid history. The proof is done by induction on its number of visits to $v_{\text{go}}$.

    We begin with the base case: a maximal number of such visits. Hence, in the subgame $\game_{\|h}$, starting from  $v_{\text{go}}$, if Eve and Adam make valid choices in the configuration phase and in the following transition phase,  the play $\outcomefrom{\sigma_{\|h}}{v_{\text{go}}}$ reaches  $q_{\text{acc}}$ or $q_{\text{rej}}$ at the end of the transition phase, and then $v_{\text{acc}}$, $v_{\text{rej}}$ or $\bot$ at the end of the referee phase, without reaching $v_{\text{go}}$ again. We consider the following three cases:

    \begin{enumerate}
        \item Suppose first that Eve and Adam make valid choices. Each Referee $R_k$ has the choice between going to $\bot$ or continuing in a way to reach $v_{\text{acc}}$ or $v_{\text{rej}}$. The first case is not profitable as the play $h \outcomefrom{\sigma_{\|h}}{v_{\text{go}}}$ belongs to $C_k \cap C_{\bot}$ (instead of  $C_k \cap C_{\text{acc,rej}}$). Therefore, $h\outcomefrom{\sigma_{\|h}}{v_{\text{go}}}$ is a valid play that belongs to  $C_{\text{acc,rej}}$.
        \item  Suppose now that starting at the initial vertex $v_{\text{go}}$ of $\game_{\|h}$, Eve makes valid choices in the configuration phase, but Eve or Adam makes invalid choices in the following transition phase. By Lemma~\ref{lem:spe-exptime-hard-invalid-history-referee} (with $h$ extended with the choices made during both configuration and transition phases), it follows that the play $h\outcomefrom{\sigma_{\|h}}{v_{\text{go}}}$ belongs to $C_\bot$. Hence, in the transition phase, Eve or Adam has a profitable deviation: to make valid choices so that the resulting play belongs $C_{\text{acc},\text{rej}}$ by Case~1 above. Consequently, Case~2 never occurs.

        \item Finally, suppose that Eve makes invalid choices during the configuration phase. By Lemma \ref{lem:spe-exptime-hard-invalid-history-referee} (with $h$ extended with Eve's choices), we get $h\outcomefrom{\sigma_{\|h}}{v_{\text{go}}} \in C_\bot$. Hence, by Case 1, Eve has a profitable deviation by making valid choices in the configuration phase and thus producing a play in $C_{\text{acc},\text{rej}}$. Therefore, Case~3 also does not occur.
    \end{enumerate}
    We have shown that among the three cases, only the first one holds and $h\outcomefrom{\sigma_{\|h}}{v_{\text{go}}}$ is a valid play that belongs to  $C_{\text{acc,rej}}$.

    Consider the general case and a valid history $hv_{\text{go}}$. In the subgame $\game_{\|h}$, starting from  $v_{\text{go}}$, if Eve and Adam make valid choices in the first configuration and transition phases and in the following transition phase, then the play $\outcomefrom{\sigma_{\|h}}{v_{\text{go}}}$ reaches  $v_{\text{acc}}$, $v_{\text{rej}}$ or $\bot$, but also possibly $v_{\text{go}}$, at the end of the next referee phase. In the last case, let $h'v_{\text{go}}$ be the history that extends $hv_{\text{go}}$ to this visit of $v_{\text{go}}$. By induction hypothesis, $h'\outcomefrom{\sigma_{\|h'}}{v_{\text{go}}}$ is a valid play that belongs to  $C_{\text{acc,rej}}$. Using this property, we can repeat the previous arguments in all three cases and obtain that $h\outcomefrom{\sigma_{\|h}}{v_{\text{go}}}$ is also a valid play that belongs to  $C_{\text{acc,rej}}$.
\end{proof}

\begin{proof}[Proof of \cref{prop:SPE-exptime-hard-fixed-payoffs}] Let us prove that $M$ accepts $w$ if and only if there exists an SPE from $v_{\text{go}}$ with outcome in $C_{\text{acc}}$. 
Let us first assume that $M$ accepts $w$. We take the finite computation tree of $M$ on $w$ from the initial configuration.
At each node of the tree, either $\exists$ wins, or $\forall$ wins. Therefore, there exist two positional strategies $\sigma_\exists$ and $\sigma_{\forall}$, winning from vertices in their respective winning regions. These strategies correspond to $\sigma_E$ and $\sigma_A$, strategies of Eve and Adam in $\game$ producing valid choices. We extend them arbitrarily after any invalid history. Let us consider the following strategy profile: $\bsigma = (\sigma_E,\sigma_A,\varsigma_1,\dots,\varsigma_n)$, defined with the canonical strategies of Referees. As $M$ accepts $w$, the root of the computation tree on $w$ is winning for $\exists$, and then by construction, the outcome of $\bsigma$ from $v_{\text{go}}$ eventually loops in $v_{\text{acc}}$. It remains to show that $\bsigma$ is an SPE, that is, for each history $hv$, $\bsigma_{\| h}$ is an NE in the subgame $\game_{\| h}$ from $v$. If $hv$ is not valid, by construction of the canonical strategies, $h \outcomefrom{\bsigma_{\|h}}{v} \in C_\bot$. Thus, in $\game_{\| h}$, no player has an incentive to deviate: Referees already obtain the best payoff they can, and no matter what Adam or Eve do, the resulting play still belongs to $C_\bot$. If $hv$ is valid and $v$ is winning for Eve (resp.\ winning for Adam), then $h \outcomefrom{\sigma_{\|h}}{v} \in C_{\text{acc}}$ (resp.\ $C_{\text{rej}}$). Adam (resp.\ Eve) cannot get a better payoff by deviating as $\sigma_E$ (resp.\ $\sigma_A$) is a winning strategy by construction, and the Referees obtain the best payoff they can.

Let us now assume that there exists an SPE $\bsigma = (\sigma_E,\sigma_A,\sigma_{1},\dots,\sigma_n)$ such that $\rho = \outcomefrom{\bsigma}{v_{\text{go}}} \in C_{\text{acc}}$. We construct a strategy $\sigma_\exists$ resolving non-determinism in $M$, by looking at all valid plays consistent with $\sigma_E$.
Let $\tau_{\forall}$ be a strategy resolving universality in $M$. It corresponds to some strategy $\sigma'_A$ of Adam in $\game$. Let us denote by $\rho'$ the play $\outcomefrom{\bsigma_{-A},\sigma'_A}{v_{\text{go}}}$. As $\sigma$ is an NE from the initial vertex $v_{\text{go}}$, we know that $\rho \notR_A \rho'$, that is, $\rho' \in C_{\text{loop},\text{acc},\bot}$ as $\rho \in C_{\text{acc}}$. Let us show that $\rho'$ is valid. If not, decompose $\rho'$ as $h'\pi'$ where $h'$ is the longest valid history ending in $v_{\text{go}}$. We get a contradiction with Lemma~\ref{lem:spe-exptime-hard-valid-history}: in $\game_{\|h}$, Eve starts making valid choices. Hence as any valid play visits $v_{\text{go}}$ finitely many times, we get $\rho' \in C_{\text{acc},\bot}$. Coming back to the decomposition $\rho' = h'\pi'$, we get $\rho' \not\in C_\bot$ by Lemma~\ref{lem:spe-exptime-hard-valid-history}. This shows that $\rho' \in C_{\text{acc}}$. Therefore, the Turing machine $M$ accepts $w$.
\end{proof}

\section{Constrained NE Existence Problem is PSPACE-Complete (Fixed Number of Payoffs)}
\label{app:ne-recognizable-pspace}

This section is devoted to prove the following \pspaceComplete{}ness result for the constrained NE existence problem, when the number of payoffs of each player is fixed.

\begin{theorem}\label{theorem:ne-pspace-complete}
    The constrained NE existence problem is \pspaceComplete{} for \X{} games where the number of payoffs of each player is fixed.
\end{theorem}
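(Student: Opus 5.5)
Two things need to be established for \cref{theorem:ne-pspace-complete}: membership in \pspace{} when each $|\outset_i|$ is bounded by a constant $c$, and \pspace{}-hardness.

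\textbf{Membership.} We start from the NE Prover game of Appendix~\ref{app:exptime-ne-constrained} (\cref{prop:NEcorrespondence}), but we do not solve it in one piece. By \cref{prop:NEcorrespondence}, it suffices to guess a profile $\boutelement_0$ compatible with the constraint and check two things.
\begin{itemize}
\item[(a)] There is a play $\pi$ from $v_0$ with $\pay_i(\pi)=\outelement_i$ for all $i$. Call such a play \emph{good}.
\item[(b)] For every history $hv$ along $\pi$ with $v\in V_j$, and every successor $v''$ other than the one taken by $\pi$, the coalition $-j$ wins the zero-sum game from $hv''$. In this game, player~$j$ tries to reach a payoff $\outelement'$ with $\outelement_j \R_j \outelement'$.
\end{itemize}
This is the classical characterisation of NE outcomes: follow $\pi$, and punish the first deviator.

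Step (b) only needs the automaton states reached after $h$, not $h$ itself. So we precompute, for each $j$, each $d\in\outset_j$, each $v$, and each tuple $\bar q^j$ of states of the $c$ automata $\aut{A}^j_\outelement$, whether the coalition wins the following game: the arena $\arena$ is multiplied by the product of those $c$ \DPWs{}, and the objective is the Emerson-Lei formula $\bigvee_{\outelement',\, d\notR_j\outelement'}\varphi^j_{\outelement'}$. Since $c$ is fixed, this product has polynomial size. By \cref{prop:ParityToBB} and \cref{thm:BBgames}, solving the game is polynomial in the number of states and exponential only in the number of targets, which is $O(c\cdot d)$. To keep that exponent polynomial, we solve it in \pspace{}, which is enough: Emerson-Lei games are solvable in polynomial space. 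Call a tuple $(v,\bar q^j)$ \emph{safe} for $j$ when the coalition wins from it.

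For (a), we search for a lasso-shaped good play in the product of $\arena$ with all $n\cdot c$ automata. This product has exponential size, but each of its states has polynomial size. The search is a nondeterministic \pspace{} procedure: guess the lasso on the fly, check at every step that every deviation of the current owner leads to a safe tuple, and check that the cycle satisfies the acceptance conditions of the $n$ guessed automata. The only delicate point is Emerson-Lei and parity acceptance on the cycle, which we handle by guessing, for each automaton, the minimal priority seen on the cycle and verifying it. Savitch's theorem then gives \pspace{}.

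The main obstacle is the claim that the punishment games in (b) are polynomial. It depends on $c$ being constant. The priorities $d$ of the \DPWs{} are not fixed, so we must either rely on \pspace{}-solvability of Emerson-Lei games or observe that a product of $c$ parity conditions is a Streett/Rabin-type condition, solvable in time $|V|^{O(c)}$-style polynomial, exponential only in $c$.

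\textbf{Hardness.} We reduce from a \pspace{}-complete problem, for instance emptiness of the intersection of $n$ \DFAs{} or the existence of a play satisfying a conjunction of $n$ Büchi conditions. Each player $i$ gets two payoffs: one is the $\omega$-regular objective of the $i$-th condition, and the other is its complement, ordered $0\R_i 1$. A single player controls all vertices; he is the sole owner, and the other players merely evaluate. The constraint asks every player to get payoff $1$. An NE meeting the constraint exists if and only if some play lies in all the sets, that is, if and only if the intersection is nonempty. This uses two payoffs per player and is already \pspace{}-hard.
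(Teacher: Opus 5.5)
Your membership argument takes a genuinely different route from the paper. The paper builds, for each player $i$, a polynomial-size alternating Rabin automaton $\aut{B}_i$ for $\val_{-i}(v)$; it is polynomial because $\ell_i$ is fixed. It intersects these automata with automata for the constraints, and decides nonemptiness in \pspace{} by dealternating into an exponential nondeterministic parity automaton explored on the fly.

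You instead observe that punishing player $j$ after a history $h$ depends on $h$ only through the current states of $j$'s $c$ automata. So there are only polynomially many punishment games, each on a polynomial-size arena. The coalition objective in each is a disjunction of parity conditions, i.e.\ a Rabin objective, hence decidable in $\mathsf{NP}\subseteq\mathsf{PSPACE}$. What remains is an $\mathsf{NPSPACE}$ lasso search in the exponential product. This avoids alternating automata and dealternation entirely. It also makes explicit that punishment must be available from every deviation point $hv''$ along $\pi$, with the automaton states reached after $h$. The paper's formula for $\val_{-i}(v)$ only asks for a coalition strategy from $v$, not tied to $\pi$. The characterisation of NE outcomes needs precisely your condition (b), or equivalently that $\pi$ be consistent with $\sigma_{-i}$. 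In exchange, the paper's route reuses an existing generic construction and treats the constraint uniformly as one more intersection.

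Your hardness part is the paper's reduction: intersection of $n$ \DFAs{}, all vertices owned by a player who cannot improve, and two payoffs per evaluating player. To make it precise you need the encoding $L_i=\mathsf{init}\cdot\lang(\aut{A}_i)\cdot\mathsf{final}^\omega$ on the complete arena over $\Sigma\cup\{\mathsf{init},\mathsf{final}\}$, with \DBWs{} for $L_i$ and its complement. Your alternative source problem, a play satisfying a conjunction of $n$ Büchi conditions, is only \pspace{}-hard if each condition comes with its own automaton. As generalised Büchi conditions on the arena it is polynomial.

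Three repairs are needed.

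\emph{Reflexive preferences.} No condition is imposed on $\R_j$, so if $\outelement_j\R_j\outelement_j$ then $\tau_j=\sigma_j$ is already a profitable deviation. Your (b) only ranges over successors different from the one taken by $\pi$. You must therefore also restrict the guessed profile to $\outelement_j\notR_j\outelement_j$ for every $j$.

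\emph{Determinacy.} The direction ``NE outcome implies (b)'' uses determinacy of the punishment games. They are $\omega$-regular, hence determined; say so explicitly.

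\emph{Complexity of the punishment games.} Your fallback claim that these games are solvable in time $|V|^{O(c)}$ is unjustified. The Rabin condition has $O(c\cdot d)$ pairs with $d$ unbounded, and already for $c=2$ the punishment games encode arbitrary parity games. Keep the $\mathsf{NP}$ (or \pspace{}) bound, which is all the argument needs.
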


We begin with the proof of \pspace{}-membership. We use an algorithm which is similar to the one proposed in \cite[Section 4]{BruyereFiliotGrandmontRaskin26-arxiv}, where an alternating parity automaton (\APW{}) is used to accept all plays that are NE outcomes.

\begin{proof}[Membership proof of \cref{theorem:ne-pspace-complete}]
Let us first recall the approach of~\cite{BruyereFiliotGrandmontRaskin26-arxiv}.\footnote{We suppose that the reader is familiar with alternating automata on infinite words.} There, each player~$i \in \players$ has a preference relation $\R_i'$ over $V^\omega$ in a way to compare plays (no payoff function is used). This relation is $\omega$-automatic by hypothesis, and thus accepted by a \DPW{}. It is proved that the following set, called \emph{value of player~$-i$}, where $-i$ is the coalition of all players playing against player~$i$, is accepted by an \APW{} of polynomial size:
    \[
    \val_{-i}(v) = \{\pi \in \plays(v) \mid \exists \sigma_{-i} \in \Gamma_{-i}, \forall \rho \in \plays_{\sigma_{-i}}(v),\, \pi \notR_i' \rho\}.
    \]
    ($\Gamma_{-i}$ denotes the set of strategies of player $-i$, and $\plays_{\sigma_{-i}}$ the set of plays consistent with $\sigma_{-i}$.)
    Then, it is shown that the set of all NE outcomes from $v$ is equal to $\bigcap_{i\in\players} \val_{-i}(v)$, and thus also accepted by an \APW{} of polynomial size (see \cite[Lemma 12]{BruyereFiliotGrandmontRaskin26-arxiv}).

    In our setting, recall that each player~$i$ has a fixed number $\ell_i$ of payoffs in the set $\outset_i$, a preference relation $\R_i$ over $\outset_i$, and a \DPW{} accepting the set $\pay_i^{-1}(\outelement_i)$ for each $\outelement_i \in \outset_i$. As explained in \cref{section:preliminaries}, it amounts to have an $\omega$-recognizable (thus $\omega$-automatic) relation $\R_i'$. We can thus show that
    \[
    \val_{-i}(v) = \{\pi \in \plays(v) \mid \exists \sigma_{-i} \in \Gamma_{-i}, \forall \rho \in \plays_{\sigma_{-i}}(v),\, \pay_i(\pi) \notR_i \pay_i(\rho)\}.
    \]
    We are going to construct an alternating Rabin automaton (\ARW{}) $\aut{B}_i$ accepting $\val_{-i}(v)$, for all $i \in \players$. In this way, the set $\bigcap_{i\in\players} \val_{-i}(v)$ of all NE outcomes from $v$ will be also accepted by an \ARW{} \aut{B}, as the \ARWs{} are closed under intersection with a linear size construction \cite{handbook-automata-kupferman18}. We will explain at the end of proof how $\aut{B}$ is used to solve in \pspace{} the constrained NE existence problem.   

    Let us fix $i \in \players$. Consider a word $\pi \in V^\omega$. We define $\aut{B}_i$ by describing a run over $\pi$. Such a run starts by guessing a payoff $\outelement \in \outset_i$ while reading an $\varepsilon$-transition\footnote{We can get rid of $\varepsilon$-transitions by retaining the input letters in the states.}. Then, it universally branches out with an $\varepsilon$-transition to three states:
    \begin{itemize}
        \item To $q_0^i$, the initial state of the \DPW{} $\aut{A}^i_{\outelement}$ for the guessed $\outelement$. From $q_0^i$, the \ARW{} simulates $\aut{A}^i_{\outelement}$ on the input word $\pi$, ensuring that $\pay_i(\pi) = \outelement$. The accepting condition is a parity condition, that can be trivially translated into a Rabin condition.
        \item To $v$, the starting vertex of $\pi$. From $v$, the \ARW{} simulates the arena $\arena$ to ensure that $\pi \in \plays(v)$. The accepting condition is a Büchi condition (all states are final), i.e., a Rabin condition.
        \item To $(v,\overline{q_0}^i)$ where $\overline{q_0}^i$ denotes the tuple of initial states of $(\aut{A}^i_{\outelement_j})_{j \in J}$, with $J = \{j \in \{1,\dots,\ell_i\} \mid \outelement \notR_i \outelement_j\}$. From the state $(v,\overline{q_0}^i)$, we ignore the input word $\pi$ and use the alternation in the same way as in~\cite[Lemma 6]{BruyereFiliotGrandmontRaskin26-arxiv} to generate the tree of all plays $\rho \in \plays(v)$ consistent with a built strategy $\sigma_{-i}$: From any state $(u,\bar{q}^i)$, if $u \in V_i$ (resp.\ $u \not\in V_i$), the state has universal (resp.\ existential) transitions to every state $(u',\overline{q'}^i)$, where $(u,u') \in E$ and $\overline{q'}^i = \bar\delta_i(\bar q^i,u)$. The accepting condition is
        \[
        \bigvee_{j \in J} \Rabin{\aut{A}^i_{\outelement_j}},
        \]
        where $\Rabin{\aut{A}^i_{\outelement_j}}$ is the Rabin translation of the parity objective of $\aut{A}^i_{\outelement_j}$.
        \end{itemize}

    The automaton $\aut{B}_i$ accepts $val_{-i}(v)$. Indeed, a word $\pi$ is accepted by $\aut{B}_i$ if and only if $(i)$ from the first universal branching: there is a payoff $\outelement \in \outset_i$ such that $\pay_i(\pi) = \outelement$ and $\pi \in \plays(v)$, and $(ii)$ from the second universal branching: there exists a strategy $\sigma_{-i}$ (by the existential branching) such that all plays $\rho$ consistent with $\sigma_{-i}$ (by the universal branching), satisfy $\outelement \notR_i \pay_i(\rho)$ (by the Rabin condition).

    The \ARW{} $\aut{B}_i$ has polynomial size: let us define $Q^i$ as $\max \{|Q^i_\outelement| \mid \outelement\in \outset_i\}$ where $Q^i_\outelement$ is the set of states of $\aut{A}^i_\outelement$, and $d^i$ as $\max \{d^i_\outelement \mid \outelement\in \outset_i\}$ where $\{0, \dots, d^i_\outelement\}$ is the set of priorities of $\aut{A}^i_\outelement$. Then, as $|J|$ is bounded by $\ell_i$, $\aut{B}_i$ has a size of $\bigO(|V| \times |Q^i|^{\ell_i})$, with $\bigO(\ell_i\cdot d_i)$ Rabin pairs. As $\ell_i$ is constant, this is polynomial. Consequently, the \ARW{} $\aut{B}$  accepting all NE outcomes is itself of polynomial size.

    To conclude the proof, it remains to show how $\aut{B}$ allows to solve the constrained NE existence problem. Given the constraints $d_i \in P_i$, $i \in \players$, and an initial vertex $v$,  we have to decide whether there exists an NE outcome $\pi$ from $v$ such that $d_i = \pay_i(\pi)$ or $d_i \R_i \pay_i(\pi)$, for all $i$. We proceed as follows. First, for each $i$, we construct an \ARW{} $\aut{C}_i$ of polynomial size that accepts the set $L_i = \{\rho \in V^\omega \mid d_i = \pay_i(\rho) \text{ or } d_i \R_i \pay_i(\rho) \}$. This automaton guesses a payoff $\outelement$ such that $d_i = \outelement$ or $d_i \R_i \outelement$ and then simulates the \DPW{} $\aut{A}^i_{\outelement}$ for the guessed $\outelement$. Second, we construct the \ARW{} $\aut{C}$ equal to the intersection of $\aut{B}$ and each $\aut{C}_i$, that has polynomial size. Finally, to decide whether there exists an NE outcome satisfying each constraint $d_i$, we check whether the language accepted by $\aut{C}$ is nonempty. The nonemptiness problem for \ARWs{} is in \pspace{}: we build a nondeterministic parity automaton equivalent to $\aut{C}$, of single-exponential size~\cite{DBLP:journals/corr/abs-2002-07278}, and then perform the nonemptiness check on-the-fly. Checking nonemptiness of nondeterministic parity automata is in \nl{}.
\end{proof}

The \pspaceHard{}ness result of \cref{theorem:ne-pspace-complete} is inspired by a proof in Section 5.7.4 of~\cite{BouyerBMU15}. The hardness holds when each player has at most two payoffs.

\begin{figure}
    \centering
    \begin{tikzpicture}[x=0.75pt,y=0.75pt,yscale=-1]
        \draw (122.29,127.11) -- (155.84,127.11);
        \draw [shift={(158.84,127.11)},rotate=180,fill={rgb,255:red,0;green,0;blue,0},line width=0.08, draw opacity=0] (5.36,-2.57) -- (0,0) -- (5.36,2.57) -- (3.56,0) -- cycle;
        \draw (106.41,127.11) .. controls (106.41,122.48) and (109.96,118.73) .. (114.35,118.73) .. controls (118.74,118.73) and (122.29,122.48) .. (122.29,127.11) .. controls (122.29,131.73) and (118.74,135.48) .. (114.35,135.48) .. controls (109.96,135.48) and (106.41,131.73) .. (106.41,127.11) -- cycle;
        \draw (158.84,127.11) .. controls (158.84,122.48) and (162.4,118.73) .. (166.79,118.73) .. controls (171.17,118.73) and (174.73,122.48) .. (174.73,127.11) .. controls (174.73,131.73) and (171.17,135.48) .. (166.79,135.48) .. controls (162.4,135.48) and (158.84,131.73) .. (158.84,127.11) -- cycle;
        \draw (294.18,127.52) .. controls (294.18,122.9) and (297.74,119.15) .. (302.13,119.15) .. controls (306.51,119.15) and (310.07,122.9) .. (310.07,127.52) .. controls (310.07,132.15) and (306.51,135.9) .. (302.13,135.9) .. controls (297.74,135.9) and (294.18,132.15) .. (294.18,127.52) -- cycle;
        \draw (296.39,127.52) .. controls (296.39,124.19) and (298.96,121.48) .. (302.13,121.48) .. controls (305.29,121.48) and (307.86,124.19) .. (307.86,127.52) .. controls (307.86,130.86) and (305.29,133.56) .. (302.13,133.56) .. controls (298.96,133.56) and (296.39,130.86) .. (296.39,127.52) -- cycle;
        \draw (239.07,116.02) -- (291.78,123.34);
        \draw [shift={(294.75,123.75)},rotate=187.9,fill={rgb,255:red,0;green,0;blue,0},line width=0.08, draw opacity=0] (5.36,-2.57) -- (0,0) -- (5.36,2.57) -- (3.56,0) -- cycle;
        \draw (299.12,119.19) .. controls (296.19,97.89) and (310.48,97.7) .. (307.29,117.32);
        \draw [shift={(306.72,120.23)},rotate=282.86,fill={rgb,255:red,0;green,0;blue,0},line width=0.08, draw opacity=0] (5.36,-2.57) -- (0,0) -- (5.36,2.57) -- (3.56,0) -- cycle;
        \draw (93.26,127.48) -- (103.41,127.19);
        \draw [shift={(106.41,127.11)},rotate=178.35,fill={rgb,255:red,0;green,0;blue,0},line width=0.08, draw opacity=0] (5.36,-2.57) -- (0,0) -- (5.36,2.57) -- (3.56,0) -- cycle;
        \draw [fill={rgb,255:red,0;green,0;blue,0},fill opacity=0.03,dash pattern={on 4.5pt off 4.5pt}] (151.07,103.3) .. controls (151.07,96.09) and (156.91,90.25) .. (164.12,90.25) -- (234.2,90.25) .. controls (241.41,90.25) and (247.25,96.09) .. (247.25,103.3) -- (247.25,148.7) .. controls (247.25,155.91) and (241.41,161.75) .. (234.2,161.75) -- (164.12,161.75) .. controls (156.91,161.75) and (151.07,155.91) .. (151.07,148.7) -- cycle;
        \draw (187.84,108.61) .. controls (187.84,103.98) and (191.4,100.23) .. (195.79,100.23) .. controls (200.17,100.23) and (203.73,103.98) .. (203.73,108.61) .. controls (203.73,113.23) and (200.17,116.98) .. (195.79,116.98) .. controls (191.4,116.98) and (187.84,113.23) .. (187.84,108.61) -- cycle;
        \draw (188.84,143.61) .. controls (188.84,138.98) and (192.4,135.23) .. (196.79,135.23) .. controls (201.17,135.23) and (204.73,138.98) .. (204.73,143.61) .. controls (204.73,148.23) and (201.17,151.98) .. (196.79,151.98) .. controls (192.4,151.98) and (188.84,148.23) .. (188.84,143.61) -- cycle;
        \draw [fill={rgb, 255:red, 194; green, 194; blue, 194},fill opacity=1 ] (224.18,111.52) .. controls (224.18,106.9) and (227.74,103.15) .. (232.13,103.15) .. controls (236.51,103.15) and (240.07,106.9) .. (240.07,111.52) .. controls (240.07,116.15) and (236.51,119.9) .. (232.13,119.9) .. controls (227.74,119.9) and (224.18,116.15) .. (224.18,111.52) -- cycle;
        \draw [fill={rgb, 255:red, 194; green, 194; blue, 194},fill opacity=1 ] (217.68,145.52) .. controls (217.68,140.9) and (221.24,137.15) .. (225.63,137.15) .. controls (230.01,137.15) and (233.57,140.9) .. (233.57,145.52) .. controls (233.57,150.15) and (230.01,153.9) .. (225.63,153.9) .. controls (221.24,153.9) and (217.68,150.15) .. (217.68,145.52) -- cycle;
        \draw (233.75,142.75) -- (291.31,131.33);
        \draw [shift={(294.25,130.75)},rotate=168.78,fill={rgb,255:red,0;green,0;blue,0},line width=0.08, draw opacity=0] (5.36,-2.57) -- (0,0) -- (5.36,2.57) -- (3.56,0) -- cycle;
        \draw (374.29,127.11) -- (407.84,127.11);
        \draw [shift={(410.84,127.11)},rotate=180,fill={rgb,255:red,0;green,0;blue,0},line width=0.08, draw opacity=0] (5.36,-2.57) -- (0,0) -- (5.36,2.57) -- (3.56,0) -- cycle;
        \draw (358.41,127.11) .. controls (358.41,122.48) and (361.96,118.73) .. (366.35,118.73) .. controls (370.74,118.73) and (374.29,122.48) .. (374.29,127.11) .. controls (374.29,131.73) and (370.74,135.48) .. (366.35,135.48) .. controls (361.96,135.48) and (358.41,131.73) .. (358.41,127.11) -- cycle;
        \draw (410.84,127.11) .. controls (410.84,122.48) and (414.4,118.73) .. (418.79,118.73) .. controls (423.17,118.73) and (426.73,122.48) .. (426.73,127.11) .. controls (426.73,131.73) and (423.17,135.48) .. (418.79,135.48) .. controls (414.4,135.48) and (410.84,131.73) .. (410.84,127.11) -- cycle;
        \draw (546.18,127.52) .. controls (546.18,122.9) and (549.74,119.15) .. (554.13,119.15) .. controls (558.51,119.15) and (562.07,122.9) .. (562.07,127.52) .. controls (562.07,132.15) and (558.51,135.9) .. (554.13,135.9) .. controls (549.74,135.9) and (546.18,132.15) .. (546.18,127.52) -- cycle;
        \draw (491.07,116.02) -- (543.78,123.34);
        \draw [shift={(546.75,123.75)},rotate=187.9,fill={rgb,255:red,0;green,0;blue,0},line width=0.08, draw opacity=0] (5.36,-2.57) -- (0,0) -- (5.36,2.57) -- (3.56,0) -- cycle;
        \draw (551.12,119.19) .. controls (548.19,97.89) and (562.48,97.7) .. (559.29,117.32);
        \draw [shift={(558.72,120.23)},rotate=282.86,fill={rgb,255:red,0;green,0;blue,0},line width=0.08, draw opacity=0] (5.36,-2.57) -- (0,0) -- (5.36,2.57) -- (3.56,0) -- cycle;
        \draw (345.26,127.48) -- (355.41,127.19);
        \draw [shift={(358.41,127.11)},rotate=178.35,fill={rgb,255:red,0;green,0;blue,0},line width=0.08, draw opacity=0] (5.36,-2.57) -- (0,0) -- (5.36,2.57) -- (3.56,0) -- cycle;
        \draw [fill={rgb,255:red,0;green,0;blue,0},fill opacity=0.03,dash pattern={on 4.5pt off 4.5pt}] (403.07,103.3) .. controls (403.07,96.09) and (408.91,90.25) .. (416.12,90.25) -- (486.2,90.25) .. controls (493.41,90.25) and (499.25,96.09) .. (499.25,103.3) -- (499.25,148.7) .. controls (499.25,155.91) and (493.41,161.75) .. (486.2,161.75) -- (416.12,161.75) .. controls (408.91,161.75) and (403.07,155.91) .. (403.07,148.7) -- cycle;
        \draw (439.84,108.61) .. controls (439.84,103.98) and (443.4,100.23) .. (447.79,100.23) .. controls (452.17,100.23) and (455.73,103.98) .. (455.73,108.61) .. controls (455.73,113.23) and (452.17,116.98) .. (447.79,116.98) .. controls (443.4,116.98) and (439.84,113.23) .. (439.84,108.61) -- cycle;
        \draw (440.84,143.61) .. controls (440.84,138.98) and (444.4,135.23) .. (448.79,135.23) .. controls (453.17,135.23) and (456.73,138.98) .. (456.73,143.61) .. controls (456.73,148.23) and (453.17,151.98) .. (448.79,151.98) .. controls (444.4,151.98) and (440.84,148.23) .. (440.84,143.61) -- cycle;
        \draw (485.75,142.75) -- (543.31,131.33);
        \draw [shift={(546.25,130.75)},rotate=168.78,fill={rgb,255:red,0;green,0;blue,0},line width=0.08, draw opacity=0] (5.36,-2.57) -- (0,0) -- (5.36,2.57) -- (3.56,0) -- cycle;
        \draw (366.35,135.48) -- (366.26,157.25);
        \draw [shift={(366.25,160.25)},rotate=270.23,fill={rgb,255:red,0;green,0;blue,0},line width=0.08, draw opacity=0] (5.36,-2.57) -- (0,0) -- (5.36,2.57) -- (3.56,0) -- cycle;
        \draw (357.41,169.11) .. controls (357.41,164.48) and (360.96,160.73) .. (365.35,160.73) .. controls (369.74,160.73) and (373.29,164.48) .. (373.29,169.11) .. controls (373.29,173.73) and (369.74,177.48) .. (365.35,177.48) .. controls (360.96,177.48) and (357.41,173.73) .. (357.41,169.11) -- cycle;
        \draw (371.75,173.75) .. controls (391.81,178.05) and (393.62,165.47) .. (375.02,165.2);
        \draw [shift={(372.25,165.25)},rotate=357.34,fill={rgb,255:red,0;green,0;blue,0},line width=0.08, draw opacity=0] (5.36,-2.57) -- (0,0) -- (5.36,2.57) -- (3.56,0) -- cycle;
        \draw [fill={rgb, 255:red, 194; green, 194; blue, 194},fill opacity=1 ] (476.68,112.02) .. controls (476.68,107.4) and (480.24,103.65) .. (484.63,103.65) .. controls (489.01,103.65) and (492.57,107.4) .. (492.57,112.02) .. controls (492.57,116.65) and (489.01,120.4) .. (484.63,120.4) .. controls (480.24,120.4) and (476.68,116.65) .. (476.68,112.02) -- cycle;
        \draw (478.89,112.02) .. controls (478.89,108.69) and (481.46,105.98) .. (484.63,105.98) .. controls (487.79,105.98) and (490.36,108.69) .. (490.36,112.02) .. controls (490.36,115.36) and (487.79,118.06) .. (484.63,118.06) .. controls (481.46,118.06) and (478.89,115.36) .. (478.89,112.02) -- cycle;
        \draw [fill={rgb, 255:red, 194; green, 194; blue, 194},fill opacity=1 ] (470.68,145.52) .. controls (470.68,140.9) and (474.24,137.15) .. (478.63,137.15) .. controls (483.01,137.15) and (486.57,140.9) .. (486.57,145.52) .. controls (486.57,150.15) and (483.01,153.9) .. (478.63,153.9) .. controls (474.24,153.9) and (470.68,150.15) .. (470.68,145.52) -- cycle;
        \draw (472.89,145.52) .. controls (472.89,142.19) and (475.46,139.48) .. (478.63,139.48) .. controls (481.79,139.48) and (484.36,142.19) .. (484.36,145.52) .. controls (484.36,148.86) and (481.79,151.56) .. (478.63,151.56) .. controls (475.46,151.56) and (472.89,148.86) .. (472.89,145.52) -- cycle;
        \draw (606.18,127.52) .. controls (606.18,122.9) and (609.74,119.15) .. (614.13,119.15) .. controls (618.51,119.15) and (622.07,122.9) .. (622.07,127.52) .. controls (622.07,132.15) and (618.51,135.9) .. (614.13,135.9) .. controls (609.74,135.9) and (606.18,132.15) .. (606.18,127.52) -- cycle;
        \draw (608.39,127.52) .. controls (608.39,124.19) and (610.96,121.48) .. (614.13,121.48) .. controls (617.29,121.48) and (619.86,124.19) .. (619.86,127.52) .. controls (619.86,130.86) and (617.29,133.56) .. (614.13,133.56) .. controls (610.96,133.56) and (608.39,130.86) .. (608.39,127.52) -- cycle;
        \draw (562.07,127.52) -- (603.18,127.52);
        \draw [shift={(606.18,127.52)},rotate=180,fill={rgb,255:red,0;green,0;blue,0},line width=0.08, draw opacity=0] (5.36,-2.57) -- (0,0) -- (5.36,2.57) -- (3.56,0) -- cycle;
        \draw (452.85,136.83) .. controls (465.82,122.34) and (473.27,105.91) .. (476.9,75.74);
        \draw [shift={(477.22,72.9)},rotate=96.24,fill={rgb,255:red,0;green,0;blue,0},line width=0.08, draw opacity=0] (5.36,-2.57) -- (0,0) -- (5.36,2.57) -- (3.56,0) -- cycle;
        \draw (418.79,118.73) .. controls (418.67,95.63) and (440.4,69.64) .. (466.43,64.95);
        \draw [shift={(469.28,64.52)},rotate=173.25,fill={rgb,255:red,0;green,0;blue,0},line width=0.08, draw opacity=0] (5.36,-2.57) -- (0,0) -- (5.36,2.57) -- (3.56,0) -- cycle;
        \draw (447.79,100.23) .. controls (451.2,85.99) and (459.52,78.92) .. (467.83,71.66);
        \draw [shift={(470,69.75)},rotate=138.37,fill={rgb,255:red,0;green,0;blue,0},line width=0.08, draw opacity=0] (5.36,-2.57) -- (0,0) -- (5.36,2.57) -- (3.56,0) -- cycle;
        \draw (469.28,64.52) .. controls (469.28,59.9) and (472.84,56.15) .. (477.22,56.15) .. controls (481.61,56.15) and (485.17,59.9) .. (485.17,64.52) .. controls (485.17,69.15) and (481.61,72.9) .. (477.22,72.9) .. controls (472.84,72.9) and (469.28,69.15) .. (469.28,64.52) -- cycle;
        \draw (471.49,64.52) .. controls (471.49,61.19) and (474.06,58.48) .. (477.22,58.48) .. controls (480.39,58.48) and (482.95,61.19) .. (482.95,64.52) .. controls (482.95,67.86) and (480.39,70.56) .. (477.22,70.56) .. controls (474.06,70.56) and (471.49,67.86) .. (471.49,64.52) -- cycle;
        \draw (442.06,108.61) .. controls (442.06,105.27) and (444.62,102.56) .. (447.79,102.56) .. controls (450.95,102.56) and (453.52,105.27) .. (453.52,108.61) .. controls (453.52,111.94) and (450.95,114.65) .. (447.79,114.65) .. controls (444.62,114.65) and (442.06,111.94) .. (442.06,108.61) -- cycle;
        \draw (443.06,143.61) .. controls (443.06,140.27) and (445.62,137.56) .. (448.79,137.56) .. controls (451.95,137.56) and (454.52,140.27) .. (454.52,143.61) .. controls (454.52,146.94) and (451.95,149.65) .. (448.79,149.65) .. controls (445.62,149.65) and (443.06,146.94) .. (443.06,143.61) -- cycle;
        \draw (413.06,127.11) .. controls (413.06,123.77) and (415.62,121.06) .. (418.79,121.06) .. controls (421.95,121.06) and (424.52,123.77) .. (424.52,127.11) .. controls (424.52,130.44) and (421.95,133.15) .. (418.79,133.15) .. controls (415.62,133.15) and (413.06,130.44) .. (413.06,127.11) -- cycle;
        \draw (484.75,68.75) .. controls (504.81,73.05) and (506.62,60.47) .. (488.02,60.2);
        \draw [shift={(485.25,60.25)},rotate=357.34,fill={rgb,255:red,0;green,0;blue,0},line width=0.08, draw opacity=0] (5.36,-2.57) -- (0,0) -- (5.36,2.57) -- (3.56,0) -- cycle;
        \draw (423,134.75) .. controls (426.78,151.76) and (440.83,171.9) .. (454.6,180.81);
        \draw [shift={(457,182.25)},rotate=208.89,fill={rgb,255:red,0;green,0;blue,0},line width=0.08, draw opacity=0] (5.36,-2.57) -- (0,0) -- (5.36,2.57) -- (3.56,0) -- cycle;
        \draw (441.5,113.75) .. controls (430.35,132.67) and (432.83,159.11) .. (456.72,175.74);
        \draw [shift={(459,177.25)},rotate=212.4,fill={rgb,255:red,0;green,0;blue,0},line width=0.08, draw opacity=0] (5.36,-2.57) -- (0,0) -- (5.36,2.57) -- (3.56,0) -- cycle;
        \draw (448.79,151.98) .. controls (451.26,161.78) and (453.36,165.23) .. (459.54,172.48);
        \draw [shift={(461.5,174.75)},rotate=229.04,fill={rgb,255:red,0;green,0;blue,0},line width=0.08, draw opacity=0] (5.36,-2.57) -- (0,0) -- (5.36,2.57) -- (3.56,0) -- cycle;
        \draw (457.28,182.52) .. controls (457.28,177.9) and (460.84,174.15) .. (465.22,174.15) .. controls (469.61,174.15) and (473.17,177.9) .. (473.17,182.52) .. controls (473.17,187.15) and (469.61,190.9) .. (465.22,190.9) .. controls (460.84,190.9) and (457.28,187.15) .. (457.28,182.52) -- cycle;
        \draw (459.49,182.52) .. controls (459.49,179.19) and (462.06,176.48) .. (465.22,176.48) .. controls (468.39,176.48) and (470.95,179.19) .. (470.95,182.52) .. controls (470.95,185.86) and (468.39,188.56) .. (465.22,188.56) .. controls (462.06,188.56) and (459.49,185.86) .. (459.49,182.52) -- cycle;
        \draw (471.75,186.25) .. controls (491.81,190.55) and (493.62,177.97) .. (475.02,177.7);
        \draw [shift={(472.25,177.75)},rotate=357.34,fill={rgb,255:red,0;green,0;blue,0},line width=0.08, draw opacity=0] (5.36,-2.57) -- (0,0) -- (5.36,2.57) -- (3.56,0) -- cycle;
        \draw (610.62,120.69) .. controls (607.69,99.39) and (621.98,99.2) .. (618.79,118.82);
        \draw [shift={(618.22,121.73)},rotate=282.86,fill={rgb,255:red,0;green,0;blue,0},line width=0.08, draw opacity=0] (5.36,-2.57) -- (0,0) -- (5.36,2.57) -- (3.56,0) -- cycle;
        \draw (359.62,169.11) .. controls (359.62,165.77) and (362.18,163.06) .. (365.35,163.06) .. controls (368.51,163.06) and (371.08,165.77) .. (371.08,169.11) .. controls (371.08,172.44) and (368.51,175.15) .. (365.35,175.15) .. controls (362.18,175.15) and (359.62,172.44) .. (359.62,169.11) -- cycle;
        \draw (478,116.75) .. controls (464.15,132.03) and (459.43,152.33) .. (464.45,171.45);
        \draw [shift={(465.22,174.15)},rotate=252.63,fill={rgb,255:red,0;green,0;blue,0},line width=0.08, draw opacity=0] (5.36,-2.57) -- (0,0) -- (5.36,2.57) -- (3.56,0) -- cycle;
        \draw (478.63,153.9) .. controls (477.18,162.22) and (475.44,167.1) .. (470.84,173);
        \draw [shift={(469,175.25)},rotate=310.6,fill={rgb,255:red,0;green,0;blue,0},line width=0.08, draw opacity=0] (5.36,-2.57) -- (0,0) -- (5.36,2.57) -- (3.56,0) -- cycle;
        \draw (125.7,113.23) node [anchor=north west,inner sep=0.75pt, font=\footnotesize,align=left] {$\mathsf{\mathsf{init}}$};
        \draw (289.46,87.35) node [anchor=north west,inner sep=0.75pt, font=\footnotesize,align=left] {$\mathsf{final}$};
        \draw (94.89,82.5) node [anchor=north west,inner sep=0.75pt,align=left] {$L_{i}$};
        \draw (108.18,122.4) node [anchor=north west,inner sep=0.75pt, font=\small,align=left] {$q_{i}$};
        \draw (157.89,142) node [anchor=north west,inner sep=0.75pt,align=left] {$\aut{A}_{i}$};
        \draw (254.67,103.58) node [anchor=north west,inner sep=0.75pt, font=\footnotesize,rotate=-9.27,align=left] {$\mathsf{final}$};
        \draw (252.47,139.87) node [anchor=north west,inner sep=0.75pt, font=\footnotesize,rotate=-349.11,align=left] {$\mathsf{final}$};
        \draw (377.7,113.23) node [anchor=north west,inner sep=0.75pt, font=\footnotesize,align=left] {$\mathsf{init}$};
        \draw (540.46,88.35) node [anchor=north west,inner sep=0.75pt, font=\footnotesize,align=left] {$\mathsf{final}$};
        \draw (348.39,77) node [anchor=north west,inner sep=0.75pt,align=left] {$L_{i}^{c}$};
        \draw (360.18,122.4) node [anchor=north west,inner sep=0.75pt, font=\small,align=left] {$q_{i}$};
        \draw (404.89,142.5) node [anchor=north west,inner sep=0.75pt,align=left] {$\aut{A}_{i}$};
        \draw (506.5,103.86) node [anchor=north west,inner sep=0.75pt, font=\footnotesize,rotate=-8.14,align=left] {$\mathsf{final}$};
        \draw (504.47,139.87) node [anchor=north west,inner sep=0.75pt, font=\footnotesize,rotate=-349.11,align=left] {$\mathsf{final}$};
        \draw (326.7,141.23) node [anchor=north west,inner sep=0.75pt, font=\footnotesize,align=left] {$\neq \mathsf{init}$};
        \draw (563.2,110.23) node [anchor=north west,inner sep=0.75pt, font=\footnotesize,align=left] {$\neq \mathsf{final}$};
        \draw (409.96,63.35) node [anchor=north west,inner sep=0.75pt, font=\footnotesize,align=left] {$\mathsf{final}$};
        \draw (502.96,61.35) node [anchor=north west,inner sep=0.75pt, font=\footnotesize,align=left] {$*$};
        \draw (389.96,167.35) node [anchor=north west,inner sep=0.75pt, font=\footnotesize,align=left] {$*$};
        \draw (418.2,168.73) node [anchor=north west,inner sep=0.75pt, font=\footnotesize,align=left] {$\mathsf{init}$};
        \draw (489.96,179.85) node [anchor=north west,inner sep=0.75pt, font=\footnotesize,align=left] {$*$};
        \draw (610.46,92.85) node [anchor=north west,inner sep=0.75pt, font=\footnotesize,align=left] {$*$};
    \end{tikzpicture}
    \caption{\DBWs{} for $L_i$ and $L_i^c$ used in the reduction for \cref{theorem:ne-pspace-complete}.}
    \label{fig:reduction-constrained-nash-existence-pspace}
\end{figure}
\begin{proof}[Hardness proof of \cref{theorem:ne-pspace-complete}]
    We use a reduction from the non-emptiness problem of the intersection of $n$ \DFAs{},\footnote{Deterministic finite automata, accepting a set of finite words.} which is \pspaceComplete{}~\cite{Kozen77}, to the constrained NE existence problem. Consider $n$ \DFAs{} $\aut{A}_i$ over the alphabet $\Sigma$ (that are complete). We can suppose that for each $i$, there exists some finite word $w_i \in \Sigma^*$ of polynomial length such that $w_i \in \lang(\aut{A}_i)$ as otherwise, the intersection of the $n$ \DFAs{} is empty. This can be checked in polynomial time.

    We are going to construct an \X{} game $\game = (\arena,(\pay_i)_{i \in \{1,\dots, n+1\}})$ with $n+1$ players, each using two payoffs except player $n+1$ that uses one payoff, in the following way. Let $\Sigma' = \Sigma \cup \{\mathsf{init},\mathsf{final}\}$. For each $i \in \{1,\ldots,n\} $, we define the language $L_i = \{\mathsf{init} \cdot w \cdot \mathsf{final}^\omega \mid w \in \lang(\aut{A}_i)\}$ and denote by $L_i^c$ its complement $(\Sigma')^\omega \ssetminus L_i$. Then, we define $\outset_i = \{\outelement_i,\outelement_i^c\}$,  $\mathord{\R_i} = \{(\outelement_i^c,\outelement_i)\}$, and $\pay_i$ as $\pay_i(x) = \outelement_i$ if $x \in L_i$, and $\pay_i(x) = \outelement_i^c$ otherwise. Intuitively, player~$i$ prefers words of $L_i$ to those of $L_i^c$. For player~$n+1$, we define only one payoff $\outelement_{n+1}$ such that $\mathord{\R_{n+1}} = \varnothing$ and $\pay_{n+1}$ is constant. The set of vertices of the arena $\arena$ is equal to $\Sigma'$ and all vertices belong to player~$n+1$. Its set of edges is $\Sigma' \times \Sigma'$ and the initial vertex is $\mathsf{init}$. Hence, all plays from $\mathsf{init}$ are NE outcomes due to the preference relation $\R_{n+1}$ of player~$n+1$. We finally define  the constraint $d_i = \outelement_i$ for each $i \in \{1, \dots,n \}$, and the constraint $d_{n+1} = \outelement_{n+1}$.
    So, by the definition of $\R_i$, the only way to satisfy the threshold $d_i$ for each player~$i$ is to have a play $\rho$ such that $\pay_i(\rho) = \outelement_i$.

    Let us show that the intersection of the $n$ \DFAs{} is non-empty if and only if there exists an NE from $\mathsf{init}$ whose outcome $\rho$ is such that $\pay_i(\rho) = \outelement_i$ for all $i \in \{1,\dots, n+1\}$. Suppose that $w \in \Sigma^*$ belongs to the intersection $\bigcap_{i=1}^n \lang(\aut{A}_i)$. Then for all $i \in \{1,\dots, n\}$, $\rho = \mathsf{init} \cdot w \cdot \mathsf{final}^\omega$ belongs to $L_i$ and satisfies $\pay_i(\rho) = \outelement_i$. We also have that $\rho$ is an NE outcome. Conversely, suppose that there exists an NE outcome $\rho$ from $\mathsf{init}$ such that $\pay_i(\rho) = \outelement_i$ for each player~$i$. It follows that $\rho \in L_i$ for all $i \in \{1,\dots, n\}$, by definition of $\pay_i$ and $d_i$. That is, $\rho = \mathsf{init} \cdot w \cdot \mathsf{final}^\omega$ for some finite word $w$ belonging to each $\lang(\aut{A}_i)$, thus to the intersection $\bigcap_{i=1}^n \lang(\aut{A}_i)$.

    It remains to show that $L_i$ and $L_i^c$ are both accepted by a \DPW{} (it is trivial for $\pay_{n+1}^{-1}(\outelement_{n+1})$). A \DBW{} is given for $L_i$ (resp.\ $L_i^c$) on the left (resp.\ right) part of \cref{fig:reduction-constrained-nash-existence-pspace}. A gray state represents an accepting state of the \DFA{} $\aut{A}_i$ while double circled states are accepting states of the \DBW{}. Note that a word $x$ belongs to $L_i^c$ if, and only if,
    \begin{itemize}
        \item $x$ begins with a symbol $a$ different from $\mathsf{init}$,
        \item $x = \mathsf{init} \cdot x'$ with $x' \in \Sigma^\omega$ ($x$ does not leave $\aut{A}_i$),
        \item $x = \mathsf{init} \cdot w \cdot \mathsf{init} \cdot x'$ with $w \in \Sigma^*$ and $x' \in \Sigma'^\omega$ ($x$ leaves $\aut{A}_i$ with symbol $\mathsf{init}$),
        \item $x = \mathsf{init} \cdot w \cdot \mathsf{final} \cdot x'$ with $w \not\in \lang(\aut{A}_i)$ and $x' \in \Sigma'^\omega$ ($x$ leaves $\aut{A}_i$ from one of its non-final states with symbol $\mathsf{final}$),
        \item $x = \mathsf{init} \cdot w \cdot \mathsf{final}^+ \cdot a \cdot x'$ with $w \in \lang(\aut{A}_i)$, $a \ne \mathsf{final}$, and $x' \in \Sigma'^\omega$, ($x$ leaves $\aut{A}_i$ from one of its final states with symbol $\mathsf{final}$ and not with suffix $\mathsf{final}^\omega$).
    \end{itemize}
    This concludes the proof.
\end{proof}
\end{document}